\definecolor{darkgreen}{rgb}{0,0.5,0}
\newcommand{\alg}{\mathcal{A}}
\newcommand{\stage}{\mathcal{S}}
\newcommand{\reg}{\mathcal{R}}
\newcommand{\dereg}{\mathcal{D}}
\newcommand{\go}{\mathcal{G}}
\newcommand{\polylog}{\mathrm{polylog}}
\newtheorem{theorem}{Theorem}[section]
\newtheorem{lemma}[theorem]{Lemma}
\newtheorem{meta-theorem}[theorem]{Meta-Theorem}
\newtheorem{remark}[theorem]{Remark}
\newtheorem{corollary}[theorem]{Corollary}
\newtheorem{definition}[theorem]{Definition}
\Crefname{lemma}{Lemma}{Lemmas}
\Crefname{claim}{Claim}{Claims}
\Crefname{remark}{Remark}{Remarks}
\Crefname{observation}{Observation}{Observations}
\algnewcommand\algorithmicswitch{\textbf{switch}}
\algnewcommand\algorithmiccase{\textbf{case}}
\newcommand{\eps}{\varepsilon}
\newcommand{\timeOverhead}{\mathsf{time}\textit{-}\mathsf{overhead}\xspace} 
\newcommand{\msgOverhead}{\mathsf{message}\textit{-}\mathsf{overhead}\xspace} 
\newcommand{\congest}{$\mathsf{CONGEST}$\xspace}
\newcommand{\local}{$\mathsf{LOCAL}$\xspace}
\newcommand{\poly}{\operatorname{\text{{\rm poly}}}}
\renewcommand{\paragraph}[1]{\medskip\noindent {\bf #1}}
\let\oldtextbf=\textbf
\renewcommand\textbf[1]{{\boldmath\oldtextbf{#1}}}
\newcommand{\FullOrShort}{full}
  \newcommand{\fullOnly}[1]{#1}
  \newcommand{\shortOnly}[1]{}
    \newcommand{\shortOnly}[1]{#1}
    \newcommand{\fullOnly}[1]{}
\begin{document}
\date{}
\title{A Near-Optimal Deterministic Distributed Synchronizer}

\author{
  Mohsen Ghaffari\\
  \small MIT \\
  \small ghaffari@mit.edu
  \and 
  Anton Trygub\\
  \small MIT \\
  \small trygub@mit.edu
 }
\maketitle

\begin{abstract} 
We provide the first deterministic distributed synchronizer with near-optimal time complexity and message complexity overheads.
Concretely, given any distributed algorithm $\mathcal{A}$ that has time complexity $T$ and message complexity $M$ in the synchronous message-passing model (subject to some care in defining the model), the synchronizer provides a distributed algorithm $\mathcal{A}'$ that runs in the asynchronous message-passing model with time complexity $T \cdot \poly(\log n)$ and message complexity $(M+m)\cdot \poly(\log n)$. Here, $n$ and $m$ denote the number of nodes and edges in the network, respectively. The synchronizer is deterministic in the sense that if algorithm $\mathcal{A}$ is deterministic, then so is algorithm $\mathcal{A}'$. Previously, only a randomized synchronizer with near-optimal overheads was known by seminal results of Awerbuch, Patt-Shamir, Peleg, and Saks [STOC 1992] and Awerbuch and Peleg [FOCS 1990]. We also point out and fix some inaccuracies in these prior works.

As concrete applications of our synchronizer, we resolve some longstanding and fundamental open problems in distributed algorithms: we get the first asynchronous deterministic distributed algorithms with near-optimal time and message complexities for leader election, breadth-first search tree, and minimum spanning tree computations: these all have message complexity $\tilde{O}(m)$ message complexity. The former two have $\tilde{O}(D)$ time complexity, where $D$ denotes the network diameter, and the latter has $\tilde{O}(D+\sqrt{n})$ time complexity. All these bounds are optimal up to logarithmic factors. Previously all such near-optimal algorithms were either restricted to the synchronous setting or required randomization.
\end{abstract}

\newpage


\newpage{       
\hypersetup{linkcolor=blue}
\small
\tableofcontents
}
\newpage

\section{Introduction and Related Work}
Distributed graph algorithms have been studied extensively over the past five decades. The prevalent models for designing such algorithms are \textit{synchronous} message-passing models, which assume that the processors compute and communicate in lockstep \textit{rounds}. In particular, messages sent over different network links are assumed to traverse at the same speed: per round, all messages are sent at the beginning of the round, and they all arrive by the end of the round. Of course, the reality is far from such a nice and clean picture. In most practical settings, different messages will experience different delays, and even delays that are not predictable to the algorithm.\footnote{In many computer networks---e.g., the Internet---each distributed algorithm is run alongside a myriad of other protocols that are using the network, and the delay in each link is influenced by the amount of congestion on it.} Arguably, the synchronous abstraction is still extremely valuable as it provides a clean context for algorithm design, allowing us to focus on (other) core issues such as graph-theoretic challenges and communication limitations. It also vastly simplifies the analysis because one does not have to deal with all possibilities of the delays experienced by different messages. But can one take these ``lab-grown" algorithms, developed in the sterile synchronous setup, and use them in the asynchronous world?

\textit{Synchronizers} are the fundamental theoretical concept introduced to provide a principled and general answer to this question. Roughly speaking, a synchronizer is a mechanism that can be added to \textit{any} synchronous message-passing algorithm, so that the resulting combination would work correctly in the asynchronous setting. Furthermore, ideally, the synchronizer only incurs a slight loss in the algorithm's efficiency, most notably, its time and message complexities. In this paper, we present the first
deterministic synchronizer with near-optimal time and message complexity overheads. Informally, our synchronizer increases time and message complexities by factors only logarithmic in the network size. This leads to the first near-optimal deterministic asynchronous distributed algorithms for some of the most basic graph problems. 
\vspace{-0.2cm}

\subsection{Models and the synchronizer definition}
\label{subsec:model}
\noindent\textbf{The synchronous message-passing model.} The network is abstracted as an undirected graph $G=(V, E)$, and we use the notations $n:=|V|$ and $m:=|E|$. Each node represents one computer/processor, equipped with a unique identifier, typically assumed to have $b=O(\log n)$ bits. Computation and communications occur in synchronous rounds $1, 2, 3, \dots$. Per round, each computer/node performs some computations on it data. Then, it can send one $B$-bit message to each of its neighbors. Typically, we assume $B=O(\log n)$, and this model variant is usually called \congest~\cite{Peleg2000DistributedApproach}. The variant which allows unbounded message sizes, i.e., $B=\infty$, is often called \local~\cite{Linial1992LocalityAlgorithms, Peleg2000DistributedApproach}. All the messages sent in a round are assumed to arrive by the end of the round. Then, the algorithm proceeds to the next round. When talking about randomized algorithms, we assume that each source has a source of random bits, and these are independent across different nodes.  

When discussing algorithms for graph problems, the standard assumption is that at the start of the algorithm, nodes do not know the global network topology $G$. They only know their own identifier, and perhaps some estimates on basic global parameters, e.g., a constant-factor upper bound on $O(\log n)$. At the end of the algorithm, each node should know its own output. For instance, if we are computing a breadth first search (BFS) tree from a given source node $s$, each node should know its own distance to $s$ and perhaps also its parent in the BFS tree. 

The two main measures of interest for an algorithm are its time complexity and message complexity. For an algorithm $A$ in the synchronous setting, its \textit{time complexity} $T(A)$ is defined as the number of rounds until all nodes generate their output. See the time complexity paragraph of \Cref{subsec:modelSubtleties} for more on this. Its \textit{message complexity} $M(A)$ is defined as the total number of messages sent in the entire network during the algorithm. We comment that sometimes the phrase \textit{communication complexity} is used instead of message complexity.

\paragraph{The asynchronous message-passing model.} In the asynchronous model, there is no notion of rounds (known to the algorithms). Any node can perform some computation on the data it holds and then send messages to its neighbors. These messages are guaranteed to arrive eventually, but there is no time bound known to the algorithm. As such, algorithms are generally described in an \textit{event-driven} language, e.g., ``\textit{upon receiving message ..., do this: ...}.'' 

Analyzing algorithms in the asynchronous model is non-trivial because the model involves non-determinism in the timing of the arrival of different messages. A standard approach~\cite{Peleg2000DistributedApproach}, which makes algorithmic results stronger, is to take a worst-case view: we allow the message delays to be determined by an adversary, who knows the algorithm. To provide a performance bound, we assume an upper bound $\tau$ on the delay of each message (the time from the send event to the corresponding arrive event). The time complexity $T(A)$ of the algorithm $A$ is defined as follows: Let $T$ be the longest possible execution time until all nodes generate their output. This longest possible time is with regard to the worst-case execution of the algorithm, where the adversary controls the delays of different messages, subject to this bound $\tau.$ Then, we define the time complexity $T(A)=T/\tau$, i.e., we normalize the time bound by considering $\tau$ as one time unit. We emphasize that, crucially, the value of $\tau$ is not known to the algorithm. The message complexity $M(A)$ of the algorithm $A$ is defined as the maximum number of messages the algorithm sends in the network, again in the worst-case possible execution where the adversary controls the message delays.

\paragraph{Synchronizers.} Defined initially in the pioneering work of Awerbuch~\cite{Awerbuch1985ComplexitySynchronization}, a synchronizer $\mathcal{S}$ is an algorithmic module that can be added to \textit{any} synchronous distributed algorithm $\mathcal{A}$ for any problem $\mathcal{P}$ so that the combined algorithm $\mathcal{A}'$ would be guaranteed to run correctly in the asynchronous environment and solve $\mathcal{P}$. The key measures of interest for a synchronizer are its \textit{time complexity overhead} and \textit{message complexity overhead}, as well as its \textit{initialization time complexity} and \textit{initialization message complexity}. We define these next. Let us first ignore the initialization part and assume that the synchronizer initialization has already been performed. Then, the \textit{time complexity overhead} of $\mathcal{S}$ is defined as $\timeOverhead(\mathcal{S})={T(\mathcal{A}')}/{T(\mathcal{A})}$. Similarly, the \textit{message complexity overhead} of $\mathcal{S}$ is defined as $\msgOverhead(\mathcal{S})={M(\mathcal{A}')}/{M(\mathcal{A})}$. Besides these, the synchronizer may have some initialization phase, which is an algorithm that should run in the asynchronous setting and set up some structures before the start of the synchronizer-augmented variant of $\mathcal{A}$. This initialization itself has time complexity $T^{init}(\mathcal{S})$ and message complexity $M^{init}(\mathcal{S})$. Then, the synchronizer should satisfy the following for every synchronous algorithm $\mathcal{A}$: $$T(\mathcal{A}') \leq T^{init}(\mathcal{S}) + \timeOverhead(\mathcal{S}) \cdot T(\mathcal{A})\textit{, and}$$
$$M(\mathcal{A}') \leq M^{init}(\mathcal{S}) + \msgOverhead(\mathcal{S}) \cdot M(\mathcal{A}).$$

\subsection{State of the art}
\noindent\textbf{Synchronizers with global pulse generation per round.} A natural approach to building synchronizers, set forth by Awerbuch~\cite{Awerbuch1985ComplexitySynchronization}, is to have each node $v$ \textit{generate a pulse} for each round. Concretely, node $v$ will generate pulses $1$, $2$, $3$, \dots, and these pulses demarcate for node $v$ the transition between consecutive rounds. These pulses are known only to $v$. A message is called a pulse $p$ message if it is sent by node $v$ between its pulses $p$ and $p+1$. To simulate the synchronous algorithm, each node $v$ should generate its pulse $p+1$ only after its pulse $p$ and, more crucially, after $v$ has received every message of pulse $p$ sent to it by each neighbor $u$. Of course, the challenge is that node $v$ does not know which neighbors send messages to it, and simply waiting for a predetermined amount of time cannot help; some messages might take unpredictably long. 

Awerbuch introduced three synchronizes, known as $\alpha$, $\beta$, and $\gamma$, which follow this outline of each node generating all the pulses $1$, $2$, $3$, \dots. We review these in \Cref{app:alpha-beta-gamma}. Here, we only note that even ignoring their initialization costs, these synchronizers do not achieve a $\poly(\log n)$ time and message complexity overhead. Indeed, in each case, at least one of these two overheads is $\Omega(n)$. Moreover, this problem is inherent to the approach that tries to generate all pulses at all nodes. This overly strong requirement necessitates some communication for each round/pulse for each node, while the synchronous algorithm might have each node send only in very few rounds.

\paragraph{Synchronizers with succinct pulse generation per round.}
Intuitively, to have small complexity overheads, each node should demarcate only rounds in which the node has some message to send/receive in the synchronous algorithm. In a pair of marvelous papers~\cite{Awerbuch1990SparsePartitions, Awerbuch1990NetworkOverhead}, in 1990, Awerbuch and Peleg stated the existence of synchronizers with only $\poly(\log n)$ time and message complexity overheads. The first paper~\cite{Awerbuch1990SparsePartitions} presented the graph-theoretic notions of sparse partitions/covers and gives sequential algorithms for their construction. The other~\cite{Awerbuch1990NetworkOverhead}, published in the same venue, sketched that once such sparse covers are constructed, one can transform any synchronous algorithm into an asynchronous algorithm at the cost of only $\poly(\log n)$ time and message complexity overheads. However, there are two drawbacks. 

The first drawback is that the sparse cover construction provided in~\cite{Awerbuch1990SparsePartitions} was sequential and would require a high time and message complexity. A follow-up work of Awerbuch, Patt-Shamir, Peleg, and Saks~\cite{Awerbuch1992AdaptingNetworks} provided an efficient \textit{randomized} asynchronous distributed construction, based on the synchronous low-diameter decomposition of Linial and Saks~\cite{Linial1993LowDecompositions}. This led to the first randomized distributed synchronizer with only $\poly(\log n)$ time and message complexity overheads. However, it remained open whether there is an efficient deterministic synchronizer. 

The second drawback is, unfortunately, inaccuracy or incorrectness issues. The papers ~\cite{Awerbuch1990NetworkOverhead, Awerbuch1992AdaptingNetworks} involve several ingenious ideas and a generally viable approach to building synchronizers with small overheads. Howoever, regrettably, the only available versions of ~\cite{Awerbuch1990NetworkOverhead, Awerbuch1992AdaptingNetworks} are their conference versions.\footnote{We have had personal communication with David Peleg, who confirmed that no full version of these papers exists.} In many places, these versions do not provide proofs, or even concrete lemma statements stating the desired and guaranteed behavior of parts of the algorithm. This leaves much to be filled out by the reader, and to the best of our understanding, there are at least two critical inaccuracies:

(I) The synchronizer scheme presented in \cite{Awerbuch1990NetworkOverhead}, as described, is technically incorrect. It overlooks a small but critical issue of congestion, which, once corrected, increases the time complexity overhead to $\Omega(n)$. The follow-up work of Awerbuch et al.~\cite{Awerbuch1992AdaptingNetworks}, mentioned above, described the synchronization approach of~\cite{Awerbuch1990NetworkOverhead} in a different way, which incidentally goes around the congestion issue. This was without explicitly pointing out any incorrectness issues in the latter. But this adaptation opens up a subtle correctness problem. One needs a highly non-trivial statement and analysis to show that the algorithm still operates as desired despite the asynchrony and all possibilities of message delays. In fact, we do not see how to fix the algorithm in~\cite{Awerbuch1992AdaptingNetworks} in the exact way that it is written.

(II) The synchronizers in \cite{Awerbuch1990NetworkOverhead, Awerbuch1992AdaptingNetworks} are described primarily for synchronizing a BFS algorithm from a given source. The above incorrectness/incompleteness issues apply even to this restricted case. But the work of \cite{Awerbuch1990NetworkOverhead, Awerbuch1992AdaptingNetworks} claims synchronization for an arbitrary synchronized algorithm. Both papers contain a brief follow-up section, after discussing BFS, that claims how to generalize the BFS approach (without much proof). We argue that this extension's correctness depends on the interpretation of algorithms in the synchronous model. If in the synchronous algorithm, a node is allowed to do something like ``\textit{wait for $r$ many rounds, or wait for round number $r$, and then send  message $m$}'', we argue that the message complexity overhead of the synchronizers of \cite{Awerbuch1990NetworkOverhead, Awerbuch1992AdaptingNetworks} can be $\Omega(n)$. See \Cref{subsec:modelSubtleties} for more on this.

\subsection{Our contribution}
To summarize, there are two drawbacks to the state-of-the-art: (a) the synchronizers of ~\cite{Awerbuch1992AdaptingNetworks, Awerbuch1990NetworkOverhead} are randomized, (b) they contain incompleteness/incorrectness issues. We address both.

\subsubsection{A near-optimal deterministic distributed synchronizer.} Developing \textit{deterministic} distributed algorithms for graph problems has been one of the leading research themes over the past four decades, with significant recent breakthroughs; see, e.g.,~\cite{Rozhon2020Polylogarithmic-timeDerandomization, Ghaffari2018OnAlgorithms, Ghaffari2017DistributedTime}. However, without a deterministic synchronizer, these deterministic synchronous algorithms cannot be transported to the more realistic asynchronous world (without significant overheads). Obtaining a deterministic synchronizer with $\poly(\log n)$ time and message complexities has been a long-standing open problem since~\cite{Awerbuch1992AdaptingNetworks}. See e.g.~\cite{ElkinSynchronizers}. 
Our main contribution is to remedy this. We provide the first deterministic distributed synchronizer with $\poly(\log n)$ time and message complexities:

\begin{theorem} \label{thm:mainInformal}(\textbf{Informal}) There is a deterministic synchronizer, with no initialization, with time and message complexity overheads $\poly(\log n)$. Concretely, any synchronous algorithm can be transformed into an asynchronous algorithm with a $\poly(\log n)$ loss in time and message complexities.\footnote{Technically, we assume that the synchronous algorithm sends at least one message along each edge, and thus has message complexity $\Omega(m)$. Without this, we should state the message complexity of the asynchronous version with an additive $m\poly(\log n)$ term, as done in the abstract. The same condition applies to the synchronizers of ~\cite{Awerbuch1992AdaptingNetworks, Awerbuch1990NetworkOverhead}, and this $\Omega(m)$ message complexity is satisfied by a wide range of graph algorithms of interest.} 
\end{theorem}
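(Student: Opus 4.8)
The whole result is obtained by combining two pieces: (a) a \emph{deterministic, asynchronously constructible} hierarchy of sparse covers, and (b) a pulse-generation scheme that, \emph{given} such a hierarchy, simulates the synchronous algorithm with only amortized $\poly(\log n)$ overhead and is provably correct under a fully adversarial delay schedule. Piece (a) replaces the randomized Linial--Saks--based construction used by~\cite{Awerbuch1992AdaptingNetworks}, and piece (b) is where the inaccuracies of~\cite{Awerbuch1990NetworkOverhead, Awerbuch1992AdaptingNetworks} (the congestion issue and the asynchronous-correctness issue) must be repaired. Before either, I fix the model in the spirit of \Cref{subsec:modelSubtleties}: I only synchronize synchronous algorithms in which the sole trigger for a node to act in (and send pulse-$p$ messages during) round $p$ is a message received in an earlier round (plus its local input). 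This rules out the ``wait for round $r$, then send'' behavior that, as discussed above, forces an $\Omega(n)$ overhead, and it guarantees that every round in which a node is active is ``witnessed'' by genuine message traffic of $\mathcal{A}$ --- exactly what a $\poly(\log n)$-overhead (as opposed to additive-$n$-per-node) synchronizer needs.

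For piece (a): for each scale $i = 0,1,\dots,\lceil \log n\rceil$ I want a cover $\mathcal{C}_i$ --- a family of clusters of strong diameter $2^i\cdot\poly(\log n)$, each node in at most $\poly(\log n)$ clusters, and every radius-$2^i$ ball contained in some single cluster --- each cluster carrying a BFS tree to its center. The derandomization engine of~\cite{Rozhon2020Polylogarithmic-timeDerandomization} (and its refinements) gives such covers deterministically; the content is to run that machinery in the \emph{asynchronous} model, which one can do by orchestrating it from a bounded number of cluster centers via BFS-style waves of bounded span, and to honor the ``no initialization'' clause by building the hierarchy \emph{lazily}, scale by scale: $\mathcal{C}_i$ is constructed only when the simulation first needs a cluster at scale $i$, and its construction cost is charged to the messages of $\mathcal{A}$ whose presence forced that scale. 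Here the footnote's assumption that $\mathcal{A}$ sends $\Omega(m)$ messages is used, to guarantee a charging budget on every edge.

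For piece (b): each cluster runs an internal $\beta$-type synchronizer on its BFS tree (convergecast ``ready'' to the center, broadcast ``go'' back), costing $O(\mathrm{diam})$ time and $O(|\text{cluster}|)$ messages per pulse it handles; see \Cref{app:alpha-beta-gamma}. A node $v$ advances from pulse $p$ to pulse $p{+}1$ only once it holds a certificate that every pulse-$p$ message addressed to it has been delivered and acknowledged; since the relevant senders are neighbors of $v$, the group of nodes involved in any such ``event'' sits inside a small-radius ball, and $v$ finds --- by a doubling search over $i=0,1,2,\dots$ --- the smallest scale $i$ whose cover $\mathcal{C}_i$ contains a cluster engulfing that group, and defers to that cluster's internal synchronizer. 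The central correctness claim is then the invariant \emph{``for every node $v$ and every pulse $p$, $v$ generates pulse $p{+}1$ only after receiving all pulse-$p$ messages sent to it,''} which must be verified against a worst-case adversary that delays and reorders everything, including the synchronizer's own control messages and the lazy cover construction; this is precisely the statement that~\cite{Awerbuch1990NetworkOverhead, Awerbuch1992AdaptingNetworks} left unproven or flawed. The complexity bounds then follow from: each scale-$i$ cluster runs its internal synchronizer at most once per distinct pulse of $\mathcal{A}$ active inside it, at cost $2^i\poly(\log n)$ time and $|\text{cluster}|\poly(\log n)$ messages; each active node lies in only $\poly(\log n)$ clusters per scale and in only $O(\log(\text{event span}))$ relevant scales; and an amortized charging that injects each synchronizer message into a distinct (edge, pulse) pair of $\mathcal{A}$ up to $\poly(\log n)$ multiplicity.

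I expect the main obstacle to be exactly the two points flagged in the introduction. First, \emph{congestion}: since clusters at all scales overlap, one node may relay for $\poly(\log n)$ clusters per scale across $\Theta(\log n)$ scales at once, and if many waves traverse it concurrently the naive analysis blows up the time overhead; one must schedule the internal synchronizers so that a node forwards only $\poly(\log n)$ messages per ``real'' pulse, and then re-establish the pulse-ordering invariant \emph{despite} these scheduling delays. Second, \emph{asynchronous correctness}: producing a clean, checkable invariant that ties each node's pulse counter to the causal history of messages it has received, and that is robust to the adversary's full control over timing and to the interleaving of the lazy decomposition with the simulation. Once these two are nailed down, the remaining ingredients --- the deterministic decomposition, the $\beta$-type internal synchronizer, the doubling search, and the amortized counting --- are assembled from essentially standard components.
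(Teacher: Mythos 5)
Your high-level architecture (deterministic covers via the Rozhon--Ghaffari machinery, built lazily in the asynchronous model, plus a multi-scale cluster-based certification of pulse safety, with the $\Omega(m)$ assumption absorbing the cover-construction cost) does match the paper's skeleton, but the proposal has a genuine gap exactly where the theorem's difficulty lies, and you in fact concede this by listing the congestion and asynchronous-correctness issues as ``obstacles I expect.'' Concretely, your piece (b) is not a workable mechanism. Whether node $v$ may advance past pulse $p$ is \emph{not} a question local to $v$'s neighborhood: a currently silent neighbor $u$ may still be driven to emit a pulse-$p$ message by a causal chain of $\mathcal{A}$-messages originating up to distance $\approx p$ away, so the ``smallest cluster engulfing the group of relevant senders'' is a constant-scale cluster that cannot certify anything, while certifying at scale $p$ for every pulse is quadratically expensive. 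The paper's solution is a specific amortization absent from your plan: pulses carry a level $\ell(p)$ (the $2$-adic valuation of $p$), safety is aggregated along the execution tree only up to the ancestor at pulse $\operatorname{prev}(\operatorname{prev}(p))$ (the ``host of the host''), and that ancestor registers/deregisters in clusters of the $2^{\ell(p)+5}$-cover; correctness under adversarial delays is then an induction over pulses (the analogue of \Cref{lemma:waiting}, used in \Cref{lemma:executionisbfs} and \Cref{lemma:executioniscorrect}), which is precisely the nontrivial invariant you defer.

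Your complexity accounting also does not go through as stated. Running a $\beta$-type convergecast/broadcast over a whole cluster ``once per distinct pulse active inside it,'' at cost proportional to the cluster size, is exactly the pattern that blows up message complexity: the cluster tree has many edges carrying no $\mathcal{A}$-message at that pulse, so there is no injection into (edge, pulse) pairs of $\mathcal{A}$ with $\poly(\log n)$ multiplicity. The paper avoids this with two ingredients you do not supply: (i) the registration abstraction of \Cref{subsec:registration}, where registrations propagate a global OR via dirty/waiting edge marks so that the cost is proportional to the number of registrants times tree height rather than to the cluster size (this is the fix of the $\Omega(n)$ congestion flaw in \cite{Awerbuch1990NetworkOverhead}), and (ii) the counting argument that at level $\ell$ only $O(M(\mathcal{A})/2^{\ell})$ nodes ever register, because each registrant owns $2^{\ell}$ distinct execution-tree messages, which yields the $\tilde O(M(\mathcal{A}))$ bound. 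Finally, your piece (a) glosses over the fact that building the scale-$2^{t+7}$ cover asynchronously is itself done by multi-source thresholded BFS that \emph{uses} the already-built lower-scale covers (a mutual recursion, \Cref{subsec:asynchSparseCover}--\Cref{subsec:completeBFS}), together with an explicit termination test for when to stop growing scales; ``orchestrating BFS-style waves of bounded span'' does not by itself explain how those waves are synchronized deterministically. In short, the proposal reproduces the problem statement's decomposition into two hard subproblems but does not contain the ideas that solve them.
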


\paragraph{Applications.} Our near-optimal deterministic synchronizer leads to the first deterministic distributed asynchronous algorithms with near-optimal time and message complexities for many of the basic graph problems, including single-source or multi-source BFS, leader election, and minimum spanning tree, as we state in corollaries next. See \Cref{app:applications} for proofs of these corollaries.

\begin{corollary}
    \label{crl:BFS}
    There is a deterministic asynchronous distributed algorithm that, given a source node $s$, computes a breadth first search tree rooted in this source node in $\tilde{O}(D)$ time and using  $\tilde{O}(m)$ messages. Each node $v$ learns its distance from source $s$ and knows its parent in the BFS tree. The algorithm can be extended to the multi-source setting, where there is a set $S$ of multiple sources, and each node should join the BFS tree of the closest source.
\end{corollary}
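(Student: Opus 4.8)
\textbf{Proof plan for \Cref{crl:BFS}.} The plan is to run a textbook synchronous deterministic BFS algorithm and feed it to the synchronizer of \Cref{thm:mainInformal}; the only thing that needs care is checking that this synchronous algorithm lies within the restricted synchronous model the synchronizer operates on. First I would fix the synchronous algorithm $\mathcal{A}_{\mathrm{BFS}}$: the source $s$ sets $\mathrm{dist}(s)=0$ and parent $\bot$, outputs, and broadcasts the value $0$ to all neighbors; every other node $v$ stays idle until the first round in which it receives distance values, at which point it takes the smallest received value $d$ (breaking ties by smallest sender ID), sets $\mathrm{dist}(v)=d+1$ and its parent to that sender, outputs, and broadcasts $d+1$ to all neighbors; any message received afterwards is ignored. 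Since every value lies in $\{0,\dots,n-1\}$, this is a valid \congest algorithm.

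Next I would verify correctness and complexity by the standard induction on rounds: every node $v$ at graph distance $d$ from $s$ receives its first distance message precisely in round $d$, that message carries value $d-1$, and hence $v$ outputs the correct distance $d$ together with a valid BFS parent, and then terminates. Thus $T(\mathcal{A}_{\mathrm{BFS}})=O(D)$ (the eccentricity of $s$, which is at most the diameter). Each node broadcasts exactly once over the whole execution, so every edge carries at most two messages, giving $M(\mathcal{A}_{\mathrm{BFS}})=O(m)$; since $G$ is connected, every edge carries at least one message, so in fact $M(\mathcal{A}_{\mathrm{BFS}})=\Theta(m)$, which meets the $\Omega(m)$ precondition noted in the footnote of \Cref{thm:mainInformal}.

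The key — and essentially only — subtlety is then to observe that $\mathcal{A}_{\mathrm{BFS}}$ is purely reactive: the source acts once ``at the start'', and every other node acts exactly once, triggered by receipt of its first message; the algorithm never uses a construct of the form ``wait for $r$ rounds and then send''. Hence none of the message-overhead pathologies flagged in \Cref{subsec:modelSubtleties} can arise, and $\mathcal{A}_{\mathrm{BFS}}$ falls squarely within the synchronous model to which \Cref{thm:mainInformal} applies. Invoking the synchronizer, which has no initialization, turns $\mathcal{A}_{\mathrm{BFS}}$ into a deterministic asynchronous algorithm $\mathcal{A}'_{\mathrm{BFS}}$ with $T(\mathcal{A}'_{\mathrm{BFS}})=O(D)\cdot\poly(\log n)=\tilde{O}(D)$ and $M(\mathcal{A}'_{\mathrm{BFS}})=O(m)\cdot\poly(\log n)=\tilde{O}(m)$, in which each node learns its distance to $s$ and its BFS parent.

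For the multi-source variant, I would run the same algorithm but initialize every node of the source set $S$ with distance $0$, parent $\bot$, and ``nearest-source'' label equal to its own ID, with non-source nodes also forwarding the nearest-source label of the neighbor they adopt as parent. This is identical to $\mathcal{A}_{\mathrm{BFS}}$ run from a virtual super-source adjacent to all of $S$, so the same induction shows every node outputs its distance to the nearest source (ties broken by source ID), the identity of that source, and its parent toward it, in $O(D)$ rounds and with $\Theta(m)$ messages; applying the synchronizer as above then yields the claimed $\tilde{O}(D)$ time and $\tilde{O}(m)$ message bounds. The only potentially ``hard'' part is the model-fit check in the third paragraph, and for this flooding-style BFS it is immediate.
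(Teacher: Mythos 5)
Your proposal is correct, but it takes a different route from the paper. The paper does not prove \Cref{crl:BFS} by feeding a synchronous BFS into the general synchronizer of \Cref{thm:mainInformal}; instead, the corollary is a one-line consequence of the dedicated asynchronous BFS construction of \Cref{sec:BFS} (\Cref{theorem:SSBFSfinal} for the single-source case and \Cref{theorem:MSBFSfast} for the multi-source case), which builds the BFS and the sparse covers together, iteratively and from scratch. Your route — verify that the textbook flooding BFS is event-driven, satisfies the $\Omega(m)$-messages precondition, has $T=O(D)$ and $M=\Theta(m)$, and then invoke the synchronizer as a black box (formally \Cref{theorem:extensiontimeunknown}, whose additive $O(D\log^{11}n)$ time and $O(m\log^{10}n)$ message terms are absorbed into $\tilde{O}(D)$ and $\tilde{O}(m)$) — is logically sound and not circular, since the synchronizer's proof depends on \Cref{sec:BFS} but not on \Cref{crl:BFS} itself; it is, however, somewhat indirect, because the general synchronizer is itself constructed on top of the very asynchronous BFS machinery that already yields the corollary. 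What the paper's route additionally buys is the sharper multi-source guarantee of \Cref{theorem:MSBFSfast}: time $\tilde{O}(D_1)$, where $D_1=\max_v dist(v,S)$, rather than the $\tilde{O}(D)$ you get from the black-box synchronizer (whose cover-construction phase alone can cost $\tilde{\Theta}(D)$ in your reduction); this stronger bound is not needed for the literal statement of \Cref{crl:BFS}, but it is the bound the paper actually establishes and uses. Your model-fit check (event-driven, no clock references, one broadcast per node, multiple initiators allowed at pulse $0$) is exactly the right thing to verify and is handled correctly.
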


\begin{corollary}
    \label{crl:LE}
    There is a deterministic asynchronous distributed algorithm that elects a leader in the network in $\tilde{O}(D)$ time and using  $\tilde{O}(m)$ messages. Every node learns the identifier of the leader.
\end{corollary}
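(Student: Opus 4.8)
The plan is to obtain the desired asynchronous algorithm by applying the synchronizer of \Cref{thm:mainInformal} to a deterministic \emph{synchronous} leader election algorithm. So I would first fix a deterministic synchronous \congest algorithm $\mathcal{A}$ that elects the node of maximum identifier as leader, has every node terminate knowing that identifier, and runs in $\tilde{O}(D)$ rounds using $\tilde{O}(m)$ messages. Invoking \Cref{thm:mainInformal} --- the synchronizer is deterministic, has no initialization phase, and inflates time and message complexities by only $\poly(\log n)$ factors --- converts $\mathcal{A}$ into a deterministic \emph{asynchronous} algorithm $\mathcal{A}'$ that still solves leader election, with every node outputting the leader's identifier, and with $T(\mathcal{A}') \le \poly(\log n)\cdot T(\mathcal{A}) = \tilde{O}(D)$ and $M(\mathcal{A}') \le m\poly(\log n) + \poly(\log n)\cdot M(\mathcal{A}) = \tilde{O}(m)$. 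The additive $m\poly(\log n)$ term is harmless here, and is anyway removable by prepending to $\mathcal{A}$ one round in which every node greets all its neighbors, which already guarantees $M(\mathcal{A}) = \Omega(m)$. This gives \Cref{crl:LE}.

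The only nontrivial ingredient is the synchronous algorithm $\mathcal{A}$ itself, where the subtlety is being time- \emph{and} message-efficient at once. The naive choice --- each node repeatedly floods the largest identifier it has seen --- terminates in $O(D)$ rounds but can send $\Theta(mD)$ messages (e.g.\ on a path whose identifiers increase monotonically, every edge carries a fresh ``current maximum'' in every round), which is too many. Instead I would use the classical bit-by-bit elimination scheme: maintain a set of alive candidates, initially all of $V$; for each identifier bit position from the most significant to the least ($b = O(\log n)$ of them), the alive candidates whose current bit equals $1$ initiate a network-wide flood of a single token, and every alive candidate whose current bit equals $0$ dies if and only if it receives such a token. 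After all $b$ positions are processed, exactly the maximum-identifier node survives, and it floods its identifier. Each phase is a single OR-flood, costing $O(D)$ rounds and $O(m)$ messages (each edge forwards the token at most once), so $T(\mathcal{A}) = \tilde{O}(D)$ and $M(\mathcal{A}) = \tilde{O}(m)$, as required.

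The phases above must be synchronized network-wide, which nominally presumes a known upper bound on $D$. If none is available, I would wrap the scheme in a geometric search $\hat{D} = 1, 2, 4, \dots$ over the diameter: in iteration $i$ every flood is run for $2^i$ rounds, and a short convergecast then checks whether the set of surviving candidates has size one and its final flood reached all $n$ nodes. Since the maximum-identifier node survives every elimination regardless of the guess, the first iteration that yields a unique survivor identifies the true leader; the running time telescopes to $\tilde{O}(D)$ and the message complexity grows by only an $O(\log D)$ factor, staying $\tilde{O}(m)$. I expect this construction --- getting both near-optimal bounds simultaneously while coping with the unknown diameter --- to be where essentially all the care goes; the passage to the asynchronous model is then a black-box application of \Cref{thm:mainInformal}.
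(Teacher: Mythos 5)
Your overall strategy---build a deterministic synchronous leader election algorithm with $\tilde{O}(D)$ time and $\tilde{O}(m)$ messages and then feed it to the synchronizer---is exactly the paper's strategy. The problem is the precondition of the black-box step, which you treat as trivial. The synchronizer of \Cref{thm:mainInformal}, as formalized in \Cref{sec:generalization} (\Cref{theorem:extensiontimeunknown,theorem:extensiontimeknown}) and discussed in \Cref{subsec:modelSubtleties}, applies only to \emph{event-driven} synchronous algorithms: nodes may not refer to round numbers or ``wait for $x$ rounds.'' Your bit-elimination scheme is inherently clock-based: in iteration $i$ ``every flood is run for $2^i$ rounds,'' and, more fundamentally, a candidate whose current bit is $0$ survives a phase by concluding that \emph{no} token arrived within the allotted time---a negative inference that only a clock can license. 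Paraphrasing this as event-driven is not routine: a bit-$0$ candidate far from all bit-$1$ candidates receives nothing at all during the phase (indeed, if no alive candidate has bit $1$, there is no flood whatsoever), so there is no message event from which it can learn that the phase has ended; you would need an explicit multi-source completion-detection structure, which is precisely the machinery the paper builds via sparse covers. The generic fix the paper mentions (each node manufacturing a clock by ping-ponging with a neighbor) adds $\Theta(nT)=\Theta(nD)$ messages, which destroys the $\tilde{O}(m)$ bound, e.g.\ on a path. A second, smaller gap: your termination test checks that the final flood ``reached all $n$ nodes,'' which presumes exact knowledge of $n$; the paper deliberately avoids this assumption (it stresses that, unlike Kutten et al., only a polynomial estimate of $n$ is needed). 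This part is repairable by a boundary test---each node checks whether all its neighbors lie in the same flood tree and this flag is convergecast---which is essentially the termination detection the paper uses.

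For comparison, the paper's synchronous component runs epochs $i=1,2,\dots$ in which it builds a sparse $2^i$-cover (Rozhon--Ghaffari), convergecasts and broadcasts the minimum surviving candidate identifier inside each cluster tree, eliminates candidates that are not minimal in some cluster, and detects termination by checking whether some cluster has no edge leaving it (hence contains the whole graph). All of these primitives---bounded-depth BFS with accept/decline responses, convergecast, broadcast---carry their own completion detection through messages and are therefore event-driven, which is what makes the final sentence ``feed it into the synchronizer'' legitimate. If you want to salvage your bit-by-bit scheme, you would have to replace every timed flood by such a flood-and-echo structure (with all alive candidates participating so that the absence of bit-$1$ candidates is still detectable), at which point you are re-deriving something close to the paper's cluster-based construction.
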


We remark that for leader election, even the randomized asynchronous case was open until a very recent work of Kutten et al.\cite{Kutten2021SingularlyNetworks}, and their randomized algorithm assumes that nodes know $n$ up to a constant factor. Our algorithm is deterministic, and the more standard assumption of nodes known $n$ up to a polynomial suffices for it, i.e., knowing a constant-factor upper bound on $O(\log n)$, which is the identifier/message sizes and needed as part of the \congest model. 

\begin{corollary}
    \label{crl:MST}
    There is a deterministic asynchronous distributed algorithm that computes a minimum spanning tree in $\tilde{O}(D+\sqrt{n})$ time and using  $\tilde{O}(m)$ messages.
\end{corollary}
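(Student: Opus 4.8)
The plan is to obtain $\mathcal{A}'_{\mathrm{MST}}$ by feeding a known near-optimal \emph{synchronous} deterministic MST algorithm into the deterministic synchronizer of \Cref{thm:mainInformal}. Recall that in the synchronous \congest model there is a deterministic distributed MST algorithm with round complexity $\tilde{O}(D+\sqrt{n})$ (the classical Kutten--Peleg / Garay--Kutten--Peleg bound), and subsequent work makes this simultaneously message-efficient, using only $\tilde{O}(m)$ messages. Call such an algorithm $\mathcal{A}_{\mathrm{MST}}$; it has time complexity $T(\mathcal{A}_{\mathrm{MST}}) = \tilde{O}(D+\sqrt{n})$ and message complexity $M(\mathcal{A}_{\mathrm{MST}}) = \tilde{O}(m)$.

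First, I would make sure $\mathcal{A}_{\mathrm{MST}}$ satisfies the mild precondition of \Cref{thm:mainInformal}, namely that it sends at least one message along every edge (equivalently $M(\mathcal{A}_{\mathrm{MST}}) = \Omega(m)$). If it does not, I prepend a single synchronous round in which every node sends a dummy message to each of its neighbors; this adds $O(1)$ to the round complexity and $O(m)$ to the message complexity, so the asymptotics are unchanged, and it plainly does not affect correctness of the MST computation that follows.

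Second, I would apply the synchronizer of \Cref{thm:mainInformal} to $\mathcal{A}_{\mathrm{MST}}$. Since that synchronizer has no initialization phase and has time- and message-overhead $\poly(\log n)$, the resulting asynchronous algorithm $\mathcal{A}'_{\mathrm{MST}}$ runs in time $T(\mathcal{A}_{\mathrm{MST}}) \cdot \poly(\log n) = \tilde{O}(D+\sqrt{n})$ and uses $M(\mathcal{A}_{\mathrm{MST}}) \cdot \poly(\log n) = \tilde{O}(m)$ messages, where we use the $\Omega(m)$ convention established in the previous step. Determinism of $\mathcal{A}'_{\mathrm{MST}}$ follows because $\mathcal{A}_{\mathrm{MST}}$ is deterministic and the synchronizer preserves determinism. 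Each node's output (its incident MST edges) is exactly the output it would produce in the synchronous execution being simulated, which is correct by correctness of $\mathcal{A}_{\mathrm{MST}}$.

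The one point that needs care --- and the only nontrivial part --- is checking that the chosen synchronous MST algorithm genuinely fits the formalization of the synchronous model in \Cref{subsec:modelSubtleties}, so that its $\tilde{O}(m)$ message bound is of the ``honest'' type the synchronizer requires (recall the subtlety that constructs like ``wait until round $r$, then send'' can otherwise inflate the message-overhead to $\Omega(n)$). Standard near-optimal MST algorithms are assembled from Bor\r{u}vka-style merging phases, pipelined/convergecast computations over a BFS tree, and controlled fragment-growth steps; in each of these a node communicates only in rounds where it actually has a message to send or expects to receive one, so they are of the well-behaved form and the synchronizer applies with only a $\poly(\log n)$ loss. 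Verifying this bookkeeping for the specific cited algorithm is the crux of the argument; everything else is a direct substitution into the bounds guaranteed by \Cref{thm:mainInformal}. The multi-source BFS and leader election corollaries are handled identically, starting from the corresponding deterministic synchronous algorithms with $\tilde{O}(D)$ time and $\tilde{O}(m)$ messages.
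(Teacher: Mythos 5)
Your proposal matches the paper's proof: the paper simply invokes the synchronizer (\Cref{thm:mainInformal}, in the form of \Cref{theorem:extensiontimeunknown}/\Cref{theorem:extensiontimeknown}) on Elkin's deterministic synchronous MST algorithm, which already achieves $\tilde{O}(D+\sqrt{n})$ rounds and $\tilde{O}(m)$ messages, exactly the substitution you describe. Your added remarks about the $\Omega(m)$ precondition and the event-driven interpretation are sensible care the paper handles only implicitly, but they do not change the argument.
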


As a side remark, we note that for asynchronous MST, even the randomized case was open. Mashreghi and King~\cite{Mashreghi2019BriefCommunication} asked whether there is a randomized asynchronous MST algorithm with $\tilde{O}(m)$ messages that takes even just $o(n)$ time in graphs that have $D=o(n)$. A very recent work of Dufoulon et al.~\cite{Dufoulon2022AnAlgorithm} gives a randomized algorithm with $\tilde{O}(m)$ messages that take even just $\tilde{O}(D^{1+\eps}+\sqrt{n})$ time, for any small constant $\eps>0$. \Cref{crl:MST} is deterministic and achieves the optimal time and message complexity bounds up to logarithmic factors.

\subsubsection{Inaccuracy or correctness issues.} As mentioned before, there are two types of issues in \cite{Awerbuch1992AdaptingNetworks, Awerbuch1990NetworkOverhead}: (I) problems that appear even for synchronizing BFS algorithms (concretely in a  subroutine for registration and deregistration), and (II) problems in the extension to general synchronous algorithms (concretely, in the definition of the synchronous model and the message complexity blow-up of the extension). 

For issue (I), the fix needs some algorithmic adjustments and careful lemma statements. We describe a variant of the approach of \cite{Awerbuch1992AdaptingNetworks, Awerbuch1990NetworkOverhead} and provide a correctness analysis that considers all possibilities in the asynchronous environment. See \Cref{subsec:registration} for more on this.

Issue (II) turns out to depend critically on the interpretation of synchronous algorithms. We are not aware of any explicit discussion about this subtlety in prior work. We show a natural and useful interpretation with which the $\poly(\log n)$ time and message complexity overheads claims remain correct. Informally, the synchronous algorithm should have no explicit reference to the round numbers. It should be only \textit{event-driven}, meaning it is described as: \textit{Upon receiving messages \dots (potentially several in the same round), send messages \dots (immediately after, as soon as possible for the given number of messages to be sent, and without waiting for many rounds).} See \Cref{subsec:modelSubtleties} for more on this. Fortunately, a wide range of synchronous algorithms of interest are event-driven in this sense or can be paraphrased to be event-driven without any asymptotic loss in their complexities. This includes usual algorithms for BFS, leader election, and MST.\footnote{Furthermore, even algorithms that depend on time can be turned into this event-driven language, by each node generating a clock for itself by communicating back and forth with one of its neighbors, but that would be at the expense of $nT$ additional message complexity, where $T$ denotes the time complexity of the synchronous algorithm.} 

Finally, we note another subtle point that is not explicitly discussed in~\cite{Awerbuch1992AdaptingNetworks, Awerbuch1990NetworkOverhead}: The time complexity definition for which the $\poly\log n$ time complexity overhead holds is the time from the start until all nodes generate their output (in both the synchronous and asynchronous algorithms). In the asynchronous version of a synchronous algorithm generated by the synchronizers of ~\cite{Awerbuch1992AdaptingNetworks, Awerbuch1990NetworkOverhead}, it is possible that some nodes might continue to perform some auxiliary communications longer, after outputting their output from the synchronous algorithm (or when in the synchronous algorithm they have no output). This can continue for up to $\tilde{O}(D)$ time. Our deterministic synchronizer is also with respect to the same definition of time complexity. See the time complexity paragraph in \Cref{subsec:modelSubtleties} for more on this.

\section{Preliminaries}
\noindent\textbf{Notations.} We assume that the network is an $n$-node $m$-edge undirected graph $G=(V, E)$, and we use $D$ to denote the diameter of the graph. We use $dist(v, u)$ to denote the distance of nodes $v$ and $u$ in the graph. For a set $S$ of vertices, we define $dist(v,S):=min_{s\in S} dist(v, s)$.

\subsection{Sparse covers}
\label{subsec:sparseCovers}
 
\begin{definition} For any positive integer $d$, we define a \textbf{sparse $d$-cover} with stretch $s$ of an $n$-node graph  to be a set of clusters, such that the following conditions hold:

\begin{itemize}
    \item The diameter of each cluster is $O(d\cdot s)$.

    \item Each node is in $O(\log{n})$ clusters.

    \item For any $u, v \in V$ s.t. $dist(u, v)\le d$, there is at least one cluster that includes both $u$ and $v$. In fact, a stronger statement holds. For each node $v$, there exists a cluster such that all node $u$ for which $dist(u, v)\leq d$ are included in this cluster.
\end{itemize}
We assume that each node knows in which clusters it is. Furthermore, for each cluster, we have a tree of depth $O(d\cdot s)$ that spans all cluster nodes, where each node knows its parent and its children.

Often we need sparse covers for all powers of $2$ up to roughly $d$. To capture that, we define a \textbf{layered sparse $d$-cover} for a parameter $d$ to be a collection of sparse $2^{j}$-covers for all $j\in \{0,, 1,2, \dots, \lceil{\log_2 d\rceil}\}$.
\end{definition}

The optimal value of stretch in sparse covers is $O(\log n)$~\cite{Awerbuch1990SparsePartitions, Linial1993LowDecompositions}.
We work with deterministic distributed constructions of sparse cover that, while being time efficient, have slightly higher stretch~\cite{Rozhon2020Polylogarithmic-timeDerandomization}. For the rest of the paper, we will build and use only sparse $d$-covers with the properties stated below. See \Cref{theorem:synchcover} and \Cref{theorem:sparseasynch} for the precise theorem statements in the synchronous and asynchronous settings.

\begin{itemize}
    \item The cluster stretch is $O(\log^3{n})$. More concretely, each cluster has a $O(d\log^3{n})$-depth cluster tree, which spans all cluster nodes.
    \item Each edge is used in only $O(\log^4{n})$ cluster trees.
\end{itemize}

\subsection{Running several algorithms in the asynchronous environment}

In our algorithms, we often have several algorithmic subroutines. A scheduling question arises when these subroutines want to use the same edge for communications. In general, we might need to run all these subroutines in parallel in the sense that each of them relies on the progress of the others, or we might want to run them sequentially, in the sense that the $i^{th}$ one depends only on subroutines $1$ to $i-1$. In the synchronous setting, this is usually simple. In this subsection, we discuss how to perform these in the asynchronous environment, and the related time guarantees.

\subsubsection{Running in parallel}
\label{subsubsec:parallel}
 Suppose we want to run several subroutines in parallel. These subroutines may desire to communicate over the same edge $e$. Notice that, in both the synchronous setting (with the event-driven interpretation) and the asynchronous setting, we have restricted the algorithms to waiting for each injected message's acknowledgment, before sending the next message. Hence, different subroutines might appear to slow down each other. How do we schedule the communications of these subroutines over edge $e$, so they all work in a small time complexity? 

Let us call the subroutines $\alg_1, \alg_2, \ldots, \alg_t$. It would be very convenient if we had a separate copy of the edge $e$ for each subroutine. Then the messages sent for different subroutines would not get in the way of each other. 
To deal with these issues, we will use the following abstraction principle to simulate ``making copies" of the edges:

\begin{lemma}
    \label{lemma:duplicated}  For a given graph $G$ and an integer $k$, let us define $G(k)$ as a graph $G$, in which each edge was duplicated $k$ times. In particular, $G(1) = G$. Suppose that for a given $k$, some asynchronous algorithm $\alg$ runs on the graph $G(k)$ in time $O(f)$, where $f$ is some function from the parameters of the graph. Then, we can simulate algorithm $\alg$ on the graph $G$ in time $O(kf)$.    
\end{lemma}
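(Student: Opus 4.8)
\textbf{Proof plan for \Cref{lemma:duplicated}.}
The plan is to simulate each of the $k$ parallel copies of every edge by time-multiplexing over the single physical edge, using a round-robin discipline so that no single copy is starved. First I would fix an edge $e=\{u,v\}$ of $G$; in $G(k)$ this corresponds to $k$ parallel edges $e^{(1)},\dots,e^{(k)}$. In the simulation, each endpoint maintains, for each $i\in\{1,\dots,k\}$, a FIFO queue $Q_i$ of messages that algorithm $\alg$ wishes to send along $e^{(i)}$ (recall that, under the event-driven/asynchronous conventions adopted in the paper, a subroutine injects at most one un-acknowledged message at a time per edge copy, so each $Q_i$ holds at most one pending ``logical'' message plus whatever the simulation needs). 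The physical edge $e$ is then used to carry these messages in a cyclic order $1,2,\dots,k,1,2,\dots$: a small header of $O(\log k)=O(\log n)$ bits tags each physical message with the index $i$ of the copy it belongs to, so the receiver can deliver it to the correct simulated instance. Acknowledgements are likewise tagged and carried back over $e$.

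The key steps, in order, are: (1) describe the per-edge scheduler precisely — each endpoint cycles through the indices $i=1,\dots,k$, and when it is copy $i$'s turn it sends the head of $Q_i$ (or a ``nothing to send'' token, or simply skips to the next nonempty queue) and waits for the acknowledgement before advancing the cycle pointer; (2) argue \emph{correctness}, i.e., that each simulated copy $e^{(i)}$ behaves like a genuine asynchronous link: messages on copy $i$ are delivered in order, exactly once, and the adversarial delays the simulation introduces are just a (bounded) additional delay on top of whatever the adversary already imposes on $e$, which is fine because $\alg$ is assumed correct against an arbitrary asynchronous adversary; (3) argue the \emph{time bound}: define one time unit as the maximum message delay $\tau$ on the physical network, and observe that in the worst case a logical message on copy $i$ must wait for at most $k-1$ other copies to each take one ``turn'' before its turn comes up again, and each turn costs $O(1)$ physical message delays (send plus acknowledgement). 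Hence each logical message's end-to-end delay in the simulation is at most $O(k)$ time units. Therefore any execution of $\alg$ that finishes in time $O(f)$ on $G(k)$ is simulated on $G$ in time $O(kf)$, since every ``step'' of $\alg$ along any edge is slowed down by a factor of at most $O(k)$, and these slowdowns compose along the critical path without further blow-up (a causal chain of length $\ell$ in $\alg$ becomes a causal chain of total length $O(k\ell)$ time units). The message complexity is also preserved up to constant factors, though the lemma as stated only asks about time.

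The main obstacle I expect is making step (3) rigorous under a worst-case adversary: one has to be careful that the round-robin pointer cannot be ``stalled'' indefinitely by the adversary delaying acknowledgements, and that waiting-for-acknowledgement per turn does not secretly let the adversary inflate the per-turn cost beyond one time unit. The clean way around this is to charge: the adversary can delay each physical message by at most $\tau=1$ time unit, a full cycle of $k$ turns involves $O(k)$ physical messages and their acknowledgements, so a full cycle completes within $O(k)$ time units regardless of adversarial behavior; thus between two consecutive ``turns'' of copy $i$ at most $O(k)$ time units elapse, which is exactly the per-edge slowdown we need. A secondary subtlety is the header overhead: tagging with the copy index adds $O(\log k)$ bits, but since $k$ is polynomially bounded in $n$ this stays within the $O(\log n)$ message-size budget of the \congest model, so it does not affect the asymptotics (and in \local it is a non-issue). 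Once these points are nailed down, composing the per-edge simulations over all edges of $G$ simultaneously is immediate, since the schedulers on different physical edges are independent.
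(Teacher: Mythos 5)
Your proposal is correct and follows essentially the same route as the paper: round-robin multiplexing of the $k$ edge copies over the single physical edge, with the time bound coming from the observation that each logical message waits at most $O(k)$ turns (each costing $O(1)$ time), so every delivery guarantee of one time unit in $G(k)$ becomes $O(k)$ time units in $G$, giving the $O(kf)$ bound. The extra details you add (copy-index headers, acknowledgement handling, adversary charging) are refinements of the same argument rather than a different approach.
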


\begin{proof}
    For each node $u$, and any node $v$ connected to it, let $u$ number edges from $u$ to $v$ in $G(k)$ by $1, 2, \ldots, k$. $u$ will simulate sending messages to $v$ in $G(k)$ by sending messages in turns. At the $i$-th turn, $v$ will check if there is a message that $v$ wants to send to $u$ through the $(i \bmod k)$-th edge. If yes, $v$ will send this message and proceed to the $i+1$-st turn; otherwise, it will proceed to the $i+1$-st turn right away.

    All the nodes are sending the same messages as they would according to $\alg$ in $G(k)$. The only difference is that while previously, they were guaranteed to be delivered in $1$ time unit, now they are guaranteed to be delivered in $k$ time units, as a message has to wait for at most $k$ turns. So, the time complexity of the algorithm increases by at most a factor of $k$.
\end{proof}

\begin{corollary}
\label{corollary:parallel}
Suppose that we want to run several procedures $\alg_1, \alg_2, \alg_3, \ldots$ on a graph $G$, and the following conditions hold:
    (a) Each edge is involved in sending messages for $O(k)$ procedures. An edge might not know in advance in which procedures it is involved; see \Cref{rmrk:crlParallelAdaptivity} for more. (b) If for each edge $e$ we had a separate copy of $e$ for every procedure that involves $e$, then all the procedures would be completed in time $O(t)$.
    Then, we can run all these procedures on $G$ in $O(kt)$ time.
\end{corollary}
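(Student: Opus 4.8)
The plan is to derive \Cref{corollary:parallel} from \Cref{lemma:duplicated} almost immediately, with the only real work being to account for condition (a)'s caveat that an edge may not know \emph{in advance} which procedures use it. First I would set $G(k)$ to be the graph in which every edge of $G$ is replaced by $k$ parallel copies, as in \Cref{lemma:duplicated}. The idea is to run the collection $\alg_1,\alg_2,\dots$ on $G(k)$ by assigning, for each edge $e$ of $G$, the procedures that ever use $e$ to distinct copies of $e$ in $G(k)$: since by (a) at most $O(k)$ procedures use any given edge, after rescaling $k$ by the hidden constant we can guarantee an injective assignment, so distinct procedures never contend for the same copied edge. On $G(k)$, then, each procedure behaves exactly as it would in the idealized ``private copy'' model of (b), so all procedures finish within $O(t)$ time on $G(k)$. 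Applying \Cref{lemma:duplicated} with this $k$, we can simulate the whole execution on $G$ itself with an $O(k)$ slowdown, giving total time $O(kt)$, which is the claimed bound.

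The step I expect to need the most care is handling the adaptivity issue flagged in (a): an endpoint of $e$ might not know, when it first wants to use $e$ for procedure $\alg_i$, how many other procedures will eventually use $e$, nor which copy index to assign. The fix is to assign copies on a first-come-first-served basis: the two endpoints of $e$ agree (via a short handshake over $e$ the first time either side wants to communicate for a new procedure on $e$) on the next free copy index in $\{1,\dots,O(k)\}$, and both remember the procedure-to-index map for $e$ thereafter. Because (a) promises at most $O(k)$ procedures ever touch $e$, this allocation never runs out of indices, and the handshake costs only $O(1)$ extra messages and time per (procedure, edge) pair, which is absorbed into the $O(t)$ term (each procedure using an edge already sends at least one message there). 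This reduces the general adaptive setting to the static assignment that \Cref{lemma:duplicated} handles, and then the simulation of \Cref{lemma:duplicated} — cycling through the $k$ copies in round-robin turns, incurring at most a factor-$k$ delay per message — delivers the $O(kt)$ bound. I would also remark, perhaps deferring to \Cref{rmrk:crlParallelAdaptivity}, that the turn-based round-robin of \Cref{lemma:duplicated} is itself oblivious to which copies are in use, so no further adaptivity concerns arise at the simulation layer.
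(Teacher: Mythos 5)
Your proposal is correct and takes essentially the same route as the paper: the corollary is just \Cref{lemma:duplicated} applied with the round-robin simulation giving the factor-$k$ slowdown, and \Cref{rmrk:crlParallelAdaptivity} covering the adaptivity caveat in condition (a). The only difference is your explicit first-come-first-served handshake to assign procedures to copy indices, which is harmless but unnecessary---since the $k$ copies of an edge are interchangeable and each message can carry its procedure's identifier, each endpoint can simply serve, in round-robin order, whichever procedures currently have a message to send over the edge, which is exactly the paper's intended argument.
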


\begin{remark}\label{rmrk:crlParallelAdaptivity}
    Note that the edge $(u, v)$ does not have to know in advance how many procedures it will be involved in. The statement holds if each edge is going to be involved in sending messages of at most $O(k)$ procedures, even if it does not know whether it will receive messages of new procedures in the future and does not know the value of $k$. Each endpoint of $e$ will simply perform communications of all of the procedures that want to send a message along $e$ in a round-robin fashion, along this edge $e$. In this sense, this abstraction is adaptive.
\end{remark}

\subsubsection{Running sequentially}

\label{subsubsec:sequential}

Suppose now that our algorithm consists of several stages $\stage_1, \stage_2, \ldots, \stage_k$, where $\stage_i$ relies only on the results of previous stages. In the synchronous world, if we knew the time bounds on each stage, we could run them sequentially: before starting the stage $\stage_{i+1}$, we would wait until the entire $\stage_i$ is finished. Unfortunately, in the asynchronous world, we cannot have any such notion of "waiting" for a certain amount of time, for the previous stages to be over. Note that we do not require the entire $\stage_i$ to be finished before we start sending any of the messages of $\stage_{i+1}$: messages may go through faster in some parts of the graph than in others. 

How do we bound the time spent on running all of these stages? With the approach from \Cref{subsubsec:parallel}, we could do it as follows: Suppose that if for each edge we had a separate copy of it for each stage, all the procedures would be over in time $O(t)$. Then the best we could say according to the \Cref{corollary:parallel} is that all stages would be over in $O(kt)$ time. We, however, can prove a stronger statement.

\begin{lemma}
    \label{lemma:sequential}
    Suppose that there is an algorithm $\alg$ that involves several sequential stages $\stage_1, \stage_2, \ldots, \stage_k$, and the algorithm works on a graph $G$, with the following properties:
    \begin{itemize}
        \item Each $\stage_i$ relies only on the previous stages $\stage_1, \dots, \stage_{i-1}$. In other words, it is possible that $\stage_i$ starts after $\stage_1, \dots, \stage_{i-1}$ are completed, and no stage $\stage_{j}$, for $j>i$, has even started yet.

        \item If all stages $\stage_1, \stage_2, \dots, \stage_{i-1}$ are over, and $\stage_i$ has not started yet, then $\stage_i$ is over in at most $T_i$ additional time. We will refer to $T_i$ as its \textbf{isolated time complexity}.
    \end{itemize}
    Then, we can run this algorithm in time $O(T_1 + T_2 + \ldots + T_k)$.
\end{lemma}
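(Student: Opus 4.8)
The plan is to schedule, on every edge $e$, the messages that wait to traverse $e$ by \emph{priority}, always transmitting the pending message that belongs to the stage of smallest index. This is the one place where we must deviate from the round-robin rule underlying \Cref{corollary:parallel}: round-robin over the (up to $k$) stages using $e$ slows every message down by a $\Theta(k)$ factor and yields only the weak bound $O\!\left(k\sum_i T_i\right)$ sketched above, whereas favoring low-index stages lets each stage, once its predecessors are finished, proceed as though the later stages did not exist. The rule is compatible with the model: each endpoint of $e$ keeps, for every active stage, the at-most-one message it currently wants to push along $e$ (recall that a stage waits for an acknowledgement before its next send over $e$), transmits messages one at a time over $e$, and whenever $e$ is free it releases the pending message of smallest stage index.

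The core of the argument is the claim that \emph{once $\mathcal{S}_1,\dots,\mathcal{S}_{i-1}$ have all completed, $\mathcal{S}_i$ completes within $O(T_i)$ additional time}. From that moment on, $\mathcal{S}_i$ is the lowest-indexed stage still active, so by the priority rule, whenever an endpoint of an edge $e$ decides to send a message of $\mathcal{S}_i$ along $e$, that message is the pending one of smallest index; it therefore waits only for the at-most-one message currently in transit over $e$ to clear, i.e.\ $O(1)$ time units, and is delivered $O(1)$ time units later. Hence $\mathcal{S}_i$ runs exactly as it would in isolation, up to an $O(1)$-factor slowdown of each message, so by the definition of its isolated time complexity it terminates within $O(T_i)$ time after $\mathcal{S}_1,\dots,\mathcal{S}_{i-1}$ do --- and the fact that $\mathcal{S}_i$ may already have done some work (rather than ``not started yet'') only helps, since any partial run of $\mathcal{S}_i$ is a prefix of a complete run. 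Note also that the priority rule never deadlocks: $\mathcal{S}_i$ depends only on $\mathcal{S}_1,\dots,\mathcal{S}_{i-1}$, each of which has priority at least that of $\mathcal{S}_i$ on every shared edge and is thus never starved by $\mathcal{S}_i$ or any later stage, so all stages do complete.

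Granting the claim, let $c$ be its hidden constant and show by induction on $i$ that $\mathcal{S}_i$ is complete by time $c(T_1+\dots+T_i)$. For $i=1$ there are no predecessors, so $\mathcal{S}_1$ is the highest-priority stage throughout; applying the claim with the vacuous hypothesis ``all predecessors done at time $0$'' gives completion by time $cT_1$. For the step, the inductive hypothesis puts all of $\mathcal{S}_1,\dots,\mathcal{S}_{i-1}$ complete by time $c(T_1+\dots+T_{i-1})$, and applying the claim from that instant gives $\mathcal{S}_i$ complete by time $c(T_1+\dots+T_{i-1})+cT_i=c(T_1+\dots+T_i)$. Taking $i=k$, the whole algorithm terminates by time $c(T_1+\dots+T_k)=O(T_1+T_2+\dots+T_k)$.

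I expect the main obstacle to be making the ``$O(1)$-factor slowdown'' precise: one should couple the real (concurrent, priority-scheduled) execution of $\mathcal{S}_i$ with its hypothetical isolated execution and prove, by induction along the causal order of send/receive events, that each event of $\mathcal{S}_i$ occurs at most a constant factor later in the real execution than in the isolated one, while correctly accounting for acknowledgement round-trips, intervening local computation, and the monotonicity that lets partial progress only help. Everything else is the bookkeeping induction above, together with the observation that the sequential dependency structure prevents interference from ever running ``backwards'' --- higher-index stages never block a strictly-lower-index one for more than $O(1)$ time.
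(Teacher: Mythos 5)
Your proposal is correct and follows essentially the same route as the paper: prioritize messages of lower-indexed stages on every edge, then induct on $i$ to show stage $\stage_i$ finishes within $O(T_i)$ time after $\stage_1,\dots,\stage_{i-1}$ do. The "main obstacle" you flag (coupling the concurrent execution with an isolated one up to an $O(1)$ slowdown) is handled in the paper more directly by noting that the isolated time complexity $T_i$ is a worst-case bound over adversarial delays, so the partial progress of $\stage_i$ is just a valid state reachable in some adversarial isolated execution, from which at most $T_i$ additional time is needed.
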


\begin{proof}
We run the algorithms together, with the following scheduling rule for messages: whenever a node has several messages to send over an edge $e=\{v, u\}$, node $v$ prioritizes messages of lower stages and sends them first. With this rule, the only way for a message of $\stage_{i}$ to be delayed is that it is waiting for some messages of stages $\stage_1, \stage_2, \dots, \stage_{i-1}$ to go through the edge. Notice that since the stages are devised for the asynchronous setting, each of them should work correctly even if its messages are delayed adversarially by an arbitrary amount of time. Now, by an induction on $i$, we argue that by time $T_1 + T_2 + \ldots+T_i$, all stages $\stage_1, \stage_2, \ldots, \stage_i$ are over. For the base case, the algorithm of $\stage_1$ is not delayed by any other stage and thus finishes in $T_1$ time. For the inductive step, notice that by time $T_1 + T_2 + \ldots+T_j$, all stages $\stage_1, \stage_2, \ldots, \stage_j$ are over. By that point of time, the algorithm of $\stage_{j+1}$ might have made some partial progress in its execution, but it always remains in a valid state of the algorithm. Hence, by the second condition in the lemma statement, we know that from any valid state, the $\stage_{j+1}$ algorithm takes at most $T_{j+1}$ more time to finish. The reason is this: consider the adversarial execution where we first execute and complete  $\stage_1, \stage_2, \ldots, \stage_j$, then we start $\stage_{j}$ but we adjust the delays to reach the same valid state, and then we run the rest of the algorithm; this should still terminate within $T_{j+1}$ more time. Hence, all stages $\stage_1, \stage_2, \ldots, \stage_{j+1}$ are over by time $T_1 + T_2 + \ldots+T_{j+1}$. This completes the inductive proof.
\end{proof}

Our entire algorithm will consist of several stages, satisfying the constraints of the \cref{lemma:sequential}, so we will bound its total runtime by the sum of the isolated time complexities of all its stages.

\section{Collecting information in covers}
\subsection{Collecting information in $(d \cdot \ell)$-ball with sparse $d$-cover}
\label{SUBSEC:BALLCOLLECTING}

In the course of our algorithm, sometimes we need a node to gather some information about other nearby nodes. We use the notion of sparse covers described in \Cref{subsec:sparseCovers} to do this efficiently. 

Consider some algorithmic subroutine $\mathcal{P}$. This can be simply a part of the algorithm, e.g., a phase in the synchronous algorithm. Suppose that each node will run this process $\mathcal{P}$ at most once. We will say that node $v$ \textbf{is done with $\mathcal{P}$} when process $\mathcal{P}$ terminates in node $v$, or when node $v$ learns that $v$ will not run $\mathcal{P}$ at all. In this subsection, we discuss a way to collect information about the completion of $\mathcal{P}$ using sparse $d$-covers, assuming these covers have already been constructed.

\paragraph{Information gathering up to the sparse cover radius.} Let us say that we want each node $v$ to learn when all the nodes in $d$-neighborhood of $v$ are done with $\mathcal{P}$. The following theorem statement provides the desired procedure.


\begin{theorem}
    \label{theorem:convergecast}
    Under the conditions above, there exists a procedure $\alg$ that lets each node learn when all the nodes in its $d$-neighborhood are done with $\mathcal{P}$. The procedure has the following properties:

    \begin{itemize}
        \item Let $t$ be the time by which all nodes are done with $\mathcal{P}$ (the value of $t$ is not known only for the analysis and is unknown to the algorithms). Then $\alg$ terminates by time $t + O(d\log^7{n})$.

        In other words, the isolated time complexity of collecting this information is $O(d\log^7{n})$.
        \item Procedure $\alg$ uses only $O(m\log^4{n})$ extra messages. 
    \end{itemize}
    
\end{theorem}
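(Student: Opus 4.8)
The plan is to build $\alg$ by composing three layers: (1) a local "done" detection that each node performs for $\mathcal{P}$; (2) a convergecast within each cluster tree of the sparse $d$-cover that aggregates, at the cluster root, whether \emph{all} nodes of that cluster are done; and (3) a broadcast back down the cluster tree so every node learns the aggregated status, from which node $v$ concludes that its entire $d$-neighborhood is done once \emph{every} cluster containing $v$ reports "all done." Correctness of this last step rests on the third bullet of the sparse-cover definition: for each node $v$ there is a single cluster $C_v$ that contains the full $d$-ball around $v$; so once $C_v$ reports "all done," every node within distance $d$ of $v$ is done, and conversely once all of $v$'s $d$-neighborhood is done (in particular all of $C_v$, since $\operatorname{diam}(C_v)=O(d\log^3 n)$ need not be $\le d$ — so we must be careful here: we only conclude from $C_v$, not require all clusters of $v$ to finish). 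Let me restate step (3) more carefully: $v$ waits until it receives, from at least one cluster $C$ it belongs to, the message "all nodes of $C$ are done", \emph{and} that cluster $C$ is (known to $v$ to be) one guaranteeing $d$-ball coverage for $v$. In the deterministic construction we do not a priori know which of $v$'s $O(\log n)$ clusters is the covering one, so instead $v$ waits for \emph{all} of its clusters to report "all done"; this is sound because if every cluster of $v$ is all-done then in particular $C_v$ is all-done, hence the $d$-neighborhood of $v$ is done; and it is complete because once the whole network is done with $\mathcal{P}$, every cluster becomes all-done.

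For the mechanics within one cluster tree of depth $O(d\log^3 n)$: each node, upon becoming done, reports "done" up toward the root; a node forwards "subtree done" to its parent once it is itself done and has received "subtree done" from all its children. When the root has "subtree done" from all children (and is itself done) it floods "cluster all-done" down the tree. Each such up-pass and down-pass takes $O(d\log^3 n)$ time \emph{in isolation} on that tree, and each tree edge carries $O(1)$ messages per cluster for this subroutine. Since each graph edge lies in $O(\log^4 n)$ cluster trees (and each node in $O(\log n)$ clusters), the total message count is $O(m\log^4 n)$, matching the claimed bound; the extra $O(\log n)$ factor from a node belonging to many clusters is absorbed since message complexity is counted per edge. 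For the time bound, run all the per-cluster subroutines in parallel using \Cref{corollary:parallel} with $k = O(\log^4 n)$ (the per-edge cluster-tree multiplicity): if each edge had a private copy per cluster tree, every cluster's up-and-down pass would complete within $t + O(d\log^3 n)$ of the moment the last node became done (the depth bound, plus the fact that the convergecast can only be delayed waiting for a not-yet-done descendant, which by definition happens no later than time $t$). Hence the real-network time is $t + O(d\log^3 n)\cdot O(\log^4 n) = t + O(d\log^7 n)$, giving isolated time complexity $O(d\log^7 n)$ as stated.

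The subtle point — and the main obstacle — is the asynchronous timing argument for why the up-pass finishes within $O(d\log^3 n)$ of time $t$ rather than taking arbitrarily long. A node high in the tree might forward "subtree done" only after a deep descendant becomes done, and "becomes done" can happen as late as $t$; after that the signal must still climb $O(d\log^3 n)$ hops, each hop taking one time unit (per the private-copy abstraction). One must argue by induction on tree height that every node at height $h$ has sent "subtree done" by time $t + O(h)$: the base case is leaves, which send by $\max(\text{their done-time}, 0)\le t$; the inductive step uses that a height-$h$ node's children have all reported by $t+O(h-1)$ and the node itself is done by $t$, so it reports by $t+O(h)+1$. Symmetrically the down-flood adds another $O(d\log^3 n)$. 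Care is also needed that this reasoning is compatible with \Cref{corollary:parallel}'s adaptive round-robin scheduling — but since that corollary already accounts for the $O(k)$ slowdown and the subroutine's correctness (like all our asynchronous subroutines) is robust to adversarial message delays, the per-cluster isolated bound composes cleanly. Finally, one should note the subroutine never needs to know $t$, $d$'s relation to cluster diameter, or which cluster is the covering one: each node simply waits for all its clusters' "all-done" messages, and the analysis supplies the timing guarantee externally.
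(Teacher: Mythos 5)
Your proposal is correct and follows essentially the same route as the paper's proof: a per-cluster convergecast and broadcast on the cluster trees, with each node waiting for the ``all done'' confirmation from every cluster containing it, and the $t+O(d\log^3 n)\cdot O(\log^4 n)$ time bound obtained via \Cref{corollary:parallel} together with the $O(m\log^4 n)$ message count from the per-edge cluster-tree multiplicity. The only cosmetic differences are that you invoke the stronger $d$-ball-covering property (rather than the pairwise containment the paper uses) and inline an induction on tree height where the paper cites \Cref{lemma:sequential}; both are equivalent in substance.
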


\begin{proof}
    We do a convergecast in each cluster of the cover separately. Consider one particular cluster. We use a convergecast and broadcast in the cluster. In particular, node $v$ waits until all the nodes in its subtree are done with $\mathcal{P}$. If $v$ has any children, it gets this update from its children. Once all descendants of $v$ are done with $\mathcal{P}$, node $v$  notifies its parent that $v$ and its subtree are done with $\mathcal{P}$. Once the cluster root has learned that the entire cluster is done with $\mathcal{P}$, it starts a broadcast, propagating the confirmation that everyone in the cluster is done with $\mathcal{P}$. Each node that receives this confirmation from its cluster parent sends it to its children in the cluster. If the node is a leaf, then it does nothing (effectively, the broadcast terminates in its branch).  

    Now let us consider a node $v$. Notice that the node is included in many clusters. Node $v$ waits until it receives such a confirmation broadcast from all the clusters in which $v$ is present. Since for every two nodes $u, v$ at a distance at most $d$ there is a cluster in which both of them are present, node $v$ will finish collecting the information only after $u$ is done with $\mathcal{P}$. 
    
    We can imagine this as an algorithm consisting of two parts: the first part is running $\mathcal{P}$, and at the end of it each node should know that itself is done with $\mathcal{P}$. The second part is ensuring that we wait until all nodes up to distance $d$ are done with $\mathcal{P}$. This second part is done by a convergecast and broadcast in the cluster trees. Remember that each convergecast/broadcast is up to distance $O(d\log^3{n})$, and each edge is in $O(\log^4{n})$ cluster trees, so, by \cref{corollary:parallel}, the isolated time complexity of the second part is $O(d\log^7{n})$. Since the first part is not dependent on the second one, by \cref{lemma:sequential}, we get that, if the first part finishes by time $t$, then the entire algorithm finishes by time $t + O(d \log^3{n} \cdot \log^4{n}) = t + O(d\log^7{n})$.

    For message complexity, note that each edge is in $O(\log^4{n})$ clusters, each cluster sends only $O(1)$ messages through each of its edges. Thus, we have only $O(m\log^4{n})$ extra messages.
\end{proof}

\paragraph{Information gathering slightly above the sparse cover radius.}
What if we want to collect information up to distance of $d\ell$, which is somewhat greater than the radius $d$ that is guaranteed to be covered by the provided sparse cover? We show that this is also possible, though we end up paying an $\ell$ factor in time and message complexities. The time complexity increase is natural as we care about the greater radius and it takes $\Omega(d\ell)$ time for any information to traverse this distance. The message complexity increase is in some sense an inefficiency of this extension. The following theorem statement provides the desired procedure.

\begin{theorem}
    \label{theorem:largeconvergecast}

    Under the conditions above there exists a procedure $\alg$, letting each node learn when all the nodes in its $d\ell$-neighborhood are done with $\mathcal{P}$, with the following properties:

    \begin{itemize}
        \item If all nodes are guaranteed to be done with $\mathcal{P}$ by some time $t$, $\alg$ is guaranteed to be over in time $t + O(d\ell\log^7{n})$.
        \item It takes only $O(m\ell\log^4{n})$ extra messages.
    \end{itemize}
\end{theorem}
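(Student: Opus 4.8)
The plan is to reduce \Cref{theorem:largeconvergecast} to $\lceil\ell\rceil$ successive applications of \Cref{theorem:convergecast}. Set $\mathcal{P}_0 := \mathcal{P}$, and for $i \ge 1$ define a subroutine $\mathcal{P}_i$ to be ``run the procedure of \Cref{theorem:convergecast} on $\mathcal{P}_{i-1}$'', where we declare node $v$ \emph{done with $\mathcal{P}_i$} exactly when $v$ has received its confirmation broadcasts from all the clusters containing it in that procedure, i.e.\ when $v$ has learned that every node within distance $d$ is done with $\mathcal{P}_{i-1}$. Each node runs each $\mathcal{P}_i$ at most once, so the hypotheses of \Cref{theorem:convergecast} hold verbatim at every level. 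The overall procedure $\alg$ is: run $\mathcal{P}$, then run the procedure of \Cref{theorem:convergecast} on $\mathcal{P}_0$, then on $\mathcal{P}_1$, \dots, then on $\mathcal{P}_{\lceil\ell\rceil-1}$, and have each node output that its $d\ell$-neighborhood is done the moment it is done with $\mathcal{P}_{\lceil\ell\rceil}$.

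The first key step is a correctness claim, proved by induction on $i$: node $v$ is done with $\mathcal{P}_i$ if and only if every node $u$ with $dist(u,v) \le i\cdot d$ is done with $\mathcal{P}$. One direction is the triangle inequality: if all $d$-neighbors of $v$ are done with $\mathcal{P}_{i-1}$ and each such neighbor's knowledge concerns only its own $(i-1)d$-ball, then $v$'s knowledge concerns only nodes within $id$. The other direction uses shortest-path midpoints: given $u$ with $dist(u,v)\le id$, pick the vertex $w$ on a shortest $v$--$u$ path at distance $\min(d, dist(u,v))$ from $v$; then $dist(w,u)\le (i-1)d$, so $u$ being done with $\mathcal{P}$ is witnessed through $w$ being done with $\mathcal{P}_{i-1}$, which $v$ sees since $dist(v,w)\le d$. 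Taking $i=\lceil\ell\rceil\ge\ell$ then shows that being done with $\mathcal{P}_{\lceil\ell\rceil}$ certifies that the whole $d\ell$-neighborhood is done with $\mathcal{P}$.

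For the complexity bounds I would set this up as a sequence of stages for \Cref{lemma:sequential}: a preliminary stage $\stage_0$ = ``run $\mathcal{P}$'' (which by hypothesis completes by time $t$), followed by $\stage_1, \dots, \stage_{\lceil\ell\rceil}$, where $\stage_i$ is the \Cref{theorem:convergecast} procedure applied to $\mathcal{P}_{i-1}$. Stage $\stage_i$ depends only on $\stage_0,\dots,\stage_{i-1}$, since a node may begin participating in $\stage_i$ as soon as it individually knows it is done with $\mathcal{P}_{i-1}$, and the convergecast/broadcast of $\stage_i$ stays in a valid state under arbitrary delays of the remaining parts of $\stage_{i-1}$. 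Once $\stage_0,\dots,\stage_{i-1}$ are globally complete --- equivalently, all nodes are done with $\mathcal{P}_{i-1}$ --- \Cref{theorem:convergecast} guarantees $\stage_i$ finishes within $O(d\log^7 n)$ further time, so its isolated time complexity is $T_i = O(d\log^7 n)$. \Cref{lemma:sequential} then gives total time $t + \sum_{i=1}^{\lceil\ell\rceil} O(d\log^7 n) = t + O(d\ell\log^7 n)$ (using $\ell\ge 1$). The message bound is immediate: each of the $\lceil\ell\rceil$ convergecast invocations uses $O(m\log^4 n)$ extra messages by \Cref{theorem:convergecast}, for a total of $O(m\ell\log^4 n)$.

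The step needing the most care is the interaction \emph{between} iterations: the $(i+1)$-st convergecast reuses the same physical edges as the $i$-th, so the time bound of \Cref{theorem:convergecast} --- which only controls the internal parallelism of a single convergecast via \Cref{corollary:parallel} --- cannot be summed naively, and it is precisely \Cref{lemma:sequential}, with its priority-by-stage scheduling of shared edges, that legitimizes charging only $O(d\log^7 n)$ per level rather than $O(d\ell\log^7 n)$. I would also check the edge cases: $\ell < 1$ is handled by a single application of \Cref{theorem:convergecast}, non-integer $\ell$ is handled by using $\lceil\ell\rceil$ with $d\lceil\ell\rceil = O(d\ell)$ for $\ell\ge 1$, and outputting when the slightly larger $\lceil\ell\rceil d$-ball is done is still a sound realization of ``learning when the $d\ell$-neighborhood is done'' (it is never premature, and the time bound already absorbs the rounding).
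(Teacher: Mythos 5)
Your proposal is correct and follows essentially the same route as the paper: define $\mathcal{P}_i$ as ``knowing my $d\cdot i$-neighborhood is done with $\mathcal{P}$,'' iterate the procedure of \Cref{theorem:convergecast} for $\ell$ sequential stages, and charge $O(d\log^7 n)$ isolated time per stage via \Cref{lemma:sequential} and $O(m\log^4 n)$ messages per stage. The extra details you supply (the shortest-path-midpoint induction for soundness and the explicit handling of non-integer $\ell$) only flesh out steps the paper leaves implicit.
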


\begin{proof}

Let process $\mathcal{P}_\ell$ be: ``knowing that my $d\cdot \ell$-neighborhood is done with $\mathcal{P}$''. Then $v$ can finish process $\mathcal{P}_{\ell+1}$ by collecting the information about completion of $\mathcal{P}_\ell$ in its $d$-cover. So, in addition to waiting until nodes complete the initial process $\mathcal{P}$, we sequentially run $\ell$ stages. In the $i$-th stage, we make all nodes be done with $\mathcal{P}_i$.

We then apply the same argument as in \cref{theorem:convergecast}, and get that the entire algorithm will be over by time $t + O(\ell\cdot d\log^3{n} \cdot \log^4{n}) = t + O(d\ell\log^7{n})$. As we do $\ell$ stages, we have only $O(m\ell\log^4{n})$ extra messages.
%
\end{proof}


\subsection{Registration in clusters}
\label{subsec:registration}

\subsubsection{The Abstraction}

In \Cref{SUBSEC:BALLCOLLECTING}, we considered a process $\mathcal{P}$ such that each node $v$ eventually learns that $v$ is done with $\mathcal{P}$, i.e., that either $v$ participated in $\mathcal{P}$ and it has terminated, or $v$ learned that $v$ will not run process $\mathcal{P}$. Here, we discuss a somewhat similar setup, but with a critical difference: a node $v$ does not know whether it will participate in $\mathcal{P}$ or not, and it may never participate in process $\mathcal{P}$. In an informal sense, now node $v$ is done with $\mathcal{P}$ only if it participates and has terminated process $\mathcal{P}$. We would again like to gather information about nodes that are done with $\mathcal{P}$ (this will be primarily relevant only for other nodes that are done with $\mathcal{P}$ themselves). The precise properties that we desire are more subtle and are described below.

\begin{definition}[\textbf{Registration Abstraction}]
\label{registrationAbstraction}
    Consider some cluster $C$, whose nodes are performing some process $\mathcal{P}$. We will use the following abstraction to collect the information about the completion of the processes $\mathcal{P}$. When node $v$ wants to start the process $\mathcal{P}$, it will \textbf{register} in the cluster $C$. When node $v$ has finished the process $\mathcal{P}$, it will \textbf{deregister} in the cluster $C$ and wait until $C$ sends it a message \textbf{$Go\_Ahead$}, allowing it to proceed to the next steps of the algorithm. We emphasize that a node does not necessarily know in advance whether it will take part in $\mathcal{P}$ or not; it might end up participating in it because of receiving some message.
\end{definition}

Ideally, we would want node $v$ to receive $Go\_Ahead$ after all the nodes who are going to register in the cluster have already registered and deregistered. That is, we do not want $v$, who has participated in $\mathcal{P}$, to proceed to the next steps of the algorithm if some node from some cluster $C$ that includes $v$ is still performing $\mathcal{P}$ and has not terminated $\mathcal{P}$. We will state the precise desired guarantees soon.

Let us also briefly discuss the desired complexities: Intuitively, we also desire that each registration/deregistration operation takes time only proportional to the height of the cluster tree, and that the message complexity of the solution is proportional to the number of the nodes that register/deregister (up to a factor of the height of the cluster), but critically not proportional to all the nodes/edges in the entire graph. We discuss the precise complexity bounds of our solution in \Cref{subsec:registrationAnalysis}.

\paragraph{Is a simple convergecast/broadcast enough?} We could try to collect the above information with a convergecast in the tree: node gathers the information about registration/deregistration inside its subtree and passes it to its parent. When everyone who is going to register has already deregistered, the root would issue a $Go\_Ahead$, and propagate it to all the nodes who have participated in the registration process. Unfortunately, this is not possible in this simple way: a node does not know in advance whether it will perform $\mathcal{P}$ or not. Hence, it cannot know in advance whether it will register or not, and its parents do not know in advance whether their subtree is done running $\mathcal{P}$ or not, and simply waiting will not solve the problem. This is because it is possible that one node never participates in $\mathcal{P}$ and will never know that it will not participate.

\paragraph{Provided guarantees by the registration process.} We will ask for two guarantees from this registration process:

\begin{restatable}[\textbf{Register Guarantee 1}]{lemma}{registerGuaranteeOne}
\label{lemma:guarantee1}
    When node $v$ receives $Go\_Ahead$, all the nodes that have registered before $v$ has deregistered are already deregistered.   If the distance from the node $v$ to the root $r$ is $h$, then both its registration and deregistration take $O(h)$ time and messages.
\end{restatable}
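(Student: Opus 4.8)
I would realize the abstraction by a counter convergecast along the cluster tree together with a gate on the $Go\_Ahead$ message at the root $r$. Each tree node $w$ maintains a counter holding the number of currently registered nodes in its subtree, as far as $w$ has heard. When a node $v$ registers, it emits a \emph{register token} that is forwarded up the unique tree path $v\to r$; every node on the path increments its counter before forwarding, the root increments a global counter $cnt$ and inserts $v$ into a multiset $R$ of currently open registrations, and an acknowledgment is returned down the same path to $v$. Deregistration is symmetric: $v$ emits a \emph{deregister token} up the path, counters are decremented, the root decrements $cnt$ and removes $v$ from $R$, and \emph{at that instant} it records the snapshot $R_v := R\setminus\{v\}$ as $v$'s blocking set. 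The root then holds $v$'s $Go\_Ahead$ until every node of $R_v$ has deregistered --- in a bandwidth-friendly variant it suffices to just wait until $cnt$ next equals $0$ --- and only then sends $Go\_Ahead$ down the path to $v$. Since all messages of a single registration or deregistration of $v$ travel only along the $v\to r$ path (plus an acknowledgment along the same path) and each hop is one time unit in the normalized asynchronous model, each of these two operations uses $O(h)$ messages and completes in $O(h)$ time; this gives the second sentence of the lemma. The subsequent wait for $Go\_Ahead$ is not part of the deregistration operation, and its length is controlled separately.

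For the first sentence, I would read ``$u$ has registered before $v$ has deregistered'' through the order in which the root \emph{processes} these tokens --- the notion actually available in an asynchronous system. Let $t^\star$ be the moment the root processes $v$'s deregister token. A node $u$ then has ``registered before $v$ deregistered but is not yet deregistered'' precisely when the root has processed $u$'s register token by time $t^\star$ but not $u$'s deregister token --- that is, precisely when $u\in R_v$. By construction the root does not release $v$'s $Go\_Ahead$ until every node of $R_v$ has deregistered. Hence when $Go\_Ahead$ is issued to $v$, and a fortiori when $v$ receives it, every node that registered before $v$ deregistered has deregistered: those in $R_v$ because of the withholding rule, and those not in $R_v$ because they had already deregistered by time $t^\star$.

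I expect the real work, and the point where the prior works were imprecise, to be the interplay between this processing-order reading of ``before'' and the real-time one. Because register tokens may be delayed adversarially, a node $u$ can start registering earlier in real time than $v$ deregisters yet have its token reach the root after $t^\star$, so the snapshot $R_v$ by itself does not certify the real-time statement. Upgrading to the real-time version --- which is what the users of this abstraction ultimately need, so that a node cannot advance past a phase while a neighbor is still sending it messages of that phase --- requires tying the causal chain by which nodes come to participate in $\mathcal{P}$ to the registration acknowledgments (a node performs no externally visible action, in particular sends no message that could trigger another node's registration, before its own registration has been acknowledged), and then checking, over every adversarial interleaving of message deliveries, that this discipline forces the root to have recorded $u$'s registration before it processes $v$'s deregistration whenever $u$ actually started before $v$ finished. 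Making that case analysis airtight --- and being explicit about which notion of ``before'' each downstream lemma relies on --- is the delicate part.
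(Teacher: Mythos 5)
Your scheme is essentially the ``natural attempt'' that the paper describes and explicitly rejects: every registering or deregistering node sends its own token all the way to the root and receives an individual acknowledgment (and, in your version, an individually withheld $Go\_Ahead$). The correctness part of your argument is plausible --- with ``registered'' read as ``registration acknowledged,'' your root-side snapshot $R_v$ does certify the real-time statement, and your closing discussion of processing-order versus real-time ``before'' is the right thing to worry about. The genuine gap is the complexity claim. In this model a node may have only one message in flight per edge (it must wait for the acknowledgment before injecting the next message), so if $\Theta(n)$ nodes in the subtree below a single tree edge $e$ register concurrently, their $\Theta(n)$ distinct tokens must cross $e$ one per time unit. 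An individual registration can therefore be delayed by $\Omega(n)$ time even when the tree height $h$ is polylogarithmic, so the ``registration and deregistration take $O(h)$ time'' part of the lemma fails under concurrency; this congestion is exactly the flaw the paper attributes to the original Awerbuch--Peleg formulation. The same funneling also breaks the downstream accounting: the later analyses (e.g.\ the bound on the isolated time of a pulse stage and the message bounds in Register Guarantee 2) need each cluster-tree edge to carry $O(1)$ registration-related messages per stage, not one message per registrant beneath it, and your per-node $Go\_Ahead$s sent individually down each path have the same problem.

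The paper avoids this by making the information that travels up the tree a single bit per edge rather than a token per node: a registration wave marks edges \emph{dirty} and stops at the first ancestor that is already ``finished,'' so concurrent registrations merge and each edge changes state $O(1)$ times per epoch; deregistration erases dirty marks (turning them ``waiting'') and stops at the first node with another dirty child; and one aggregated $Go\_Ahead$ wave flows down the waiting edges. Correctness is then proved not by a root-side snapshot but by invariants of these marks (the whole path of a registered node stays dirty until it deregisters; a $Go\_Ahead$ received by $v$ must have been issued after $v$ deregistered), from which Guarantee 1 follows. If you want to salvage your counter idea, you would have to add exactly this kind of in-tree aggregation (intermediate nodes combining many children's tokens into one message per edge), at which point you are reconstructing the dirty/waiting-edge mechanism.
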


\begin{restatable}[\textbf{Register Guarantee 2}]{lemma}{registerGuaranteeTwo}
\label{lemma:guarantee2}
     Suppose eventually no new registration happens, and all nodes who have registered have deregistered. Then, if the height of the tree is $h$,  then every node that has ever registered will receive $Go\_Ahead$ in time $O(h)$. The total number of messages spent on sending $Go\_Ahead$s is proportional to the total number of messages spent on registration and deregistration.
\end{restatable}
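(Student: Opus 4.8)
The plan is to design the registration process so that registration/deregistration information is convergecast up the cluster tree, and each node maintains a counter of how many of its descendants (including itself) are currently registered-but-not-yet-deregistered. When a node registers, it sends a $+1$ token toward the root along its tree path; when it deregisters, it sends a $-1$ token the same way. Each internal node aggregates these so that, after all tokens it will ever see have passed through, its counter correctly equals the number of still-active descendants. The key subtlety, as the paper's discussion of ``is a simple convergecast enough'' flags, is that a node never learns that it will \emph{not} register, so counters can be ``stuck'' at a nonzero value forever from the perspective of liveness — but under the hypothesis of Guarantee 2, eventually no new registration happens and every registered node deregisters, so every counter does eventually settle to $0$. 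The moment the root's counter first reaches $0$ \emph{after} having been positive (or more carefully: once the root has observed, via the net flow of tokens, that its subtree has no active registrations and no in-flight registration tokens), it issues $Go\_Ahead$ and broadcasts it down the tree to all nodes that have ever registered.

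First I would fix the invariant precisely: for each node $u$, let $c_u$ be the running sum of $\pm 1$ tokens $u$ has forwarded/absorbed from its subtree; I would argue by induction on tree height that, once all tokens generated by subtree-of-$u$ events have reached $u$ (which happens eventually under the Guarantee-2 hypothesis, since only finitely many registrations ever occur and each produces exactly one $+1$ and one $-1$), we have $c_u = 0$. In particular $c_r = 0$ eventually at the root. Then I would specify the $Go\_Ahead$ trigger carefully so that the root does not fire prematurely while registration tokens are still in flight — e.g., by having the root fire $Go\_Ahead$ only in response to observing its counter hit $0$, and noting that any later registration would drive the counter positive again and cause a fresh $Go\_Ahead$ wave (so correctness of Guarantee 1 is unaffected and Guarantee 2's liveness holds because the \emph{final} time the counter hits $0$ produces a wave that is never superseded). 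The $Go\_Ahead$ broadcast is a standard top-down propagation along the cluster tree, reaching every ever-registered node; because these nodes keep their tree-path edges ``open'' (they are waiting for $Go\_Ahead$), the broadcast reaches each such node within time proportional to the tree height $h$ after it is launched, and the launch itself happens within $O(h)$ after the system quiesces (the last $-1$ token needs $O(h)$ time to climb to the root). That gives the claimed $O(h)$ time bound.

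For the message-complexity claim, I would observe that every message in the scheme is one of: a $+1$ token (one per registration, traveling $\le h$ hops), a $-1$ token (one per deregistration, traveling $\le h$ hops), or a $Go\_Ahead$ message. The $Go\_Ahead$ messages travel down only those tree edges that lie on the path from the root to some ever-registered node — i.e., the union of root-to-leaf paths of registered nodes — and each such edge carries $O(1)$ $Go\_Ahead$ messages per wave; since the number of waves is bounded by the number of distinct times the root counter rises from $0$, which is at most the number of registrations, the total $Go\_Ahead$ traffic is $O(h)$ per registration in the worst case, hence proportional (up to the height factor already absorbed in registration/deregistration cost) to the total registration/deregistration message count. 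I would then state the bound as ``$O(1)$ factor of the registration+deregistration messages,'' matching the lemma, and remark that in the intended usage each cluster runs this once so only one $Go\_Ahead$ wave occurs.

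The main obstacle I expect is \textbf{pinning down the $Go\_Ahead$ trigger so that it is simultaneously safe and live under all asynchronous interleavings}: the root must not send $Go\_Ahead$ while a registration token is still climbing (else it violates Guarantee 1), yet it cannot ``know'' no such token exists. The clean fix is to make the trigger idempotent and re-armable — fire whenever the root counter transitions to $0$, accept that there may be several waves, and prove (i) safety for each wave via the counter invariant at the instant of firing (every node that registered before $v$ deregistered has, at that instant, a path of counters reflecting its deregistration — this is where I'd need the most careful argument, possibly strengthening the token protocol so that a $-1$ is forwarded only after the matching subtree is fully settled), and (ii) liveness because the last-ever transition-to-$0$ produces a terminal wave reaching everyone. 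Getting invariant (i) airtight against reordering of $+1$/$-1$ tokens on a shared edge is the technical heart; I would handle it by having each node forward tokens in FIFO order per child and tracking per-child counters rather than a single aggregate, so that the root's knowledge is a monotone-in-the-right-sense function of what has provably happened below.
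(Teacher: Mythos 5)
Your proposal does not prove the lemma as stated: Lemma~\ref{lemma:guarantee2} is a guarantee about the specific registration/deregistration protocol the paper defines (dirty and waiting edge marks, with $\reg$, $\dereg$, and $\go$ as recursive wave procedures), whereas you replace that protocol by a counter/token convergecast of your own design and argue about that instead. One could in principle accept a different implementation of the Registration Abstraction, but then you must re-establish \emph{both} guarantees and the stated complexities for your protocol, and this is exactly where the gaps are. First, the congestion issue that the paper explicitly raises against the ``natural attempt'' reappears: if each registration emits a $+1$ token that must physically reach the root, then $\Theta(n)$ registrants below one tree edge force $\Omega(n)$ traffic through it, so even after quiescence the last $-1$ token may need $\Omega(n)$ time to drain, not $O(h)$; if instead tokens are aggregated at internal nodes (as you say in one place but contradict when counting ``one token per registration traveling $\le h$ hops''), then you owe a precise account of when a node counts as \emph{registered} and how Guarantee~1's per-operation $O(h)$ time/message bound survives, which you do not give. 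Second, your safety trigger is exactly the unresolved heart of the matter: with a plain counter, a $+1$ token still in flight lets the root's counter hit $0$ and fire $Go\_Ahead$, violating Guarantee~1 for the node whose token is delayed; you acknowledge this and defer it (``possibly strengthening the token protocol so that a $-1$ is forwarded only after the matching subtree is fully settled''), but that strengthening is precisely what the paper's dirty/waiting marks implement, and without it your Guarantee~2 argument rests on an unproven protocol.

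Third, the message-complexity claim does not follow from your accounting. You allow repeated $Go\_Ahead$ waves (one per return of the root counter to $0$), and each wave may traverse the union of root-to-registrant paths; with $R$ registrations this union can have up to $R\cdot h$ edges, so $R$ waves give up to $R^2 h$ $Go\_Ahead$ messages, which is not proportional to the $O(R\cdot h)$ registration/deregistration messages. The paper's proof avoids this by a charging argument with no analogue in your scheme: a $Go\_Ahead$ is propagated only along \emph{waiting} edges and each such message removes a waiting mark, and waiting marks are only ever created by deregistration steps, so $Go\_Ahead$ traffic is dominated by deregistration traffic. Likewise, the paper's liveness/time argument is an invariant (``good state'': the path from $v$ upward is waiting up to either the root or a dirty edge) preserved under every possible asynchronous event, so the final dirty-to-waiting conversion at a child of the root launches a $Go\_Ahead$ that provably reaches $v$ within $O(h)$. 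To salvage your route you would need (i) a precise aggregation rule with per-child bookkeeping and a proof that registration acknowledgments still cost $O(h)$ per node, (ii) a hold-back rule for $-1$ tokens that restores Guarantee~1, and (iii) an edge-marking or charging mechanism that caps $Go\_Ahead$ traffic per deregistration --- at which point you have essentially reconstructed the paper's dirty/waiting scheme.
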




\paragraph{A natural attempt.} A natural way to implement the operations of registration, deregistration, and sending $Go\_Ahead$ would be to send all the messages directly to the cluster root, as follows:

\begin{itemize}

\item

To register node $v$ in the cluster $C$, send a registration message to the root of $C$ with the id of $v$ and let the root remember it. Once $v$ receives a response back from the root, it is considered registered.

\item

To deregister node $v$, send a deregistration message to the root, again with the id of $v$, and let the root remember it. Once $v$ receives a response back from the root, it is considered deregistered.

\item

When the root detects that all registration messages have been matched with all deregistration messages, it will issue a $Go\_Ahead$ and propagate down the tree to all the nodes that are waiting for it.

\end{itemize}

Indeed, this is essentially the version of registration and deregistration described in the pioneering work of Awerbuch and Peleg~\cite{Awerbuch1990NetworkOverhead}. Unfortunately, this approach suffers from a congestion problem, and can require $\Omega(n)$ time even if the cluster tree has a very small height. The reason is this: consider an edge $e$ connecting $v$ to its parent $u$, and suppose we have many---up to $\Theta(n)$---nodes in the subtree below $v$ who want to register. All of their registration messages have to pass through $e$, and this requires $\Omega(n)$ time.

\paragraph{The right approach.} We will use a different approach, which is close to the variant described in \cite{Awerbuch1992AdaptingNetworks}. Although, we will need to adjust a part of their algorithm and provide a highly non-trivial correctness proof. Unfortunately, no such proof or even guarantee statement is provided in \cite{Awerbuch1992AdaptingNetworks}.  Intuitively, we will collect the global OR of the registration and deregistration status in the cluster, using processes similar to convergecast and broadcast. We will do this by marking some edges as \textbf{dirty} during the registration process. Since this OR computation does not occur momentarily, and asynchrony might change the timing of different parts of the OR computation, we need careful statements for the algorithm and the proof of its guarantees. Below we delve into the details of this process.

\subsubsection{Registration}
When a node $v$ registers in the cluster $C$, its goal is to ensure that all the edges on the path from $v$ to the cluster root $r$ are marked as dirty. Intuitively, this will be by starting a wave of messages that go toward the root and come back to the node. The precise description needs more care. Let us call a node $u$ \textbf{finished} only once it knows that all the edges from $u$ to $r$ are marked as dirty. Therefore, the registration of the node $v$ once it becomes finished.

The registration of the node $v$ is a recursive procedure, and we call it $\reg(v)$. When node $v$ wants to register, it will invoke $\reg(v)$. After it's over, it will mark itself as \textbf{registered}. Now, let us get to the details of the procedure $\reg$: suppose that we invoked it in a node $u$.

If $u$ already is finished (either because it's a root $r$, which is always finished, or because it became finished during the invocation of this procedure for some other node), then $\reg(u)$ terminates immediately. Otherwise, let $par$ denote the parent of $u$ in the cluster tree. If the edge $(par, u)$ is not dirty, $u$ will mark it as dirty (and notify $par$ of this, sending the corresponding message). Then, it will invoke $\reg(par)$ by sending a message to $par$, and wait until $par$ notifies it that $\reg(par)$ is over. After $par$ notifies $u$ that $\reg(par)$ is over, $u$ knows that $u$ is also finished. It then notifies any children that had invoked $u$, and then $\reg(u)$ is over. 

\subsubsection{Deregistration}

Intuitively, the goal of the deregistration of node $v$ is to ``cancel'' the effect of the registration of $v$. Again, the intuitive thing would be to send a wave of messages from $v$ toward the cluster root $r$, now erasing dirty marks along the path (until reaching a node that has another child edge that remains dirty).  The precise description needs more care. 

When node $v$ wants to deregister, it immediately marks itself as \textbf{deregistered}. Then, node $v$ starts the recursive deregistration procedure $\dereg(v)$. When $\dereg(u)$ is invoked at a node $u$, this node starts by looking at the edges from $u$ to its direct children. If any of them are dirty, $\dereg(u)$ is immediately terminated. If $u = r$, or if $u$ is still registered, then $\dereg(u)$ is also immediately terminated. Otherwise, let $par$ denote the parent of $u$ in the cluster tree. As we are deregistering some descendants of $u$, the entire path from $u$ to the root is dirty, and, in particular, the edge $(par, u)$ is dirty. Then, $u$ does the following three actions:

\begin{itemize}

\item Remove the dirty mark from the edge $(par, u)$, and mark it as \textbf{waiting} instead.

Intuitively, a waiting edge indicates that a $Go\_Ahead$ has to be sent through it. 

\item Remove the finished mark from node $u$ (because edge $(par, u)$ is not dirty anymore).

\item Inform $par$ to start $\dereg(par)$.

\end{itemize}

\subsubsection{Sending {$Go\_Aheads$}}
As soon as the cluster root $r$ receives an update that the edge from $r$ to one of its children has had its dirty mark removed, it does this: if now all the edges from $r$ to its children are not dirty, root $r$ issues a $Go\_Ahead$. It will start propagating this $Go\_Ahead$ through the waiting edges with a recursive procedure $\go$, by invoking $\go(r)$. 

Let the current node in which $\go$ is invoked be $u$. If $u$ has deregistered, but it has not received a $\go$ yet, once it receives the $\go$, then node $u$ is immediately marked as \textbf{free}: Here, \textbf{free} means that node $u$ has received the $Go\_Ahead$ that it was waiting for and can proceed to the next stages of the algorithm. 

Any node $u$ in which $\go$ is invoked also may need to propagate the $Go\_Ahead$ to some of its children. Let $ch$ be some particular child of $u$. Node $u$ propagates $Go\_Ahead$ to child $ch$ only if the edge $(u, ch)$ is waiting. If the edge $(u, ch)$ is not waiting (e.g., if it stopped being waiting before $u$ has propagated the $Go\_Ahead$ to $ch$), then $u$ will not propagate this $Go\_Ahead$ to its child $ch$. When child $ch$ receives this propagating message, it will invoke $\go(ch)$.

\subsubsection{Analysis}
\label{subsec:registrationAnalysis}

In this subsection, we prove that the register guarantees mentioned earlier hold, and bound the time and message complexities of registration, deregistration, and sending $Go\_Aheads$.

\begin{lemma}
    \label{lemma:monotonicity}
    If $ch$ is a child of $v$ in the cluster tree, and $ch$ is finished, then $v$ is also finished. 
\end{lemma}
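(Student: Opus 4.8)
The plan is to establish \Cref{lemma:monotonicity} by carefully unwinding the definitions of \emph{finished} and \emph{dirty} together with the recursive structure of $\reg$ and $\dereg$. Recall that a node $u$ is \emph{finished} exactly when it knows that all edges on the tree path from $u$ to the root $r$ are marked dirty. So the statement to prove is: if $ch$ is a child of $v$ and $ch$ is currently finished, then $v$ is currently finished. The natural approach is to argue that (i) if $ch$ is finished, then in particular the edge $(v, ch)$ is dirty and the whole path from $v$ to $r$ is dirty, and (ii) $v$ itself has already received the confirmation that makes it finished. The first part is almost immediate from the definition, since the path from $ch$ to $r$ contains the path from $v$ to $r$ as a subpath (plus the edge $(v,ch)$). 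The subtle part is (ii): being finished is not just a global fact about edge states but a piece of \emph{local knowledge}, so I need to trace how $ch$ could have come to know this.

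The key observation is to follow the only mechanism in $\reg$ by which a node becomes finished: a node $u$ (other than $r$) becomes finished precisely when its parent $par$ notifies it that $\reg(par)$ is over, and at that moment $par$ is itself finished (the root is always finished, and any non-root node notifies its children only after it has itself become finished). So if $ch$ became finished, it must have been notified by its parent $v$ that $\reg(v)$ was over, and at that moment $v$ was finished. I then need to argue \textbf{monotonicity of the finished status in time}, or at least that $v$ cannot have \emph{lost} its finished mark between that notification and the present moment while $ch$ retains its own. This is where \dereg\ enters: the finished mark of $v$ is removed only in $\dereg(v)$, and only after the edge $(v, \mathit{par}(v))$ has its dirty mark removed; but crucially, $\dereg(u)$ at a node $u$ terminates immediately if any edge from $u$ to a child is dirty. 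I would argue that as long as $ch$ is finished, the edge $(v, ch)$ is dirty (since that edge is on the path from $ch$ to $r$, and $ch$'s finished status asserts that path is dirty as far as $ch$ knows — one must be slightly careful that $ch$'s knowledge is not stale, but a $\dereg$ that unmarks $(v,ch)$ would first require $ch$ to have a non-dirty child edge and to be deregistered, and would propagate upward, so one can show $ch$ would not still regard itself as finished). Hence any $\dereg(v)$ invocation sees the dirty child edge $(v,ch)$ and terminates before touching $v$'s finished mark. Therefore $v$ retains its finished status.

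Concretely, the steps I would carry out, in order, are: (1) State and use the invariant that a non-root node $u$ carries the finished mark only if it was last notified by its parent that $\reg(par)$ completed, and that such a notification is sent only while $par$ is finished; (2) observe that the finished mark of $v$ is removed only inside $\dereg(v)$, which first removes the dirty mark of $(\mathit{par}(v), v)$, and that $\dereg(v)$ exits immediately if any child edge of $v$ is dirty; (3) show that while $ch$ is finished, the edge $(v, ch)$ is dirty — using that $(v,ch)$ lies on $ch$'s path to $r$, and that the only way $(v,ch)$ gets unmarked is via $\dereg(ch)$-style upward propagation which would have cleared $ch$'s own finished status first (so a currently-finished $ch$ implies a currently-dirty $(v,ch)$); (4) combine: since $(v,ch)$ is dirty, no active $\dereg(v)$ can have removed $v$'s finished mark since the time $ch$ was told $\reg(v)$ finished, so $v$ is finished now. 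A short induction up the path from $v$ to $r$, or simply invoking the definition of finished for $ch$ to conclude the whole path from $v$ to $r$ is dirty, closes the argument.

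The main obstacle I anticipate is step (3), i.e., ruling out \emph{stale knowledge}: $ch$ believing itself finished while the edge $(v,ch)$ has secretly been un-dirtied by a concurrent deregistration. Handling this requires a clean statement of what ``$ch$ is finished'' means as a monotone-until-deregistration property and a careful look at the order of operations in $\dereg$ — specifically that $\dereg(ch)$ removes $ch$'s finished mark \emph{before} (or as part of the same atomic step as) converting $(\mathit{par}(ch), ch)$ from dirty to waiting, so there is never a moment where $ch$ is finished but $(v, ch)$ is non-dirty. If the code is read as: ``$\dereg(ch)$ first checks its child edges, then removes the dirty mark of $(v,ch)$, marks it waiting, removes $ch$'s finished mark, then informs $v$,'' the dangerous window is between ``mark $(v,ch)$ waiting'' and ``remove $ch$'s finished mark'' — I would argue this window is purely local computation at $ch$ with no intervening message, hence atomic for our purposes, or else reorder the reasoning to key off the waiting mark rather than the finished mark. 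Once that atomicity point is pinned down, the rest is a routine consequence of the definitions.
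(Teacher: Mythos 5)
Your proposal is correct and follows essentially the same route as the paper's proof: $ch$ becomes finished only upon learning that $v$ is finished, and $v$ can lose its finished mark only inside $\dereg(v)$, which terminates immediately because the child edge $(v,ch)$ stays dirty for as long as $ch$ remains finished. Your extra care about stale knowledge and the atomicity of the local steps in $\dereg(ch)$ is a reasonable elaboration of a point the paper leaves implicit, but it does not change the argument.
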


\begin{proof} 
    Let us look at a maximal contiguous interval of time such that $ch$ remains finished during this entire time interval. Node $ch$ can become finished only after getting to know that $v$ is finished. Now, we have to show that $v$ remains finished until the finished mark of $ch$ is removed. Notice that the only way node $v$ might stop being finished is if $\dereg(v)$ is invoked. However, since the edge $(v, ch)$ remains dirty, $\dereg(v)$ will be immediately terminated. 
\end{proof}

\begin{lemma}
    \label{lemma:path}
    From the moment when $v$ is registered, till the moment when $v$ is deregistered, all the nodes on the path from $v$ to $r$ are finished, and all the edges on the path from $v$ to $r$ are dirty.
\end{lemma}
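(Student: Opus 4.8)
The plan is to combine \Cref{lemma:monotonicity} with a bookkeeping argument over the procedures $\reg$, $\dereg$, $\go$: identify the only events that can ever remove a \textbf{dirty} mark from an edge or a \textbf{finished} mark from a node, and show that every such event is blocked on the path from $v$ to $r$ for as long as $v$ is registered.

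First I would record two structural facts read off directly from the procedure descriptions. (i) The dirty mark of an edge $(\mathit{par}(u),u)$ can be removed only by an invocation of $\dereg(u)$ at $u$ that survives its initial checks: $\reg$ only adds dirty marks, $\go$ only concerns \textbf{waiting} edges, and the action ``remove the dirty mark from $(\mathit{par}(u),u)$'' appears solely inside $\dereg(u)$, after it has verified $u\neq r$, that $u$ is not currently registered, and that $u$ has no dirty child edge. (ii) For the same reason, the finished mark of a node $u$ can be removed only by an invocation of $\dereg(u)$ at $u$ that survives those same checks. Consequently, whenever $\dereg(u)$ is invoked at a moment when $u$ is registered, or when $u$ has a dirty child edge, the call terminates with no effect.

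Now fix the interval $I$ during which $v$ is registered, with $t_1$ its left endpoint (the moment $\reg(v)$ returns and $v$ marks itself registered), and let $v=u_0,u_1,\dots,u_\ell=r$ be the path from $v$ to $r$ in the cluster tree, $u_{i+1}=\mathit{par}(u_i)$. At $t_1$ the node $v$ is finished, which by definition means $v$ has certified that all edges from $v$ to $r$ are dirty; hence every path edge is dirty at $t_1$, and applying \Cref{lemma:monotonicity} repeatedly up the path shows every $u_i$ is finished at $t_1$. I then induct along the path from $v$ towards $r$, showing that for each $i$ the edge $(u_{i+1},u_i)$ stays dirty throughout $I$ and $u_i$ stays finished throughout $I$. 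Base ($i=0$): throughout $I$ the node $u_0=v$ is registered, so by the structural facts (i), (ii) every $\dereg(v)$ invoked at $v$ during $I$ is a no-op; thus neither $v$'s finished mark nor the dirty mark of $(u_1,u_0)$ is ever removed during $I$, and since both are present at $t_1$ they persist. Step: assume $(u_{i+1},u_i)$ is dirty throughout $I$; this is a child edge of $u_{i+1}$, so every $\dereg(u_{i+1})$ invoked at $u_{i+1}$ during $I$ is a no-op, hence $u_{i+1}$'s finished mark and the dirty mark of $(u_{i+2},u_{i+1})$ (when $i+1<\ell$) are never removed during $I$ and, being present at $t_1$, persist; when $i+1=\ell$ the node $r$ is finished by fiat and there is no edge above it. This is precisely the claim.

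The one delicate point I anticipate is justifying that $v$ finished at $t_1$ really forces every path edge to be dirty \emph{at} the instant $t_1$, i.e.\ that the certificate encoded by \textbf{finished} is not stale. This holds because $\reg(v)$ installs the dirty marks from $v$ upward: while $\reg(v)$ is still running, each node already reached on the path has a dirty child edge lying on that same path, so it cannot run $\dereg$ to completion, and therefore no path edge can lose its dirty mark between the time $\reg$ sets it and the time $v$ becomes finished. Everything else is a mechanical check against the definitions of $\reg$, $\dereg$, $\go$, plus the single use of \Cref{lemma:monotonicity}.
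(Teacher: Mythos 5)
Your proof is, at its core, the paper's proof in a different order: both arguments rest on exactly the same three blocking facts (a registered node blocks $\dereg$ at itself, a dirty child edge blocks $\dereg$ at the parent, and the dirty mark of $(\mathit{par}(u),u)$ and the finished mark of $u$ can only disappear together inside a successful $\dereg(u)$), together with \Cref{lemma:monotonicity}. The paper keeps $v$ finished throughout the interval via the registered-check, pushes finishedness up the path by applying \Cref{lemma:monotonicity} over the whole interval, and then reads dirtiness off the finished marks; you instead take a snapshot at the registration instant $t_1$ (using \Cref{lemma:monotonicity} only there) and then propagate the dirty marks up the path by induction, with the finished marks carried along. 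This restructuring is fine and the main induction is correct given the $t_1$ snapshot.

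The soft spot is the part where you try to be more careful than the paper: the claim that all path edges are dirty at the instant $t_1$, justified by ``each node already reached on the path has a dirty child edge lying on that same path, so it cannot run $\dereg$ to completion.'' That justification does not cover the bottom of the path. During $\reg(v)$ the node $v$ is not yet registered and need not have any dirty child edge, so a $\dereg(v)$ triggered by a deregistering descendant (in particular by an old, adversarially delayed ``inform parent'' message from a child that un-dirtied its edge to $v$ earlier) is not blocked by any of the checks you list, and could a priori strip the dirty mark that $\reg(v)$ just placed on $(\mathit{par}(v),v)$ before $v$ becomes finished. Ruling this out is not a local, mechanical check; it needs a global argument about the order of registrations and deregistrations in $v$'s subtree (every such trigger stems from a descendant whose registration wave either returned through $v$ before $v$'s window or completes only together with $v$'s own registration), or a strengthening of the invariant being inducted on. To be fair, the paper's own proofs of \Cref{lemma:monotonicity} (``since the edge $(v,ch)$ remains dirty'') and of this lemma (``all nodes finished, hence all edges dirty'') silently rely on the very same non-staleness fact without arguing it, so your write-up is no less rigorous than the paper's---but the step you flag as ``the one delicate point'' is not actually discharged by the reason you give.
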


\begin{proof}
    Note that $v$ remains finished during this entire time interval: it is finished when it just registered, and it can stop being finished only during $\dereg(v)$, which will be immediately terminated if $v$ is still registered. Then, by the \cref{lemma:monotonicity}, the parent of $v$ will also remain finished during this time interval, as well as the parent of the parent of $v$, and so on: all the nodes on the path from $v$ to $r$ will remain finished. As all nodes on this path are finished during the entire interval, all the edges on this path remain dirty during this interval.
\end{proof}

\begin{lemma}
    \label{lemma:notalldirty}
    Suppose that node $v$ received $Go\_Ahead$. From the moment when it was issued by the root $r$, till the moment when $v$ received it, at least one edge on the path from $v$ to $r$ is not dirty.
\end{lemma}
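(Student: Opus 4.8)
The plan is to argue by tracking the edges of the path $P$ from $v$ to $r$ during the time interval from when the root issues the relevant $Go\_Ahead$ to when $v$ becomes free. The key observation is that a $Go\_Ahead$ propagates only along \textbf{waiting} edges, and an edge becomes waiting precisely when its dirty mark is removed during some $\dereg$ call; moreover, while the $Go\_Ahead$ is traveling down $P$, the portion of $P$ it has already traversed consists of edges that were waiting at the moment of traversal, hence not dirty at that moment. First I would fix the moment $t_0$ at which $r$ issues this $Go\_Ahead$; by the $Go\_Ahead$-sending rule, at time $t_0$ the edge from $r$ to its child on $P$ has just had its dirty mark removed, so that edge is not dirty at $t_0$. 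I would then consider the frontier of the $Go\_Ahead$ wave on $P$: let $u_i$ be the node on $P$ at distance $i$ from $r$ that has most recently received the propagating $\go$ message. The claim to maintain is that at all times in $[t_0, t_{\text{free}}]$, the edge immediately above the current frontier node $u_i$ (i.e.\ the edge $(u_{i-1}, u_i)$) is not dirty — because $u_i$ received the $\go$ only across a waiting edge, and an edge cannot be simultaneously waiting and dirty (the deregistration procedure removes the dirty mark before setting the waiting mark, and nothing re-dirties an edge except a fresh $\reg$ wave passing through it, which would require a node below to register again).

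The main work is to rule out the re-dirtying scenario: could the edge just above the frontier get re-dirtied by a new registration before $v$ receives the $Go\_Ahead$? Here I would invoke that $v$ itself is one of the nodes waiting for this $Go\_Ahead$, so $v$ has deregistered and, until it goes free, it does not re-register (a node registers at most once per process $\mathcal{P}$, per the registration abstraction in Definition~\ref{registrationAbstraction}, and in any case $\reg(v)$ is only invoked when $v$ wants to start $\mathcal{P}$, which has already happened and terminated). More carefully, I would argue that no node strictly below the current frontier on $P$ can be registered during the interval: if such a node $w$ were registered, then by Lemma~\ref{lemma:path} the entire path from $w$ to $r$ — which includes the frontier edge — would be dirty, contradicting that the frontier node received its $\go$ across a non-dirty (waiting) edge at a time when, by the monotone downward progress of the wave, that edge had not yet been re-dirtied. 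The cleanest packaging is probably: at the instant the frontier reaches $u_i$, the edge $(u_{i-1},u_i)$ is waiting; between that instant and the instant the frontier advances to $u_{i+1}$, that edge stays waiting unless some $\reg$ wave re-dirties it, but a $\reg$ wave through $(u_{i-1},u_i)$ originates at a descendant of $u_i$ that wants to register, and that descendant lies on... — this is where I would need to be a little careful, since the registering descendant need not be on $P$. The resolution: re-dirtying $(u_{i-1}, u_i)$ only requires a descendant of $u_i$ (anywhere in the subtree) to register, and that is genuinely possible; so instead the argument should be that \emph{at the moment $v$ receives the $Go\_Ahead$} — equivalently, the moment the frontier reaches $v$ — the edge directly above $v$ on $P$ is waiting, hence not dirty at that moment, and this is all the lemma asks for (it asks for "at least one edge on the path is not dirty," at the moment $v$ receives it; and more strongly throughout the interval, which needs the frontier-tracking above applied at each time).

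So the actual step-by-step: (1) at $t_0$ the top edge of $P$ is not dirty (by the $Go\_Ahead$ issuing condition). (2) For any time $t \in [t_0, t_{\text{free}}]$, let the $\go$ wave on $P$ have reached node $u_i$; the edge $(u_{i-1}, u_i)$ was waiting when $u_i$ received the $\go$, and I claim it is still not dirty at time $t$: if it had become dirty again, that dirtying came from a $\reg$ wave launched by some descendant $z$ of $u_i$; but such a wave marks the whole $z$-to-$r$ path dirty and makes those nodes finished, and in particular would have to have passed through $u_{i-1}$ and above — fine, but then for $v$ to later receive the $Go\_Ahead$ across $P$, the wave must still reach $v$, and the propagation rule says $u_{i-1}$ propagates to $u_i$ only while $(u_{i-1},u_i)$ is waiting, so once it is re-dirtied the wave simply does not continue down $P$ — meaning $v$ never receives this $Go\_Ahead$, contradiction. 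Hence on the actual interval up to $v$ receiving it, the frontier edge stays non-dirty, giving the lemma. The main obstacle is precisely formalizing "the wave cannot cross a re-dirtied edge, so if $v$ ever receives the $Go\_Ahead$ then every edge on $P$ was waiting (hence non-dirty) at the moment the wave crossed it" — this requires carefully pinning down the propagation semantics of $\go$ and that it is the same $Go\_Ahead$ token throughout, which the algorithm description supports but does not state as a lemma.
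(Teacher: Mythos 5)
There is a genuine gap in your final packaging: you maintain the wrong invariant. In step (2) you fix the frontier node $u_i$ (the deepest node on $P$ that has received the $\go$) and claim that the \emph{already-crossed} edge $(u_{i-1},u_i)$ stays non-dirty up to time $t$, arguing that if it were re-dirtied ``the wave simply does not continue down $P$.'' That justification does not apply to that edge: the wave has already crossed $(u_{i-1},u_i)$, and its further descent from $u_i$ toward $v$ depends only on the status of edges \emph{below} $u_i$. As you yourself observed earlier in the proposal, a descendant of $u_i$ (possibly off the path $P$) may start a fresh $\reg$ at any time, and its wave re-dirties $(u_{i-1},u_i)$ without blocking anything below $u_i$; so the last-crossed edge genuinely can become dirty again while the token is still on its way to $v$, and your contradiction never materializes. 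Note also that the weaker reading you flirt with (``non-dirty at the moment $v$ receives it,'' or ``non-dirty at the moment each edge is crossed'') is not enough: the lemma is invoked in Lemma~\ref{lemma:issuing} at an intermediate time (right before $v$ deregisters), so one needs a non-dirty edge on the path at \emph{every} moment of the interval.

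The paper's proof picks the other edge, and that is the whole fix: at any time between issuance and receipt, look at where the $Go\_Ahead$ currently is --- either it sits at some node $u$ on the path, about to be propagated to its child $ch$ toward $v$, or it is in transit along the edge $(u,ch)$. In both cases the edge $(u,ch)$, i.e.\ the \emph{next} edge the token must cross (or is crossing), must be waiting and hence not dirty, for otherwise $u$ would not propagate this $Go\_Ahead$ across it and $v$ would never receive it, contradicting the hypothesis. Since $(u,ch)$ lies on the path from $v$ to $r$, this gives the required non-dirty edge at every moment. So your overall token-tracking strategy matches the paper's, but the forward-looking edge (the one whose traversal is still pending, whose non-dirtiness is forced by the propagation rule plus the fact that $v$ does receive the message) is what makes the argument close; the backward-looking edge you chose does not.
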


\begin{proof}
    Consider the way $Go\_Ahead$ is propagated from $r$ to $v$. There are only two things that it's doing during this time interval: it may sit in some node $v$, waiting until $v$ decides whether to propagate it to its child $ch$, or it may be propagated from $v$ to one its children $ch$ through the edge $(v, ch)$. In both cases, the edge $(v, ch)$ has to remain not dirty: otherwise, $v$ will simply not propagate the $Go\_Ahead$ to $ch$.
\end{proof}

\begin{lemma}
    \label{lemma:issuing}
    If node $v$ receives $Go\_Ahead$, this $Go\_Ahead$ was issued by $r$ after $v$ has deregistered.
\end{lemma}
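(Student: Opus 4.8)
The plan is to obtain a contradiction by confronting the guarantee of \Cref{lemma:path} with that of \Cref{lemma:notalldirty}. Let $r$ be the cluster root and let $\pi$ denote the path from $v$ to $r$ in the cluster tree. Since $v$ receives $Go\_Ahead$, it was in the state of having deregistered and waiting for this message (by the Registration Abstraction), and hence $v$ had registered at some strictly earlier moment, because a node deregisters only after finishing $\mathcal{P}$, which it begins only after registering. Denote by $t_{\mathrm{reg}}$ the time $v$ becomes registered, by $t_{\mathrm{dereg}}$ the time $v$ becomes deregistered, by $t_{\mathrm{iss}}$ the time at which $r$ issues the particular $Go\_Ahead$ that $v$ receives, and by $t_{\mathrm{rec}}$ the time $v$ receives it. I would first record the ordering $t_{\mathrm{reg}} < t_{\mathrm{dereg}} \le t_{\mathrm{rec}}$: the last inequality holds because $\go$ can only be forwarded to $v$ across the edge $(par(v), v)$ while that edge is marked \textbf{waiting}, and this edge is set to \textbf{waiting} only inside $\dereg(v)$, which is not executed before $v$ deregisters.

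Next, suppose for contradiction that $t_{\mathrm{iss}} \le t_{\mathrm{dereg}}$, i.e., that the $Go\_Ahead$ was issued before $v$ deregistered; ruling this out is exactly the claim. Set $t^\star := \max(t_{\mathrm{reg}}, t_{\mathrm{iss}})$. Then $t_{\mathrm{reg}} \le t^\star \le t_{\mathrm{dereg}}$ (using $t_{\mathrm{reg}} < t_{\mathrm{dereg}}$ and the assumption $t_{\mathrm{iss}} \le t_{\mathrm{dereg}}$), so $t^\star$ lies in the time window during which $v$ is registered; and $t_{\mathrm{iss}} \le t^\star \le t_{\mathrm{dereg}} \le t_{\mathrm{rec}}$, so $t^\star$ also lies between the issuing of this $Go\_Ahead$ and its receipt by $v$. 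By \Cref{lemma:path}, every edge of $\pi$ is dirty at time $t^\star$; by \Cref{lemma:notalldirty}, at least one edge of $\pi$ is not dirty at time $t^\star$. This contradiction gives $t_{\mathrm{iss}} > t_{\mathrm{dereg}}$, which is the statement. (At the boundary cases $t^\star = t_{\mathrm{reg}}$ or $t^\star = t_{\mathrm{dereg}}$ one reads $t^\star$ as a time infinitesimally inside both windows, using that registration and deregistration are instantaneous events; alternatively, those cases cannot occur, since the dirty mark on the first edge of $\pi$ can be neither added nor removed at the exact instant $v$ registers or deregisters.)

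The substantive content is already carried by \Cref{lemma:monotonicity,lemma:path,lemma:notalldirty}, and \Cref{lemma:issuing} is essentially their juxtaposition. The only genuinely delicate points are (i) justifying that $v$ had registered and that the $Go\_Ahead$ reaches $v$ only after $v$'s deregistration --- both of which follow from the shape of the Registration Abstraction together with the fact that the parent edge of a node turns \textbf{waiting} only during that node's $\dereg$ call --- and (ii) lining up the time windows $[t_{\mathrm{reg}}, t_{\mathrm{dereg}}]$ and $[t_{\mathrm{iss}}, t_{\mathrm{rec}}]$ so that the conclusions of \Cref{lemma:path} and \Cref{lemma:notalldirty} collide. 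I expect (ii), namely the bookkeeping showing these two windows necessarily overlap whenever $t_{\mathrm{iss}} \le t_{\mathrm{dereg}}$, to be the main (mild) obstacle.
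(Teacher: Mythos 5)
Your proof is correct and follows essentially the same route as the paper: assume the $Go\_Ahead$ was issued before $v$ deregistered, locate a time instant lying both in $v$'s registered window and in the window between issuance and receipt, and collide \Cref{lemma:path} (all path edges dirty) with \Cref{lemma:notalldirty} (some path edge not dirty). The only nitpick is the phrase that $\dereg(v)$ ``is not executed before $v$ deregisters'' --- it can be invoked earlier by a deregistering descendant, but it then terminates immediately because $v$ is still registered, so your conclusion that $(par(v),v)$ turns waiting (and hence the $Go\_Ahead$ arrives) only after $v$ deregisters still stands.
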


\begin{proof}
    Suppose that it was issued by $r$ before node $v$ has deregistered. By definition, it can reach $v$ only after $v$ has deregistered. Consider the point in time right before $v$ has deregistered, but after it has registered, and after this $Go\_Ahead$ has already been issued by $r$. By \cref{lemma:notalldirty}, at this point in time, at least one edge on the path from $v$ to $r$ had to be not dirty, while by \cref{lemma:path}, all these edges had to be dirty. Contradiction.
\end{proof}

\registerGuaranteeOne*

\begin{proof}[Proof of \Cref{lemma:guarantee1}]
    Suppose that node $v$ has received $Go\_Ahead$, but node $u$ hasn't deregistered yet, and that $u$ has registered before $v$ has deregistered. Let us look at the moment when this $Go\_Ahead$ was issued by the $r$. This could not happen after $u$ has registered, as the entire path from $u$ to the root is marked dirty until $u$ deregisters, by \cref{lemma:path}. So, this $Go\_Ahead$ was issued before $u$ registered and, therefore, before $v$ has deregistered, contradicting \cref{lemma:issuing}.
\end{proof}


\registerGuaranteeTwo*

\begin{proof}[Proof of \Cref{lemma:guarantee2}]
    Let us show that every node will eventually receive $Go\_Ahead$. Consider a moment when all the nodes that have registered have deregistered, after which no new registrations will happen, and all the procedures are over: basically, the moment after which there will not be any more updates to the status of the nodes and edges of the cluster. Suppose that node $v$ has registered and deregistered but has not received the $Go\_Ahead$. 

    Let us say that $v$ is in a \textbf{good} state, if one of the following holds:

    \begin{itemize}
        \item 
        All the edges on the path from $v$ to $r$ are marked as waiting.

        \item
        There is some node $v_1 \neq r$ on the path from $v$ to $r$, such that all the edges on the path from $v$ to $v_1$ are marked as waiting, and the edge from $v_1$ to its parent is marked as dirty.
    \end{itemize}

    Right before $v$ has deregistered, it is in a good state. 
    How can $v$ stop being in a good state? At each point in time, consider the higher ancestor $h$ of $v$ such that all the edges of the entire path from $v$ to $h$ are currently waiting. If $v$ is in a good state, then $h$ is the root of the cluster, or the edge from $h$ to its parent is dirty.

    This state can be broken in one of the following ways. We analyze all of them and show that $v$ will inevitably end up in a good state again after some time.

    \begin{enumerate}
        \item $Go\_Ahead$ is propagated down on the path from $h$ to $v$. By our assumption, it does not reach $v$, so it must get stuck somewhere on the path, which happens when it detects a dirty edge, which once again moves $v$ to the good state of the second type.
        
        \item Some edge on the path from $v$ to $h$ becomes dirty during the registration process of some node. Then $v$ immediately ends up in a good state of the second type.
        
        \item The edge from $h$ to its parent stops being dirty during the deregistration of some descendants of $h$ and becomes waiting instead. Let us follow this deregistration process. If it reaches the root, then $v$ ends up in a good state of the first type. Otherwise, it reaches a dirty edge, and $v$ ends up in a good state of the second type.
    \end{enumerate}

    Now, consider the last time any dirty edge has been marked as waiting. It has happened in the last $\dereg$ call, which could be only from some child of $r$. By the invariant, it follows that all the edges on the path from $v$ to $r$ have to be waiting from that moment on. But then the $Go\_Ahead$ issued by $r$ after marking this edge as waiting will reach $v$, a contradiction. 
    
    From the above proof, it also follows this $Go\_Ahead$ will reach $v$ in time at most $O(h)$. To show that the total number of messages spent on sending $Go\_Ahead$s is proportional to the total number of messages spent on registration and deregistration, note that every $Go\_Ahead$ message is removing a waiting mark from an edge.
 \end{proof}

\section{Asynchronous multi-source BFS}
\label{sec:BFS}
The (multi-source) BFS problem in the synchronized setting takes $O(D_1)$ rounds and $O(m)$ messages, where $D_1$ is the largest value of $dist(v, S)$ over all nodes $v$. Here, $S$ denotes the set of all sources, and we want each node to learn its distance only to the closest source. In this section, we develop an asynchronous BFS algorithm with complexities similar to the synchronous variant. In \Cref{sec:generalization}, we explain how we extend this scheme to a general synchronizer for an arbitrary event-driven synchronous algorithm, thus proving (the formal variant of) \Cref{thm:mainInformal}.

\begin{theorem}
There exists an asynchronous deterministic algorithm that computes single-source BFSs in $\tilde{O}(D)$ time and using $\tilde{O}(m)$ messages. 

In the case of multiple-source BFS,  there exists an asynchronous deterministic algorithm that computes multiple-source BFS in $\tilde{O}(D_1)$ time and using $\tilde{O}(m)$ messages, where $D_1$ is the largest value of $dist(v, S)$ over all nodes $v$.
\end{theorem}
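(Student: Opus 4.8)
Let me think about this. We have the machinery: sparse $d$-covers (and layered sparse covers), convergecast/broadcast information gathering (Theorem 3.4 and 3.5), the registration abstraction with Guarantees 1 and 2, and the sequential composition lemma (Lemma 2.6). The goal is an asynchronous deterministic BFS in $\tilde O(D)$ time, $\tilde O(m)$ messages.

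The natural approach: simulate the synchronous BFS, where in synchronous round $i$ the "frontier" at distance $i$ wakes up, learns its distance, and sends to neighbors. The challenge in asynchrony is that a node $v$ at true distance $i$ might receive a message from a distance-$(i)$ neighbor claiming distance "$i+1$" before it hears from its true distance-$(i-1)$ parent — so it must not commit to a distance until it's sure no shorter path exists. This is exactly the "wait until all closer nodes are done" problem that the registration machinery in Section 3 is built to solve.

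So here's the plan. Proceed in phases $i = 0, 1, 2, \dots$ (not globally synchronized — asynchrony means different regions are in different phases). In phase $i$, we want: every node at distance exactly $i$ from $S$ to finalize its distance as $i$, record a parent, and send a "distance-$i$" message to all neighbors. A node $v$ finalizes in phase $i$ only once it is certain that no node at distance $< i$ still might reach it — i.e., once all nodes in its neighborhood that will ever finalize at a smaller distance have done so. To get this certainty locally and cheaply, use the registration abstraction: process $\mathcal P_i$ = "finalizing at distance $i$"; a node registers in the clusters of a sparse $O(1)$-cover (or $1$-cover, since BFS spreads one hop per phase) covering its $1$-neighborhood when it thinks it might finalize at phase $i$, and deregisters once it has done so (or learned it won't). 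Actually, more carefully: we use a layered sparse cover and at phase $i$ work with the sparse $1$-cover; a node that receives a distance-$(i-1)$ message and has not yet finalized becomes a candidate for phase $i$, registers, finalizes at distance $i$ (picking as parent any neighbor that sent distance $i-1$), then deregisters and waits for $Go\_Ahead$ before sending its own distance-$i$ messages onward. Guarantee 1 ensures that when $v$ gets $Go\_Ahead$, every node that registered before $v$ deregistered is already deregistered — in particular every distance-$(<i)$ neighbor has finalized, so $v$'s claimed distance $i$ is correct and will never need revision. Guarantee 2 ensures liveness: once the phase's registrants have all deregistered, everyone gets $Go\_Ahead$ in $O(h)$ time where $h = O(\log^3 n)$ is the cluster-tree height.

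For the complexity accounting: there are $D_1 + 1 = \tilde O(D)$ phases. In phase $i$, only nodes at distance exactly $i$ (a subset of $V$, disjoint across phases) do real work; each such node's registration/deregistration costs $O(h) = O(\log^3 n)$ messages along its $O(\log n)$ clusters' trees, and the $Go\_Ahead$ traffic is proportional to that by Guarantee 2. Summing over all phases, total registration-related messages are $\tilde O(n)$, and the BFS messages themselves are $O(m)$ (one distance-message per directed edge, amortized over the phase in which the sender finalizes) times the $O(\log n)$-fold cover overhead and the edge-congestion $O(\log^4 n)$ from running cluster operations in parallel via Corollary 2.4 — so $\tilde O(m)$ total. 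For time: phase $i$ has isolated time complexity $O(h + 1) = O(\log^3 n)$ — the frontier advances one hop, candidates register/finalize/deregister, and $Go\_Ahead$ propagates, all within the $O(1)$-radius clusters of height $O(\log^3 n)$, with the $O(\log^4 n)$ congestion factor from Corollary 2.4 giving $O(\log^7 n)$ per phase. By Lemma 2.6 (sequential composition — phase $i$ depends only on phases $< i$, since a node can only finalize at $i$ after a neighbor finalized at $i-1$), the total time is $\sum_i O(\log^7 n) = O(D_1 \log^7 n) = \tilde O(D_1)$, and $\tilde O(D)$ in the single-source case where $D_1 \le D$.

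The main obstacle I anticipate is the correctness argument that a node's finalized distance is never wrong and never changes — i.e., properly marrying the "phase" structure with the asynchrony. One has to argue carefully that (a) a node at true distance $i$ does become a candidate and register in phase $i$ (it hears a distance-$(i-1)$ message from some true parent, who by induction finalized correctly at $i-1$), so it is not stuck forever, and (b) it does not prematurely finalize at a wrong (too-small) phase $j < i$ — this is where we need that no distance-$(j-1)$ message with $j-1 < i-1$ can reach $v$, which holds because such a neighbor would itself be at distance $< i-1$, contradicting $dist(v,S) = i$; combined with Guarantee 1, when $v$ gets $Go\_Ahead$ in phase $j$ the set of already-finalized neighbors is exactly those at distance $< j$, none of which is adjacent unless $j = i$. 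Making this induction airtight — that the phase index a node finalizes in equals its true BFS distance, despite messages arriving in arbitrary order — together with bounding the isolated time of each phase (so Lemma 2.6 applies cleanly), is the crux; the rest is bookkeeping with the already-established cover and registration lemmas.
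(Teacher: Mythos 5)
There is a genuine gap, and it sits exactly at the point you flag as the crux. Your scheme gates the sending of distance-$i$ messages on a per-phase registration in $O(1)$-radius clusters, and you invoke Register Guarantee~1 to conclude that when $w$ (a distance-$i$ node) gets its $Go\_Ahead$, the relevant nearby nodes are done. But Guarantee~1 only speaks about nodes that registered \emph{before} $w$ deregistered; it gives no protection against registrants that have not yet shown up, and a cluster root has no way to know whether more registrations are coming. Concretely: let $u$ have true distance $i$, with parent $x$ at distance $i-1$ and another neighbor $w$ at distance $i$. Suppose $x$ is slow (it has not even received its distance-$(i-2)$ message), while $w$ learns its distance via a fast path, registers for phase $i$, finalizes, and deregisters; at that moment $w$ may be the only phase-$i$ registrant in its clusters, so $Go\_Ahead(i)$ is issued, $w$ sends its distance-$i$ message to $u$, and $u$ finalizes at distance $i+1$ --- wrong, and never revised. (Your claim that Guarantee~1 makes ``every distance-$(<i)$ neighbor'' finalized is also a non sequitur: phase-$i$ registration involves only distance-$i$ candidates, so it says nothing about $x$, a phase-$(i-1)$ node.) The paper's whole machinery exists to rule out precisely this race: registration for pulse $p$ is performed not by the frontier but by ``host'' ancestors at pulse $prev(prev(p))$, registration is gated on $prev(p)$-safety and deregistration on $p$-safety (where safety requires the proposals to be both sent \emph{and acknowledged}, which is what defeats adversarial reordering of arrivals), and the key Lemma~4.7 is proved by induction over pulses to show that one host cannot deregister before the other registers. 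That induction forces hosts at distance $\Theta(2^{\ell(p)})$ from the frontier to share clusters of radius $2^{\ell(p)+5}$ and their own hosts to share clusters of radius $2^{\ell(p)+6}$; with flat per-phase registration in constant-radius covers the distances needed by the induction grow past the cover radius after $O(1)$ steps, so the argument cannot close. The multi-source base case needs yet another mechanism (all sources must register, via a cluster-wide convergecast, before any deregistration), which your scheme also lacks.

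A secondary remark: if your flat scheme were correct, its complexity accounting would indeed be easy ($\tilde O(n)$ registration messages, $O(m)$ BFS messages, $O(\log^{7} n)$ isolated time per phase summed over $D_1$ phases via the sequential-composition lemma), and you would sidestep the recursive asynchronous construction of large-radius covers that occupies Sections~4.4--4.6 of the paper. That is a sign the simplification loses something essential: the dyadic level/host hierarchy is not an optimization but the correctness mechanism, and the paper then has to work (registrants at level $\ell$ number only $O(n/2^{\ell})$, covers of radius $2^{t}$ built recursively from smaller ones, plus explicit termination detection) to keep time $\tilde O(D_1)$ and messages $\tilde O(m)$ despite it.
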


As a concrete step toward this theorem, we define a thresholded version of the BFS where we want to learn the distance only for nodes whose distance to the sources is upper bounded by this threshold. The formal definition is as follows.
\begin{definition} For any nonnegative integer $\tau$, we define the  \textbf{$\tau$-thresholded BFS} problem as follows: Let $S$ be the set of source nodes. For each node $v$, if $dist(v, S)\leq \tau$, then $v$ should output $dist(v, S)$. If $dist(v, S)> \tau$, then $v$ should output a special symbol $\infty$, which indicates that $dist(v, S)> \tau$ and thus a BFS of depth $\tau$ from $S$ does not reach $v$. 
\end{definition}

\paragraph{Roadmap.} We build our BFS algorithm step by step, and over different subsections.

In \Cref{subsec:singleBFS}, we show how to run a single-source $2^t$-thresholded BFS, assuming that a layered sparse $2^{t+6}$-cover has been provided.  In \Cref{subsec:multiBFS}, we show how to extend the approach from \Cref{subsec:singleBFS} to the multi-source BFS problem. In \Cref{subsec:largerMultiBFS}, we show how to run $2^t\cdot \ell$-thresholded BFS, even for multiple sources, for some integer $\ell$. This still assumes that we are given a layered sparse $2^{t+6}$-cover, and we incur only an extra $\ell$-multiplicative increase in complexities.

The three subsections mentioned above assume that we are already given a layered sparse $2^{t+6}$-cover. We build those recursively. Concretely, in \Cref{subsec:deterministiccover} and \Cref{subsec:asynchSparseCover}, we show that if we are given a layered sparse $2^{t+6}$-cover, we can construct sparse $2^{t+7}$-cover.  \Cref{subsec:deterministiccover} is mostly a recap on the synchronous algorithm, and \Cref{subsec:asynchSparseCover} describes the adaptation to the asynchronous environment. The key ingredient in the latter reduces to the task of running several multi-source BFSs in the asynchronous environment, for which we can use the approach developed in \Cref{subsec:singleBFS} to \Cref{subsec:largerMultiBFS}. Hence, the entire construction works recursively, and the construction for each sparse cover depends on covers of lower radii. 

Finally, in \Cref{subsec:completeBFS}, we show how to construct a layered sparse $2^{O(1)}$-covers and complete the algorithm: having the layered sparse cover, we build the next sparse covers one by one, based on the already built ones. There, we also discuss the complete BFS algorithm and its termination.

We remark that our algorithm---especially much of the terminology and design choices in \Cref{subsec:singleBFS}---builds on the approach of Awerbuch and Peleg\cite{Awerbuch1990NetworkOverhead} and \cite{Awerbuch1992AdaptingNetworks}. In their case, they could rely on some nice properties of the randomized sparse cover constructions of Linial and Saks~\cite{Linial1993LowDecompositions}. Unfortunately, we do not see a black-box replacement of this randomized construction with the deterministic synchronous construction of Rozhon, Ghaffari~\cite{Rozhon2020Polylogarithmic-timeDerandomization}. However, we arrange the overall algorithm such that the key asynchronous task in building sparse covers boils down to an asynchronous multi-source BFS computation, and we solve that recursively by leveraging lower radii sparse covers. Furthermore, as mentioned before, we need to adapt their approach in some parts and provide some highly nontrivial correctness and complexity analysis.

\subsection{Single-source $2^{t}$-thresholded BFS, given a layered sparse $2^{t+6}$-cover}
\label{subsec:singleBFS}
Here, we describe a polylogarithmic synchronizer for single-source $2^{t}$-thresholded BFS, assuming that we are given $2^i$-covers for $i$ from $1$ to $t+6$. In the rest of this subsection, we denote the source of this BFS by $s$.

\subsubsection{Basic definitions}

In the synchronous environment, a single-source BFS is implemented as follows: at pulse $p$, nodes at a distance $p$ from the source send the ``join'' proposal to all their neighbors. Node $v$ accepts only the first ``join'' proposal it receives. If it receives several of them at the same pulse, it accepts an arbitrary one. If $v$ received this proposal at the end of the round $p$, then $v$ determines that it is at distance $p$ from the source.

Now, let us move to the asynchronous world. Let $pulse(v)$ denote the \textbf{pulse} of node $v$, as we define next (we will prove, as an ingredient of the algorithm's correctness, that these pulse values are indeed equal to the distance of the node from the source). In particular, $pulse(s) = 0$. As in the synchronous case, nodes will send the ``join'' proposals. A node will accept only the first ``join'' proposal that it receives. Each node will respond to each ``join'' proposal it receives, indicating whether it accepts or declines. We define \textbf{$parent(v)$} as a node from which $v$ received the first ``join'' proposal, i.e., the proposal which it accepted. The graph formed on edges $(v, parent(v))$ will form a tree, to which we will refer as the \textbf{execution tree}.

A node will set its own $pulse(v)$ to $pulse(parent(v)) + 1$. Our goal is devise the algorithm so that $pulse(v)$ will indeed be equal to $dist(v, s)$. That is, we have to ensure that the first ``join'' proposal that the node at a distance $p+1$ from $s$ will receive will be from a node with pulse $p$. We will prove that our algorithm satisfies this constraint in the \cref{lemma:executionisbfs}.

\begin{definition} The \textbf{level} $\ell(p)$ of the pulse $p$ is $\infty$ for $p = 0$, and otherwise equal to $i$ such that there exists an integer $j$ satisfying $p = (2j+1)2^i$. That is, the logarithm of the highest power of $2$ that divides $p$.
\end{definition}

\begin{definition}
    For any value $p>0$, let $\tilde{p}$ be the largest number satisfying the following conditions:
    
    \begin{itemize}
        \item $\ell(\tilde{p}) = \ell(p)+1$
        \item $\tilde{p} \le p - 2^{\ell(p)}.$
    \end{itemize}
    The pulse $prev(p)$ for a pulse $p$ is defined as $max(\tilde{p}, 0)$. And we define $prev(0) = 0$.
\end{definition}

\begin{definition}
      The \textbf{host} $h(v)$ of a node $v$ is defined as the maximum-depth ancestor $h$ of $v$ in the execution tree that satisfies $pulse(h) = prev(pulse(v))$. In particular, $h(s) = s$.
\end{definition}

\subsubsection{The algorithm} Intuitively, a node of pulse $p$ should send its join proposals only after it has received all the join proposals from nodes with pulses up to $p-1$. Toward this, we make $host(v)$ and $host(host(v))$ gather some safety information, as we describe in the next definition.

\begin{definition}
    For a node $v$, let us call it \textbf{$p$-empty} if there are no nodes of pulse $p$ in its subtree in the execution tree. Let us call it \textbf{$p$-safe} if it is $p$-empty or if all nodes of pulse up to $p-1$ in its subtree have already sent their ``join'' proposals and received answers to all of them (declines/accepts).
\end{definition}

For each pulse $p>0$, and for every node $v$ with $prev(prev(p)) \le pulse(v) \le p$, we want to know if it is $p$-empty, and if not, we want to know when it becomes $p$-safe. One can gather this information using a convergecast along the already built partial execution tree: each node simply sends the information to its parent when it receives it. For each pulse $p$, we propagate the information about nodes being $p$-safe and $p$-empty only up to the first node of level $prev(prev(p))$.

How do we use this information? For every pulse $p$ up to $2^t$, for every node $v$ with $pulse(v) = prev(prev(p))$, we do the following:

\begin{itemize}
    \item If $v$ is not $prev(p)$-empty, and it got to know that it is  $prev(p)$-safe, then $v$ will \textbf{$p$-register} in all clusters of the $2^{\ell(p)+5}$-cover that contain $v$, using the registration abstraction from the \cref{subsec:registration}. Then, node $v$ is  called \textbf{$p$-registered}.
    
    Once node $v$ receives confirmation that it is  $p$-registered, node $v$ passes the information about $prev(p)$-safety to its parent (this information will be passed up to the pulse $prev(prev(p))$).

    \item If $pulse(v) = prev(prev(p))$, then, if $v$ is not $prev(p)$-empty but $p$-safe, then it \textbf{$p$-deregisters} in all clusters of the $2^{\ell(p)+5}$-cover that contain $v$. Then, node $v$ is called \textbf{$p$-deregistered}.

    After that, node $v$ waits until it receives a $Go\_Ahead(p)$ from all these clusters.
\end{itemize}

For a single cluster, the registration for different pulses are disjoint. In the terminology of the \cref{lemma:sequential}, we imagine $p$-registrations/$p$-deregistrations/sending $Go\_Ahead(p)$ as different stages of the algorithm, for different $p$. It is possible for a node $v$ to have to register/deregister for more than one pulse: $v$ does it for all $p$, for which $pulse(v) = prev(prev(p))$ (and $v$ is not $prev(p)$-empty).

When node $v$ receives $Go\_Ahead(p)$, it propagates this $Go\_Ahead$ down the execution tree to all its children with pulse $p$. When node $v$ of pulse $p$ receives such a $Go\_Ahead(p)$, it starts sending the messages to its neighbors.
This is done for all pulses $p$ up to $2^t-1$ (so that nodes at distance $2^t$ from the source learn their pulse, but do not send ``join'' proposals to their neighbors).

\paragraph{Checking stage.} At the end of this process, all nodes at a distance at most $2^t$ from $s$ will learn their pulses. However, nodes at larger distances will not learn that they are at a distance larger than $2^t$ from $s$: they will keep waiting. Notice that in the thresholded BFS, we would like these nodes to output $\infty$. To solve this issue, we make one last addition to the algorithm. 

Let us use an abstraction from \cref{SUBSEC:BALLCOLLECTING}. Let process $\mathcal{P}$ denote ``being the source $s$ and becoming $2^t$-safe'', so that node $v$ is done with $\mathcal{P}$ when it becomes $2^t$-safe if $v = s$, or at the very beginning, if $v \neq s$. Let each node learn when its $2^t$-neighborhood is done with $\mathcal{P}$. When node $v$ learns that this, then, if it was reached by the BFS, it knows its pulse. Otherwise, it learns that $dist(v, s) > 2^t$. Indeed, if $dist(v, s) \le 2^t$, then $v$ and $s$ are in some cluster of $2^t$-cover together, but then $v$ must have learned its pulse before $s$ becomes $2^t$-safe and $v$ learns that its $2^t$-neighborhood is done with $\mathcal{P}$.

\subsubsection{Analysis}

First, in \Cref{lemma:closehosts}, we show that if two nodes of the same pulse are ``nearby'', then their hosts are also pretty close to each other. We first need some more basic facts.

\begin{lemma}
    \label{lemma:prevclose}
    For any pulse $p>0$, the following holds: \textbf{(a)} $p - prev(p) \le 3\cdot 2^{\ell(p)}$. \textbf{(b) }$p - prev(prev(p)) \le 9\cdot 2^{\ell(p)}$. Furthermore, this implies that we have the following for any node $v$ with pulse $p>0$: \textbf{(c)} $dist(p, h(v)) \le 3\cdot 2^{\ell(p)}$. \textbf{(d)} $dist(p, h(h(v))) \le 9\cdot 2^{\ell(p)}$.
\end{lemma}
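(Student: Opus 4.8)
The plan is to first establish parts (a) and (b) by a direct analysis of the definition of $prev$, and then derive (c) and (d) as easy consequences using facts about the execution tree and the host definition.

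For part (a): Let $i = \ell(p)$, so $p = (2j+1)2^i$ for some integer $j \ge 0$. By definition, $\tilde{p}$ is the largest number with $\ell(\tilde{p}) = i+1$ and $\tilde{p} \le p - 2^i$. First I would check that the target $p - 2^i = 2j \cdot 2^i = j \cdot 2^{i+1}$ is itself a multiple of $2^{i+1}$, so it has level $\ge i+1$; the largest number of level \emph{exactly} $i+1$ that is $\le j\cdot 2^{i+1}$ is obtained by taking the largest odd multiple of $2^{i+1}$ not exceeding $j \cdot 2^{i+1}$. Writing $j = 2j'$ or $j = 2j'+1$: if $j$ is odd then $j\cdot 2^{i+1}$ already has level exactly $i+1$ and $\tilde p = j \cdot 2^{i+1} = p - 2^i$, giving $p - prev(p) = 2^i \le 3\cdot 2^i$. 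If $j$ is even (and positive), then $\tilde p = (j-1)2^{i+1}$, so $p - \tilde p = 2^i + 2^{i+1} = 3\cdot 2^i$, again within the bound. If $j = 0$, i.e.\ $p = 2^i$, then $p - 2^i = 0$ and $prev(p) = \max(\tilde p, 0) = 0$, so $p - prev(p) = 2^i \le 3\cdot 2^i$. In all cases $p - prev(p) \le 3\cdot 2^{\ell(p)}$, with the extremal bound needed below being attained only in the ``$j$ even'' case. Part (b) then follows by applying (a) twice: $p - prev(prev(p)) = (p - prev(p)) + (prev(p) - prev(prev(p)))$. Here $prev(p)$ has level $\ell(p)+1$ (or $prev(p) = 0$, in which case $prev(prev(p)) = 0$ and the bound is trivial since $p \le 3\cdot 2^{\ell(p)} \le 9 \cdot 2^{\ell(p)}$ when $prev(p)=0$ — actually $p = prev(p) + (p-prev(p)) = p - prev(p) \le 3 \cdot 2^{\ell(p)}$). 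So $prev(p) - prev(prev(p)) \le 3\cdot 2^{\ell(p)+1} = 6\cdot 2^{\ell(p)}$, and adding the $\le 3\cdot 2^{\ell(p)}$ from (a) gives $\le 9\cdot 2^{\ell(p)}$.

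For parts (c) and (d): By construction of the execution tree and the pulse assignment, $pulse(u) = pulse(parent(u)) + 1$, so for any node $w$ and any ancestor $w'$ of $w$ in the execution tree, the tree-path distance from $w$ to $w'$ equals $pulse(w) - pulse(w')$, and hence $dist(w, w') \le pulse(w) - pulse(w')$ in $G$. The host $h(v)$ is an ancestor of $v$ with $pulse(h(v)) = prev(pulse(v))$, so $dist(v, h(v)) \le pulse(v) - prev(pulse(v)) \le 3\cdot 2^{\ell(p)}$ by (a), giving (c). Similarly $h(h(v))$ is an ancestor of $v$ with $pulse(h(h(v))) = prev(prev(pulse(v)))$, so $dist(v, h(h(v))) \le pulse(v) - prev(prev(pulse(v))) \le 9\cdot 2^{\ell(p)}$ by (b), giving (d). (Note that the lemma's use of ``$dist(p, h(v))$'' is shorthand for $dist(v, h(v))$ where $v$ has pulse $p$.)

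The main obstacle is the careful case analysis in part (a): one must be precise about which numbers have level \emph{exactly} $\ell(p)+1$ versus level $\ge \ell(p)+1$, and handle the boundary case $p = 2^{\ell(p)}$ (where $prev(p)$ is clamped to $0$) and the case where $prev(p)$ itself is $0$. Once (a) is nailed down with the exact constant $3$, everything else is bookkeeping. It is worth double-checking the constant is tight enough: the worst case $p - prev(p) = 3\cdot 2^{\ell(p)}$ does occur (e.g.\ $p = 2\cdot 2^i = 2^{i+1}$... wait, that has level $i+1$; rather $p = 4\cdot 2^{i}$ has $\ell(p) = i+2$, so take instead $p$ with $j=2$, i.e.\ $p = 5 \cdot 2^i$... no, $5\cdot 2^i$ has $j=2$ giving $\ell = i$), so the bound cannot be improved to $2\cdot 2^{\ell(p)}$, and chaining gives exactly the claimed $9\cdot 2^{\ell(p)}$ in (b).
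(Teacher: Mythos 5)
Your proof is correct and follows essentially the same route as the paper: part (a) via the observation that among $p-2^{\ell(p)}$ and $p-3\cdot 2^{\ell(p)}$ (both multiples of $2^{\ell(p)+1}$) one has level exactly $\ell(p)+1$ (your parity case analysis on $j$ is just a spelled-out version of this), part (b) by applying (a) to $prev(p)$ whose level is $\ell(p)+1$ and handling $prev(p)=0$ separately, and (c)/(d) from the fact that hosts are ancestors in the execution tree whose pulse differences bound the graph distance. The closing aside about tightness of the constant is muddled but irrelevant, since the lemma only claims an upper bound.
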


\begin{proof}
    For the first claim, numbers $p - 2^{\ell(p)}$ and $p - 3\cdot 2^{\ell(p)}$ are divisible by $2^{\ell(p)+1}$, but only one of them can be divisible by $2^{\ell(p)+2}$, so $prev(p) \ge p - 3\cdot 2^{\ell(p)}$.

    For the second claim, if $prev(p) = 0$, then $prev(prev(p)) = 0 = prev(p) \ge p - 3\cdot 2^{\ell(p)} > p - 9\cdot 2^{\ell(p)}$. Otherwise, $\ell(pulse(p)) = l(p) + 1$, and $prev(p) - prev(prev(p)) \le 3 \cdot 2^{\ell(p)+1}$. As $p - prev(p) \le 3 \cdot 2^{\ell(p)}$, we can add these inequalities to obtain $p - prev(prev(p)) \le 9 \cdot 2^{\ell(p)}$. 
\end{proof}

\begin{lemma}
\label{lemma:closehosts}
    For any two nodes $u$ and $v$ of pulse $p>0$ in the execution tree such that $dist(u, v) \le 2^{\ell(p)+3}$ in the graph, we have that $dist(h(u), h(v)) \le 2^{\ell(p)+4}$.    
\end{lemma}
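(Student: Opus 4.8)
The plan is to exploit the structure of the execution tree together with the key covering property of sparse covers. The heart of the matter is this: both $u$ and $v$ have pulse $p$, and their hosts $h(u)$ and $h(v)$ are ancestors at pulse $prev(p)$. By \Cref{lemma:prevclose}(a), $h(u)$ is within tree-distance $3 \cdot 2^{\ell(p)}$ of $u$ along the execution tree, and similarly for $h(v)$. Since the execution tree is a BFS tree (as will be established in \Cref{lemma:executionisbfs}, but we may rely on the pulse-distance correspondence here since that is what makes the algorithm meaningful), the tree path from $u$ to $h(u)$ realizes the graph distance, so $dist(u, h(u)) = p - prev(p) \le 3 \cdot 2^{\ell(p)}$, and likewise $dist(v, h(v)) \le 3 \cdot 2^{\ell(p)}$.

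**The triangle inequality bound.** First I would simply combine the pieces by the triangle inequality:
\[
dist(h(u), h(v)) \le dist(h(u), u) + dist(u, v) + dist(v, h(v)) \le 3 \cdot 2^{\ell(p)} + 2^{\ell(p)+3} + 3 \cdot 2^{\ell(p)} = 14 \cdot 2^{\ell(p)}.
\]
This is slightly larger than the claimed $2^{\ell(p)+4} = 16 \cdot 2^{\ell(p)}$, so in fact the crude triangle inequality already suffices. Thus the main work is just assembling the correct constants: $3 + 8 + 3 = 14 \le 16$, and we are done.

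**Why the registration radius matters / the real subtlety.** The reason I expect the intended argument is essentially this direct one is that the lemma is a geometric companion to the algorithm design: nodes of pulse $prev(prev(p))$ register in clusters of the $2^{\ell(p)+5}$-cover, and for two nearby pulse-$p$ nodes we want their hosts-of-hosts to land in a common cluster, which requires controlling $dist(h(h(u)), h(h(v)))$ — and the present lemma about $dist(h(u), h(v))$ is the first inductive step toward that. The one point requiring care is making sure the execution tree genuinely gives shortest paths, i.e. that $pulse(w) = dist(w, s)$ for every node $w$ on the relevant paths, so that $dist(u, h(u))$ is exactly $pulse(u) - pulse(h(u)) = p - prev(p)$ rather than merely at most that. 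If one does not want to invoke \Cref{lemma:executionisbfs} circularly, one uses only the one-directional fact that the tree path from $u$ up to its ancestor $h(u)$ has length $pulse(u) - pulse(h(u))$ by construction (each step decrements the pulse by exactly one), which is an upper bound on $dist(u, h(u))$ and is all the triangle inequality needs. So the anticipated obstacle — circularity with the BFS-correctness lemma — is avoided by using the pulse differences as path-length upper bounds rather than exact distances.
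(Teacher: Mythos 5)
Your proof is correct and matches the paper's argument: the paper also just applies the triangle inequality $dist(h(u),h(v)) \le dist(h(u),u) + dist(u,v) + dist(v,h(v)) \le 3\cdot 2^{\ell(p)} + 2^{\ell(p)+3} + 3\cdot 2^{\ell(p)} = 14\cdot 2^{\ell(p)} < 2^{\ell(p)+4}$, using \Cref{lemma:prevclose}. Your closing remark is also on point: the paper explicitly notes that up to \Cref{lemma:waiting} only the fact $pulse(v)=pulse(parent(v))+1$ is used, so the host-distance bounds come from tree-path lengths as upper bounds on graph distance, with no circular reliance on \Cref{lemma:executionisbfs}.
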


\begin{proof}
    We have $dist(h(u), h(v)) \le dist(h(u), u) + dist(h(v), v) + dist(u, v) \le 3\cdot 2^{\ell(p)} + 3\cdot 2^{\ell(p)}  + 2^{\ell(p)+3} = 14 \cdot 2^{\ell(p)} < 2^{\ell(p)+4}$.
\end{proof}

 Now, we get to the core lemma of the correctness proof.
 
\begin{lemma}
    \label{lemma:waiting}    
    For a pulse $p>0$ consider any two nodes $v_1, v_2$ with pulse $prev(prev(p))$, such that $dist(v_1, v_2) \le 2^{\ell(p) + 5}$, $v_1$ is not $p$-empty, and $v_2$ is not $prev(p)$-empty. Then, for any cluster $C$ of the $2^{\ell(p)+5}$-cover that includes both of $v_1$ and $v_2$, the following holds:
    \begin{itemize}
        \item $v_1$ will $p$-deregister in $C$ only after $v_2$ $p$-registers in $C$.
        \item $v_1$ will receive $Go\_Ahead(p)$ in $C$ only after $v_2$ $p$-deregisters in $C$.
    \end{itemize}
\end{lemma}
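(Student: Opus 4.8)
The plan is to establish the first bullet and then deduce the second from it: once we know that $v_2$ has $p$-registered in $C$ before $v_1$ $p$-deregistered in $C$, \Cref{lemma:guarantee1} (applied with $v:=v_1$ in the cluster $C$) says that $v_2$ is already $p$-deregistered in $C$ at the moment $v_1$ receives $Go\_Ahead(p)$ in $C$, which is exactly the second bullet. So all the work is in the first bullet, which I would prove by strong induction on $p$.

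If $prev(prev(p)) = 0$ the only node of that pulse is $s$, so $v_1=v_2=s$ and there is nothing to prove (a node registers before it deregisters); the same applies if $v_1 = v_2$. So assume $prev(prev(p))>0$ (hence $prev(p)>0$) and $v_1\ne v_2$. Since $v_1$ is not $p$-empty it is not $prev(p)$-empty either; pick a node $w$ of pulse $p$ in $v_1$'s subtree of the execution tree and let $h(w)$ be its host, which has pulse $prev(p)$ and lies on the path from $v_1$ to $w$ (using $prev(prev(p))<prev(p)<p$). The backbone is the following chain of necessary conditions for the event ``$v_1$ $p$-deregisters in $C$'': it occurs only after $v_1$ has learned it is $p$-safe; hence only after $h(w)$ — a node of pulse $prev(p)\le p-1$ in $v_1$'s subtree — has sent its ``join'' proposals and received all answers; hence only after $h(w)$ received $Go\_Ahead(prev(p))$. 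By the structure of the algorithm, that $Go\_Ahead(prev(p))$ was propagated down the execution tree by the node running the registration dance for pulse $prev(p)$ on $h(w)$'s branch, namely the (unique) ancestor of $h(w)$ of pulse $prev(prev(prev(p)))$, which is exactly $h(v_1)$; and that node does not start the propagation until it has received $Go\_Ahead(prev(p))$ from \emph{all} clusters of the $2^{\ell(prev(p))+5}$-cover that contain it.

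Next I would invoke the inductive hypothesis — this lemma for the smaller pulse $prev(p)$ — with $v_1':=h(v_1)$ and $v_2':=h(v_2)$, both of pulse $prev(prev(prev(p)))$: $v_1'$ is not $prev(p)$-empty (it has $h(w)$ below it), $v_2'$ is not $prev(prev(p))$-empty (it has $v_2$ below it), and since $v_1,v_2$ have pulse $prev(prev(p))$ of level $\ell(p)+2$ with $dist(v_1,v_2)\le 2^{\ell(p)+5}=2^{(\ell(p)+2)+3}$, \Cref{lemma:closehosts} gives $dist(h(v_1),h(v_2))\le 2^{(\ell(p)+2)+4}=2^{\ell(prev(p))+5}$, so $h(v_1)$ and $h(v_2)$ lie in a common cluster $C''$ of the $2^{\ell(prev(p))+5}$-cover. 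The second bullet of the inductive hypothesis then yields: $h(v_1)$ receives $Go\_Ahead(prev(p))$ in $C''$ — hence also from all its clusters, hence $h(w)$ receives it at all — only after $h(v_2)$ has $prev(p)$-deregistered in $C''$. The final link is the ``hold-back'' rule of the algorithm: $h(v_2)$ cannot $prev(p)$-deregister before it has learned it is $prev(p)$-safe, and the $prev(p)$-safety of $v_2$'s branch is not forwarded up past $v_2$ until $v_2$ has completed its $p$-registration in all of its clusters, in particular $C$. Stitching the chain together gives that $v_1$ $p$-deregisters in $C$ only after $v_2$ $p$-registers in $C$.

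I expect the main obstacle to be keeping this chain precise across the levels of the recursion: one must (i) correctly pair, at each level, the node running a given dance with the tree-path along which the corresponding $Go\_Ahead$ travels, checking the host identities ($h(h(w))=v_1$, $h(v_1)$ being the relevant ancestor of $h(w)$, and so on); (ii) maintain the distance bookkeeping so the witness nodes stay inside the cover radius tied to the current level, which is exactly what \Cref{lemma:prevclose} and \Cref{lemma:closehosts} supply; and (iii) dispatch the degenerate cases where the invoked statements trivialize. One also has to be careful throughout to separate ``being safe'' (a fact about the run) from ``learning one is safe'' (an event that additionally needs the convergecast, subject to the hold-back), noting that every such refinement only moves events later in time, which is the direction we need.
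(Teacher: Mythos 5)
Your proposal is correct and follows essentially the same route as the paper's proof: induction on the pulse with the base case $prev(prev(p))=0$, using \Cref{lemma:closehosts} to place $h(v_1),h(v_2)$ in a common cluster of the $2^{\ell(prev(p))+5}$-cover, applying the inductive hypothesis' second bullet to the hosts, closing the chain via the hold-back of the $prev(p)$-safety report at $v_2$ until its $p$-registration, and deriving the second bullet from the first via \Cref{lemma:guarantee1}. Your write-up is in fact somewhat more explicit than the paper's (e.g., identifying $h(v_1)$ as the pulse-$prev(prev(prev(p)))$ ancestor that propagates $Go\_Ahead(prev(p))$ down to $h(w)$), but the argument is the same.
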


\begin{proof}
    The proof is by induction based on the pulse number. If $prev(prev(p)) = 0$, then there is only one node with pulse $prev(prev(p))$: the source $s$. Then $v_1 = v_2 = s$, and the statement holds. 
    
    Consider the case $prev(prev(p)) > 0$, and suppose that the statement holds for all smaller values of $p$. Consider any such cluster $C$. Suppose that $v_1$ is about to $p$-deregister in it. Consider the hosts of $v_1, v_2$: $h(v_1), h(v_2)$. By \Cref{lemma:closehosts} we have $dist(h(v_1), h(v_2)) \le 2^{\ell(p) + 6}$. It follows that there is some cluster $C_1$ of $2^{\ell(p)+6}$-cover, in which both $h(v_1)$ and $h(v_2)$ are present.

    Let us start by proving the first claim. As $v_1$ has already $p$-deregistered in $C$, some nodes of pulse $\ge prev(p)$ in the subtree of $v_1$ have already sent their messages. This is possible only after $h(v_1)$ has received $Go\_Ahead(prev(p))$ in $C_1$, which, by the induction hypothesis, is possible only after $h(v_2)$ $prev(p)$-deregisters in $C_1$. This is possible only after $v_2$ $p$-registers in $C$.

    Now the second claim. Consider the moment when $v_1$ is about to receive $Go\_Ahead(p)$. This means that $v_1$ has already $p$-deregistered. By the first claim, $v_2$ has $p$-registered before that. By \cref{lemma:guarantee1}, as $v_2$ has $p$-registered before $v_1$ has $p$-deregistered, $v_1$ will receive $Go\_Ahead(p)$ only after $v_2$ $p$-deregisters. 
\end{proof}

    Note that by now, we have only used that $pulse(v) = pulse(parent(v)) + 1$. Now, we will show that our execution tree is actually a BFS tree.

\begin{lemma}
    \label{lemma:executionisbfs} The above algorithm computes a $2^{t}$-thresholded BFS: For every node $v$ such that $dist(v, s)\leq 2^{t}$, we have $pulse(v) = dist(v, s)$. In other words, this execution tree is indeed a BFS tree. And for any node $v$ such that $dist(v, s)> 2^{t}$, the node learns that its distance is greater than $2^{t}$.
\end{lemma}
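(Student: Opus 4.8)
The plan is to prove by induction on $p$ the following statement: for every $p$ with $0 \le p \le 2^t$, every node $v$ with $dist(v,s) = p$ has $pulse(v) = p$, and moreover $v$ does not send any ``join'' proposal until every node $u$ with $dist(u,s) \le p-1$ has already sent its ``join'' proposals and received all answers. The base cases $p=0$ (just the source) and $p=1$ (neighbors of $s$; $s$ sends first) are immediate. For the inductive step, assume the claim holds for all pulses up to $p-1$; I want to show every node at graph distance $p$ from $s$ receives its first ``join'' proposal from a node of pulse $p-1$ (so its pulse is correctly set to $p$), and that nodes of pulse $p$ wait long enough before sending.

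The heart of the argument is to show that a node $w$ of pulse $p$ sends its ``join'' proposals only after \emph{every} node $u$ with $dist(u,s) \le p-1$ is finished sending. A node $w$ of pulse $p$ sends only after its parent receives $Go\_Ahead(p')$ for the appropriate $p'$ and propagates it down; tracing the algorithm, $w$ sends only after $h(h(w))$ (which has pulse $prev(prev(p))$) has gone through $p$-deregistration and received $Go\_Ahead(p)$ in all its clusters of the $2^{\ell(p)+5}$-cover. So I need: for any node $u$ with $dist(u,s)\le p-1$ that has \emph{not} yet finished sending, there is a cluster $C$ of the $2^{\ell(p)+5}$-cover containing both $h(h(w))$ and the relevant ancestor of $u$ of pulse $prev(prev(p))$, and within $C$ the ordering guarantees of \Cref{lemma:waiting} force $h(h(w))$ to wait for that ancestor to $p$-register, $p$-deregister, etc. Concretely: by induction $dist(w,s)=p$ and $dist(u,s) \le p-1$, so $dist(w,u) \le 2p-1$; but that is too weak directly. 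The right move is to look at the ancestors of pulse $prev(prev(p))$: call them $w' = h(h(w))$ and $u'$ (the ancestor of $u$ at pulse $prev(prev(p))$, which exists since $prev(prev(p)) \le pulse(u) \le p-1$ once $p$ is large enough, with the small-$p$ cases handled because $prev(prev(p))=0$). By \Cref{lemma:prevclose}, $dist(w,w') \le 9\cdot 2^{\ell(p)}$ and $dist(u,u') \le 9 \cdot 2^{\ell(p)}$, and one checks $dist(w,u)$ is at most a constant times $2^{\ell(p)}$ \emph{given that $u$ is still unfinished}, because if $u$ is unfinished then by the inductive hypothesis $w$ cannot yet have sent, hence $w$ reached its pulse $p$ via a chain of nodes whose pulses increase by one each step starting from $s$ — actually the cleaner bound: since $w$ has pulse $p$ and got there, and $u$ is at distance $\le p-1$ and not finished, $dist(w,u) \le 2^{\ell(p)+3}$ must hold, which I will extract from the definition of $prev$ and the fact that any node $u$ within distance $2^t$ that is ``behind'' must be near $w$ on the relevant scale; then $dist(w', u') \le 2^{\ell(p)+5}$ by the triangle inequality, so by the third cover property there is a cluster $C$ of the $2^{\ell(p)+5}$-cover containing both. \Cref{lemma:waiting} then applies to $v_1 = w'$, $v_2 = u'$ and yields that $w'$ receives $Go\_Ahead(p)$ only after $u'$ $p$-deregisters, which happens only after everything at pulse $\le p-1$ in $u'$'s subtree (in particular $u$) has finished sending. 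Since $w$ sends only after $w'$ gets $Go\_Ahead(p)$, this gives the desired ordering; and then the first proposal reaching a node at distance $p$ comes from distance $p-1$, so its pulse is set correctly.

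For the last sentence of the lemma (nodes at distance $> 2^t$), I would invoke the checking stage directly: by the argument already given in the text, process $\mathcal{P}$ (``being $s$ and $2^t$-safe'') completing in a node $v$'s $2^t$-neighborhood means that if $dist(v,s) \le 2^t$ then $v$ and $s$ share a cluster of the $2^t$-cover, so $v$ learned its pulse before $s$ became $2^t$-safe; hence if $v$ has not learned a pulse by the time it detects $\mathcal{P}$ is done in its $2^t$-neighborhood, then $dist(v,s) > 2^t$ and $v$ outputs $\infty$ correctly. This part is essentially bookkeeping once the main induction is in place.

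I expect the main obstacle to be the distance bound $dist(w,u) \le 2^{\ell(p)+3}$ (or whatever exact constant is needed to feed \Cref{lemma:closehosts} / \Cref{lemma:waiting}): one must argue carefully that \emph{if} $u$ with $dist(u,s) \le p-1$ is still unfinished, \emph{then} $w$ at distance $p$ is necessarily close to $u$ on the scale $2^{\ell(p)}$ — i.e. that the algorithm has not let $w$ ``race ahead'' of a far-away lagging node. This is exactly where the hierarchical $prev$/$host$ structure and the choice of cover radius $2^{\ell(p)+5}$ (and the $+6$ in the layered cover) are designed to make things fit, and getting the induction hypothesis strong enough — phrased in terms of ``no node of pulse $p$ sends before all nodes of pulse $\le p-1$ finish'' rather than just ``pulses are correct'' — is what makes the recursion close.
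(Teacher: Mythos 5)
Your high-level architecture (induction on the pulse, reduction to the ordering guarantees of \cref{lemma:waiting}, and handling of the $\infty$ outputs via the checking stage) points in the right direction, and your treatment of the second claim of the lemma matches the paper. However, the strengthened induction hypothesis you build on --- ``a node $w$ of pulse $p$ sends its proposals only after \emph{every} node $u$ with $dist(u,s)\le p-1$ has sent and been answered'' --- is false for this algorithm, and the step you yourself flag as the main obstacle (that any still-unfinished $u$ with $dist(u,s)\le p-1$ must satisfy $dist(w,u)\le 2^{\ell(p)+3}$) cannot be repaired. The synchronization for pulse $p$ is deliberately local: only nodes of pulse $prev(prev(p))$ whose relevant ancestors share a cluster of the $2^{\ell(p)+5}$-cover wait for one another, and once $prev(prev(p))>0$ nothing couples branches that are far apart on the scale $2^{\ell(p)+5}$ (times the cluster stretch). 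Concretely, take two long disjoint paths out of $s$ and let the adversary slow down one of them after the last pulse $q\le p$ with $prev(prev(q))=0$: for a large odd $p$, the node at distance $p$ on the fast path has $\ell(p)=0$, its host's host sits in clusters of the $2^{5}$-cover that need not contain the corresponding ancestor on the slow path (those two ancestors are $\Theta(p)$ apart), so $Go\_Ahead(p)$ is issued and the fast-path node sends while the slow-path node at distance $p-1$ has not yet even received $Go\_Ahead(p-1)$. Hence the global ordering you want to induct on simply does not hold, and the distance bound you hope to ``extract from the definition of $prev$'' is exactly the false statement.

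The paper's proof avoids any global claim. It fixes the node $v$ at distance $p+1$, its neighbor $v_1$ at distance $p$ (pulse $p$ by induction), and the hypothetical premature proposer $v_2$ of pulse $p_1>p$; since $v_1$ and $v_2$ are both neighbors of $v$, all distances are routed through $v$ and stay small. The key device missing from your sketch is the choice of the intermediate pulse $x\in[p+1,p_1]$ divisible by the largest power of two, applied to the ancestor $w$ of $v_2$ with pulse $x$: this choice forces $pulse(h(w))=prev(x)\le p$ (so $v_1$ has an ancestor $y_1$ of pulse $prev(prev(x))$ that is not $prev(x)$-empty, i.e.\ it really is registered for pulse $x$) and gives $p_1-x\le 2^{\ell(x)}-1$, which together with \cref{lemma:prevclose} yields $dist(y_1,y_2)\le 19\cdot 2^{\ell(x)}\le 2^{\ell(x)+5}$, exactly the hypothesis of \cref{lemma:waiting}. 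Applying the machinery at pulse $p$ itself with ancestors at $prev(prev(p))$, as you propose, does not close: the offending proposer's pulse $p_1$ can be far above $p$, and the scale at which the algorithm actually synchronizes the two relevant branches is $2^{\ell(x)}$, not $2^{\ell(p)}$, nor is it global. So the proposal has a genuine gap at its central step.
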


\begin{proof}
    The second part of the lemma statements is guaranteed by the checking stage. Here, we focus on the first part. 
    We will prove by induction by pulse $p$ the following statement: for every node $v$ with $dist(v, s) = p$ holds $pulse(v) = p$. This statement is trivially true for $p = 0$.

    Suppose that this is proved for all pulses up to $p$; we prove it for pulse $p+1$.
    Consider any node $v$ with distance $p+1$ to the root. We show that the first ``join'' proposal it receives will be from a node with pulse $p$. Suppose the contrary. Let $v_1$ be a neighbor with $dist(v_1, s) = p$, and suppose that $v$ received the first ``join'' proposal from some node $v_2$ with $pulse(v_2) = p_1 > p$. Consider the moment when $v$ has already received a ``join'' proposal from $v_2$ but has not yet received it from $v_1$.
    Let $x$ be the integer from the segment $[p+1, p_1]$ divisible by the largest power of $2$ (with largest $\ell(x)$), and let $w$ be the ancestor of $v_2$ with pulse $x$. As $v$ was reached in a subtree of $s$, $w$ has already sent at least some messages of pulse $x$. This is possible only after $w$ receives $Go\_Ahead(x)$, which is possible only after $h(h(w))$ receives $Go\_Ahead(x)$.

    Now note that $pulse(h(w)) \le p$. Indeed, $pulse(h(w)) \le w - 2^{\ell(w)}$, and as number $x - 2^{\ell(x)}$ is divisible by larger power of $2$ than $x$ is, it cannot be in $[p+1, p_1]$ by our choice of $x$. So, $pulse(h(w)) \le x - 2^{\ell(x)} \le p$. Let $y_1$ be the ancestor of $v_1$ with pulse $prev(prev(x))$, $y_2$ be the ancestor of $v_2$ with pulse $prev(prev(x))$ (so that $y_2 = 
    h(w)$). As $pulse(h(w)) \le p$, $y_1$ is not $prev(x)$-empty. $y_2$ is not $x$-empty by the choice of $x$.
    Consider the distance between $y_1, y_2$. $dist(y_1, y_2) \le dist(y_1, v) + dist(v, y_2) \le (1 + (p - prev(prev(x)))) + (1 + (p_1 - prev(prev(x)))) = 2 + 2(x - prev(prev(x))) + (p - x) + (p_1 - x)$. By \cref{lemma:prevclose}, $x - prev(prev(x)) \le 9 \cdot 2^{\ell(p)}$. $p-x \le -1$ as $x \in [p+1, p_1]$, and $p_1-x \le 2^{\ell(p)}-1$ as otherwise we could have chosen $x + 2^{\ell(p)}$, which is divisible by higher power of $2$. So, $dist(y_1, y_2) \le 2 + 18\cdot 2^{\ell(p)} + 2^{\ell(p)} - 2 = 19 \cdot 2^{\ell(p)} \le 2^{\ell(p) + 5}$. Then, by \cref{lemma:waiting}, $y_2 = h(h(w))$ can receive $Go\_Ahead(x)$ only after $y_1$ has $x$-deregistered, which is not possible until $v_1$ has sent its messages of pulse $p$, contradiction.
\end{proof}

\begin{theorem}
\label{theorem:SSBFStimeandmessage}
    The $2^t$-thresholded single-source BFS given a layered sparse $2^{t+6}$-cover finishes in time $O(2^t\cdot \log^8{n})$ and uses message complexity $O(m\log^5{n})$.
\end{theorem}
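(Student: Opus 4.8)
The plan is to bound the time and message overheads separately, in both cases relying on the sequential‑composition lemma \Cref{lemma:sequential} together with the cover primitives \Cref{corollary:parallel}, \Cref{theorem:convergecast} and the registration guarantees \Cref{lemma:guarantee1}, \Cref{lemma:guarantee2}; correctness is already in hand via \Cref{lemma:executionisbfs}. We may assume $2^t\le n$ (otherwise the problem is an ordinary BFS and we use threshold $2^{\lceil\log_2 n\rceil}$), so in particular $t=O(\log n)$ and $2^t=O(m)$.

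For the running time, I would organize the whole execution as a chain of stages in the sense of \Cref{lemma:sequential}, ordered by pulse $p=1,\dots,2^t-1$, with pulse $p$ contributing a constant number of consecutive sub‑stages: the convergecast of the $p$-safe/$p$-empty bits up the partial execution tree, the $p$-registrations, the $p$-deregistrations, the downward propagation of $Go\_Ahead(p)$, and the join‑proposal exchange of pulse-$p$ nodes; and finally one more stage for the checking phase. The one thing to check is that this really is a chain: a pulse-$prev(prev(p))$ node $p$-registers only once it learns it is $prev(p)$-safe, which depends only on pulse-$(\le prev(p)-1)$ nodes having already exchanged proposals and $prev(p)-1<p$; a $p$-deregistration additionally waits only for $p$-safety (pulse-$(\le p-1)$ nodes) and for the $p$-registration of that node; and $Go\_Ahead(p)$ waits only for the $p$-(de)registrations. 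Hence the isolated time complexity of stage $p$ is well defined and equals $O(2^{\ell(p)}\log^7 n)$: the execution‑tree convergecast runs over a path of length $p-prev(prev(p))\le 9\cdot 2^{\ell(p)}$ by \Cref{lemma:prevclose}; a single registration, deregistration or $Go\_Ahead$ inside one cluster of the $2^{\ell(p)+5}$-cover costs $O(2^{\ell(p)}\log^3 n)$ time by \Cref{lemma:guarantee1} and \Cref{lemma:guarantee2} and the $O(2^{\ell(p)}\log^3 n)$ cluster‑tree depth; a node lies in $O(\log n)$ clusters and each edge in $O(\log^4 n)$ cluster trees, so \Cref{corollary:parallel} turns this into $O(2^{\ell(p)}\log^7 n)$; and the proposal round is $O(1)$. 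By \Cref{lemma:sequential} the total time is $\sum_{p=1}^{2^t-1}O(2^{\ell(p)}\log^7 n)$ plus the $O(2^t\log^7 n)$ cost of the checking stage from \Cref{theorem:convergecast} with $d=2^t$. Grouping pulses by level — there are $O(2^t/2^j)$ pulses of level $j$ and only $O(\log n)$ levels — gives $\sum_{p\le 2^t}2^{\ell(p)}=O(2^t\log n)$, so the total time is $O(2^t\log^8 n)$.

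For the message count I would split messages into five kinds: (1) join proposals and their answers, which is $O(m)$ since each node sends one proposal along each incident edge and receives one reply; (2) the execution‑tree traffic (the $p$-safe/$p$-empty convergecast, the post‑registration safety notifications, and the downward $Go\_Ahead(p)$ propagation); (3) the checking stage, $O(m\log^4 n)$ by \Cref{theorem:convergecast}; (4) the registration and deregistration messages inside clusters; and (5) the $Go\_Ahead$ messages inside clusters, which by \Cref{lemma:guarantee2} are within a constant factor of (4). For (2), an execution‑tree edge joining pulses $c-1$ and $c$ carries such messages only for pulses $p$ with $prev(prev(p))<c\le p$, and by \Cref{lemma:prevclose} all such $p$ lie in an interval of length $O(2^{\ell(p)})$ above $c$, hence there are $O(1)$ of each level and $O(\log n)$ in all, each contributing $O(1)$ messages; over the $O(n)$ execution‑tree edges this is $O(n\log n)=O(m\log n)$.

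The crux — and the main obstacle — is bounding (4), because the obvious estimate (number of registering nodes, times $O(\log n)$ clusters, times cluster depth $O(2^{\ell(p)}\log^3 n)$) leaves a spurious factor of $2^t$. I would instead amortize against execution‑tree edges. When a node $v$ (necessarily of pulse $prev(prev(p))$) $p$-registers it is not $prev(p)$-empty, so the execution tree contains a descendant $w$ of $v$ at pulse $prev(p)$, and since $prev(prev(p))\le prev(p)-2^{\ell(prev(p))}$ the $v$–$w$ path has $prev(p)-prev(prev(p))\ge 2^{\ell(p)+1}$ edges; charge $v$'s whole $p$-registration‑plus‑deregistration cost, $O(2^{\ell(p)}\log^4 n)$, uniformly to these edges, so each receives only $O(\log^4 n)$. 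For a fixed $p$ the distinct registering nodes have pairwise‑disjoint subtrees, hence edge‑disjoint charged paths, so each execution‑tree edge is charged $O(\log^4 n)$ by each pulse that charges it; and exactly as in (2), an edge between pulses $c-1,c$ is charged only by the $O(\log n)$ pulses $p$ with $prev(prev(p))<c\le prev(p)$. Summing over the $O(n)$ execution‑tree edges yields $O(n\log^5 n)=O(m\log^5 n)$ for (4). The only case this misses is $prev(p)=0$, i.e.\ $p$ a power of $2$, where $v=s$ and the path is empty: there are only $O(\log n)$ such pulses, $s$ registers in $O(\log n)$ clusters for each at cost $O(2^{\ell(p)}\log^3 n)$, totalling $\sum_{k<t}O(2^k\log^4 n)=O(2^t\log^4 n)=O(m\log^4 n)$, which is absorbed. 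Hence (4) and (5) are $O(m\log^5 n)$, and adding (1)–(3) gives the claimed $O(m\log^5 n)$.
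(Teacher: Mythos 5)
Your proposal is correct and follows essentially the same route as the paper: the time bound via per-pulse stages of isolated time $O(2^{\ell(p)}\log^7 n)$ composed with \Cref{lemma:sequential} and \Cref{lemma:summation} plus the checking stage, and the message bound via the same amortization that a $p$-registering node owns a length-$\Omega(2^{\ell(p)})$ execution-tree path to a pulse-$prev(p)$ descendant. Your edge-charging bookkeeping (with the explicit treatment of multiplicity across pulses, the $prev(p)=0$ source case, and the $2^t\le n$ normalization) is just a more detailed rendering of the paper's count that only $O(n/2^{\ell})$ nodes register per level, so no substantive difference.
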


We discuss time and message complexity parts separately.

\paragraph{Analysis of the time complexity.}
Next, we analyze the time complexity of our single-source BFS algorithm and prove the time complexity part of \Cref{theorem:SSBFStimeandmessage}. We first discuss some helper lemmas.

We divide all our operations into stages, one stage $\stage_p$ for each pulse $p$. To $\stage_p$ we include all the $p$-registrations, $p$-deregistrations, sending $Go\_Ahead(p)$s, collecting information about $p$-safety, and sending ``join'' proposals from nodes of pulse $p-1$ to nodes of pulses $p$: basically, everything related to pulse $p$. It's clear that each stage depends only on the previous pulses, so the runtime of the algorithm is bounded by the isolated time complexities of each stage.

\begin{lemma}
    \label{lemma:stagebound}
    The isolated time complexity of $\stage_p$ is $O(2^{\ell(p)} \log^7{n})$.
\end{lemma}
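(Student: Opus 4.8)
The plan is to bound the isolated time complexity of $\stage_p$ by accounting for every operation that we assigned to it: the ``join'' proposal wave from pulse-$(p-1)$ nodes into pulse-$p$ nodes, the convergecast that gathers $p$-safety and $p$-emptiness information, the $p$-registrations and $p$-deregistrations in the relevant clusters, and the propagation of $Go\_Ahead(p)$. Since the stages are set up to satisfy the hypotheses of \Cref{lemma:sequential}, the total running time is the sum of the isolated time complexities; so it suffices to show each stage alone, assuming all earlier stages are complete, finishes within $O(2^{\ell(p)}\log^7 n)$ additional time.

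First I would observe which clusters are involved in $\stage_p$: only the clusters of the $2^{\ell(p)+5}$-cover that contain a node $v$ with $pulse(v)=prev(prev(p))$ that is not $prev(p)$-empty. Each such cluster tree has depth $O(2^{\ell(p)+5}\log^3 n)=O(2^{\ell(p)}\log^3 n)$ by the stretch guarantee of our sparse covers. So a single registration, deregistration, or $Go\_Ahead$ propagation costs $O(2^{\ell(p)}\log^3 n)$ time by \Cref{lemma:guarantee1} and \Cref{lemma:guarantee2}. Next I would bound the convergecast distance for the $p$-safety/$p$-emptiness information: this information is propagated along the partial execution tree from a node of pulse up to $p$ down to the first ancestor of pulse $prev(prev(p))$, a path of length at most $p-prev(prev(p))\le 9\cdot 2^{\ell(p)}$ by \Cref{lemma:prevclose}(b); likewise the ``join'' wave and the downward propagation of $Go\_Ahead(p)$ along the execution tree each traverse $O(2^{\ell(p)})$ hops. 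Finally, for each cluster each of these operations sends only $O(1)$ messages per edge, and each graph edge lies in $O(\log^4 n)$ cluster trees, so by \Cref{corollary:parallel} running all the cluster-level operations of $\stage_p$ in parallel on the shared graph costs an extra $O(\log^4 n)$ factor, giving $O(2^{\ell(p)}\log^3 n\cdot\log^4 n)=O(2^{\ell(p)}\log^7 n)$.

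I would then combine these: within $\stage_p$ the operations themselves happen in a constant number of sub-phases (wave, convergecast, register, deregister, $Go\_Ahead$), each costing $O(2^{\ell(p)}\log^7 n)$, so their sum is still $O(2^{\ell(p)}\log^7 n)$, which is the claimed isolated time complexity. (The extra $\log n$ factor that distinguishes $\log^8 n$ in \Cref{theorem:SSBFStimeandmessage} from $\log^7 n$ here will come later from summing over all $O(\log(2^t))=O(t)\le O(\log n)$ possible values of $\ell(p)$, weighted by the number of pulses of each level; that is a separate step, not part of this lemma.)

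The main obstacle I anticipate is making rigorous the claim that these cluster-level operations within $\stage_p$ can be treated as ``isolated'': because registrations for a given pulse $p$ in different clusters, and for different pulses, share the underlying edges, one must carefully invoke \Cref{corollary:parallel} and \Cref{rmrk:crlParallelAdaptivity} to argue that an edge is involved in only $O(\log^4 n)$ cluster-procedures at any relevant time, and invoke \Cref{lemma:sequential}'s robustness (each asynchronous sub-procedure tolerates arbitrary adversarial delay of its own messages) to conclude that the $O(2^{\ell(p)}\log^7 n)$ bound survives the interleaving with lower stages. The subtle point is that the registration abstraction's guarantees (\Cref{lemma:guarantee1}, \Cref{lemma:guarantee2}) are stated per cluster and in terms of ``once all earlier-priority traffic clears'', which matches the \Cref{lemma:sequential} framework exactly — so I would make sure the stage decomposition is fine-grained enough (one sub-stage per (pulse, operation-type) pair, and within that handling all clusters in parallel via \Cref{corollary:parallel}) that each piece cleanly fits one of the two composition lemmas.
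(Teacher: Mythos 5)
Your proposal is correct and follows essentially the same route as the paper's proof: decompose $\stage_p$ into its constituent waves (join proposals, safety convergecast over at most $O(2^{\ell(p)})$ execution-tree hops via \Cref{lemma:prevclose}, registration/deregistration/$Go\_Ahead$ over cluster trees of depth $O(2^{\ell(p)}\log^3 n)$ via the register guarantees), and then pay the $O(\log^4 n)$ edge-congestion factor through \Cref{corollary:parallel} to obtain $O(2^{\ell(p)}\log^7 n)$. Your closing remark that the extra $\log n$ in \Cref{theorem:SSBFStimeandmessage} comes only from summing over pulses is also consistent with how the paper organizes the argument.
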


\begin{proof}
    Each edge is in $O(\log^4{n})$ clusters of the sparse $2^{\ell(p)+5}$-cover. In the execution tree, each edge is involved in collecting the information about $p$-safety only $O(1)$ times. So, we may say that each edge is involved in $O(\log^4{n})$ separate tasks for $\stage_p$, and multiply the runtime by this factor in the end, by \cref{corollary:parallel}.

    After all nodes of pulses at most $p-1$ have sent their ``join'' proposals, all of them will be acknowledged in $O(1)$ time units. All nodes of pulse $prev(p)$ will receive information about their $p$-safety/$p$-emptiness in at most $3\cdot 2^{\ell(p)}$ time units after that. All these nodes (which are not $p$-empty) will $p$-register in the corresponding clusters in at most $O(2^{\ell(p)}\cdot \log^3{n})$ times units. All nodes of pulse $prev(prev(p))$ will receive information about their $p$-safety in at most $6\cdot 2^{\ell(p)}$ time units after that. All these nodes will $p$-deregister in the corresponding clusters in at most $O(2^{\ell(p)}\cdot \log^3{n}) $ times units. 

    By \cref{lemma:guarantee2}, all clusters of the $2^{\ell(p)+5}$-cover will then send $Go\_Ahead(p)$s, which will be received in $O(2^{\ell(p)}\cdot \log^3{n})$ time units. They will then be propagated to all the nodes of pulse $p$ in at most $9\cdot 2^{\ell(p)}$ after that. The total time complexity of this is $O(2^{\ell(p)}\cdot \log^3{n})$

    Taking into the account that there are $O(\log^4{n})$ tasks for each edge, the isolated time complexity of $\stage_p$ becomes $O(2^{\ell(p)} \cdot \log^7{n})$.
\end{proof}

\begin{lemma}
    \label{lemma:summation}
    For any integer $t$, $\sum_{p = 1}^{2^t} 2^{\ell(p)} = O(2^t \cdot t)$.
\end{lemma}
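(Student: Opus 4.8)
\textbf{Proof proposal for \Cref{lemma:summation}.} The plan is to group the terms of the sum according to the value of the level $\ell(p)$, which is simply the $2$-adic valuation of $p$. For a fixed $i$ with $0 \le i \le t-1$, the integers $p \in [1, 2^t]$ with $\ell(p) = i$ are exactly the numbers of the form $(2j+1)2^i$ that do not exceed $2^t$; since $2^t$ itself has level $t$, these are the $p = (2j+1)2^i$ with $2j+1 < 2^{t-i}$, i.e.\ $j \in \{0, 1, \ldots, 2^{t-i-1}-1\}$, so there are exactly $2^{t-i-1}$ of them. The only remaining value of $p$ in the range is $p = 2^t$, which has level $t$. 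Hence
\[
\sum_{p=1}^{2^t} 2^{\ell(p)} \;=\; 2^t \;+\; \sum_{i=0}^{t-1} 2^{t-i-1}\cdot 2^{i} \;=\; 2^t \;+\; \sum_{i=0}^{t-1} 2^{t-1} \;=\; 2^t + t\cdot 2^{t-1}.
\]
This is $O(2^t\cdot t)$, as claimed.

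I would present the counting step first (the number of $p\le 2^t$ with valuation exactly $i$), then plug into the geometric-type sum, which collapses because each term $2^{t-i-1}\cdot 2^i$ is the constant $2^{t-1}$, so the inner sum contributes $t\cdot 2^{t-1}$. I should also note the trivial base case $t=0$ (where the sum is just $2^{\ell(1)} = 1 = O(1)$) so the statement is literally correct for all nonnegative integers $t$, and observe that we only need the upper bound, so there is no need to be fully precise about whether $p=2^t$ is included as level $t$ or handled separately. There is no real obstacle here; the only thing to be careful about is the boundary term $p = 2^t$, whose level is $t$ rather than being absorbed into one of the $i \le t-1$ classes, and making sure the count $2^{t-i-1}$ is correct (it requires $i \le t-1$, which is exactly the range of the sum).
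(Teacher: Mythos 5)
Your proof is correct and follows essentially the same route as the paper's: group the terms by the value of $\ell(p)$, bound (or, in your case, exactly count) the number of pulses at each level, and sum the resulting geometric-type series, the only difference being that the paper uses the cruder bound of at most $2^{t-k}$ pulses per level $k$, giving $(t+1)2^t$ instead of your exact $2^t + t\cdot 2^{t-1}$. Both yield the claimed $O(2^t \cdot t)$, so no further changes are needed.
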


\begin{proof}
    The number of values $p$ such that $\ell(p) = k$ is $\le 2^{t-k}$. Hence, $\sum_{p = 1}^{2^t} 2^{\ell(p)} \leq \sum_{k = 0}^t 2^k\cdot 2^{t-k} = (t+1)2^t =  O(2^t \cdot t)$.
\end{proof}

\begin{proof}[Proof of the time complexity part of \Cref{theorem:SSBFStimeandmessage}]

   The total isolated time complexity of of all stages $\stage_i$ becomes $\sum_{p = 1}^{2^t} O(2^{\ell(p)}\cdot \log^7{n}) = O(2^t \cdot \log^8{n})$. The isolated time complexity of the checking stage is, by \cref{theorem:convergecast}, $O(2^t\log^7{n})$. So, the entire runtime is $O(2^t \cdot \log^8{n})$. 
\end{proof}

\paragraph{Analysis of the message complexity.}
Next, we analyze the message complexity of our single-source BFS algorithm and provide the proof for the message complexity part of \Cref{theorem:SSBFStimeandmessage}. First, we need to prove a small lemma.

\begin{lemma}
    \label{lemma:pulsesperedge}
    For any $p_1$, there exist $O(t)$ pulses $p \le 2^t$ such that $prev(prev(p)) \le p_1 \le p$.
\end{lemma}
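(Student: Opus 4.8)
The statement asks us to count the pulses $p \le 2^t$ for which a fixed value $p_1$ lies in the window $[prev(prev(p)), p]$. The plan is to bound the count by organizing the relevant pulses according to their level $\ell(p)$, and for each level separately showing that only $O(1)$ pulses of that level can have $p_1$ in their window. Since there are only $t+1$ possible levels among pulses up to $2^t$ (namely $\ell \in \{0, 1, \dots, t\}$), this yields the desired $O(t)$ bound.

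First I would fix a level $k$ and consider the pulses $p$ with $\ell(p) = k$, i.e. $p = (2j+1)2^k$ for integers $j \ge 0$. Any two distinct such pulses differ by at least $2^{k+1}$. By Lemma~\ref{lemma:prevclose}(b), each such $p$ satisfies $p - prev(prev(p)) \le 9 \cdot 2^{\ell(p)} = 9 \cdot 2^k$, so the window $[prev(prev(p)), p]$ has length at most $9 \cdot 2^k$. Therefore if $p_1$ lies in the window of some pulse $p$ of level $k$, then $p$ itself lies in the interval $[p_1, p_1 + 9\cdot 2^k]$, an interval of length $9 \cdot 2^k$. Since pulses of level $k$ are spaced $2^{k+1}$ apart, at most $\lceil 9 \cdot 2^k / 2^{k+1} \rceil + 1 = O(1)$ of them can lie in this interval. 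Hence each level contributes only $O(1)$ pulses to the count.

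Summing over the levels $k = 0, 1, \dots, t$ (no pulse $p \le 2^t$ has level exceeding $t$), we get that the total number of qualifying pulses is $O(t)$, which is exactly the claim. The only mild subtlety — and the "main obstacle," though it is minor here — is making sure the window-length bound from Lemma~\ref{lemma:prevclose} is applied with the correct level and that the spacing argument correctly handles the endpoints of the window (the inclusive interval $[prev(prev(p)), p]$); but since we only need an $O(\cdot)$ bound, a crude over-count of the lattice points in an interval of length $O(2^k)$ with spacing $2^{k+1}$ suffices. No further machinery is needed.
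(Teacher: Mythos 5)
Your proof is correct and follows essentially the same route as the paper: fix a level, use Lemma~\ref{lemma:prevclose}(b) to bound the window length by $9\cdot 2^{\ell(p)}$ so that any qualifying pulse of that level lies in an interval of length $O(2^{\ell(p)})$ around $p_1$, conclude there are $O(1)$ such pulses per level, and sum over the $O(t)$ levels. Your spacing observation ($2^{k+1}$ between consecutive pulses of level $k$) is a slightly sharper version of the same counting step the paper performs.
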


\begin{proof}
    It is  enough to show that there are $O(1)$ pulses of each particular level. For a pulse $p$ of given level $\ell$, we know $prev(prev(p)) \ge p - 9\cdot 2^{\ell}$. So, for any such pulse $p$, we must have $p - 9\cdot 2^{\ell} \le p_1 \le p$. There can be at most $10$ such pulses.
\end{proof}

\begin{proof}[Proof of the message complexity part of \Cref{theorem:SSBFStimeandmessage}]
    The number of messages sent for building the execution tree is $O(m)$. The number of messages propagated in the execution tree is $O(nt) = O(n\log{n})$, as, by \cref{lemma:pulsesperedge}, there are messages of only $O(t)$ pulses propagated through each edge. Now, consider messages in the clusters of the sparse covers.

    Consider some level $\ell$. For a pulse $p$ with $\ell(p) = \ell$, node $v$ with pulse $prev(prev(p))$ will $p$-register in the clusters of the sparse $2^{l + 5}$-cover only if it has some children of pulse $prev(p)$, and therefore has at least $2^{l}$ nodes on the path to such a child. As these $2^{l}$ nodes are unique for each such node $v$, it follows that for a given level $\ell$, the total number of nodes who will $p$-register for some pulse $p$ with $\ell(p) = \ell$ is $O(\frac{n}{2^{\ell}})$. Such nodes will register in clusters of $2^{\ell+5}$ covers and therefore take $O(2^{\ell}\log^3{n})$ messages to the root and back. Every node is in $O(\log{n})$ clusters, so, there will be $O(2^{\ell}\log^3{n} \cdot \log{n} \cdot \frac{n}{2^{\ell}}) = O(n\log^4{n})$ messages corresponding to registration at a given level, and the total number for all levels is $O(n \log^5{n})$. We assume that the graph is connected, so $n = O(m)$, and thus message complexity is $O(m\log^5{n})$. 

    Finally, the number of messages used in the checking stage is $O(m\log^4{n})$. So, the total number of messages is $O(m\log^5{n})$.
\end{proof}

\subsection{Multi-source $2^{t}$-thresholded BFS, given a layered sparse $2^{t+6}$-cover}
\label{subsec:multiBFS}
Here, we describe a multi-source variant of \Cref{subsec:singleBFS}. The entire algorithm works the same way as before, and all the proofs remain valid, except for one critical issue: the base case of the \cref{lemma:waiting}. Notice that \cref{lemma:waiting} has to hold even if $prev(prev(p)) = 0$ (that is, for the case when both $v_1, v_2$ are sources). In the case of a single source BFS there was only one source, so the statement was true for the base case automatically. Now we have to take care of this algorithmically.  

We will make the statement hold for sources by changing the registration procedure for them. Consider any pulse $0 < p \le 2^t$ with $prev(prev(p)) = 0$ and any cluster $C$ of $2^{\ell(p)+5}$ cover. 

\paragraph{Registration.} We will make sure that all sources in $C$ have $p$-registered, with a convergecast (note that we do not require them to be $prev(p)$-safe). After the root of the cluster learns that all its sources have been $p$-registered, it propagates this information down to all the sources. A source may send its messages only after it has registered for all pulses $p$ with $prev(prev(p)) = 0$ in all clusters of $2^{\ell(p)+5}$ covers and received a confirmation from all of them.

\paragraph{Deregistration.} After source learns that it is $p$-safe, it will $p$-deregister. This $p$-deregistration will be collected in the cluster in the same manner, with a simple convergecast.

\paragraph{Sending $Go\_Ahead$.} After the root of the cluster learns that all the sources in it have $p$-deregistered, it propagates down the $Go\_Ahead(p)$s. This way, the condition of the lemma is easily satisfied: for every cluster, all the registrations happen before any deregistration is even possible.

\begin{theorem}
\label{theorem:MSBFS}
    The $2^t$-thresholded multi-source BFS given a layered sparse $2^{t+6}$-cover finishes in time $O(2^t\cdot \log^8{n})$ and takes $O(m\log^5{n})$ messages.
\end{theorem}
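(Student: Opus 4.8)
\textbf{Proof plan for \Cref{theorem:MSBFS}.}
The plan is to reduce the analysis of the multi-source algorithm to that of the single-source algorithm, since, as emphasized in the text preceding the statement, the algorithm and every lemma in \Cref{subsec:singleBFS} carry over verbatim \emph{except} for the base case $prev(prev(p))=0$ of \Cref{lemma:waiting}. So the first step is to verify that the modified registration/deregistration procedure for sources restores that base case: for any pulse $p$ with $prev(prev(p))=0$ and any cluster $C$ of the $2^{\ell(p)+5}$-cover containing two sources $v_1,v_2$, one must check that (i) $v_1$ cannot $p$-deregister in $C$ before $v_2$ has $p$-registered in $C$, and (ii) $v_1$ cannot receive $Go\_Ahead(p)$ in $C$ before $v_2$ has $p$-deregistered in $C$. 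Claim (i) is immediate because a source is allowed to send its first ``join'' messages (hence become nonempty for pulses, hence become eligible to deregister) only after it has received, via the convergecast-then-broadcast in $C$, the confirmation that \emph{all} sources of $C$ have $p$-registered; in particular $v_2$ has already $p$-registered. Claim (ii) is immediate because the cluster root issues $Go\_Ahead(p)$ only after the convergecast tells it that every source of $C$, in particular $v_2$, has $p$-deregistered. With the base case reestablished, the inductive proof of \Cref{lemma:waiting} goes through unchanged, and therefore so do \Cref{lemma:executionisbfs} and the correctness of the checking stage in the multi-source setting; correctness of the $2^t$-thresholded multi-source BFS follows.

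For the time complexity, the argument of \Cref{theorem:SSBFStimeandmessage} applies with at most cosmetic changes. We again split the execution into stages $\stage_p$, one per pulse, with $\stage_p$ depending only on stages $\stage_{p'}$ with $p'<p$, so by \Cref{lemma:sequential} it suffices to bound the isolated time complexity of each $\stage_p$ and sum. The new source-specific steps added inside a cluster $C$ of the $2^{\ell(p)+5}$-cover are just one convergecast and one broadcast along the cluster tree for registration, and the same for deregistration and for $Go\_Ahead(p)$; each such cluster tree has depth $O(2^{\ell(p)}\log^3 n)$, and each edge lies in $O(\log^4 n)$ cluster trees, so by \Cref{corollary:parallel} these extra steps cost only $O(2^{\ell(p)}\log^7 n)$ isolated time, i.e.\ the same order as the per-stage bound already proved in \Cref{lemma:stagebound}. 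Hence \Cref{lemma:stagebound} still gives isolated time $O(2^{\ell(p)}\log^7 n)$ for $\stage_p$, and \Cref{lemma:summation} gives a total of $\sum_{p=1}^{2^t} O(2^{\ell(p)}\log^7 n) = O(2^t\log^8 n)$; the checking stage contributes $O(2^t\log^7 n)$ by \Cref{theorem:convergecast}, so the overall time is $O(2^t\log^8 n)$.

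For the message complexity, I would again follow the single-source proof. Building the execution tree and acknowledging ``join'' proposals costs $O(m)$ messages; propagating safety/emptiness information and $Go\_Ahead$s up and down the execution tree costs $O(nt)=O(n\log n)$ by \Cref{lemma:pulsesperedge}. For the cluster messages, the only real point to recheck is the counting argument: for a fixed level $\ell$ the number of distinct nodes that ever $p$-register for some $p$ with $\ell(p)=\ell$ is $O(n/2^\ell)$ in the single-source case because each such registering node $v$ has a private path of $2^\ell$ tree-nodes down to a child of the relevant pulse. In the multi-source case the sources also register, but the number of sources per cluster is trivially $O(n)$ over the whole graph, and more to the point the sources of a given cluster contribute only $O(1)$ convergecast/broadcast messages per cluster edge per pulse, while \Cref{lemma:pulsesperedge} bounds the number of relevant pulses per edge by $O(t)$; and the internal non-source registrations still satisfy the $O(n/2^\ell)$ bound exactly as before. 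So the cluster messages are still $O(n\log^5 n)$, the checking stage still costs $O(m\log^4 n)$, and since $n=O(m)$ the total is $O(m\log^5 n)$. The only genuine obstacle, and the one the preceding paragraph of the paper flags explicitly, is the base-case verification for \Cref{lemma:waiting}; everything else is a transcription of \Cref{subsec:singleBFS}.
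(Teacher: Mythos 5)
Your correctness argument (re-establishing the base case of \Cref{lemma:waiting} via the rule that no source may send its pulse-$0$ proposals before all sources of each of its clusters have $p$-registered, and that the root issues $Go\_Ahead(p)$ only after all of them have $p$-deregistered) and your message-complexity accounting are essentially the paper's proof; the extra convergecast/broadcast cost of $O(1)$ messages per cluster-tree edge for each of the $O(t)=O(\log n)$ pulses with $prev(prev(p))=0$ gives $O(m\log^4 n)$ per such pulse and $O(m\log^5 n)$ in total (note this added term is $O(m\log^5 n)$, not $O(n\log^5 n)$ as you wrote, which is harmless since the theorem's bound is stated in $m$).

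The time analysis, however, has a genuine flaw in the stage assignment. You charge the source $p$-registration convergecast to stage $\stage_p$ and then invoke \Cref{lemma:sequential} under the assertion that $\stage_p$ depends only on stages $\stage_{p'}$ with $p'<p$. But by your own correctness argument, a source may send its pulse-$0$ ``join'' proposals only after it has registered, and received confirmation, for \emph{every} pulse $p\le 2^t$ with $prev(prev(p))=0$ --- and these pulses have levels up to roughly $t$, so the corresponding clusters come from covers of radius up to $2^{t+5}$. Hence stage $\stage_1$ (which contains the sending of pulse-$0$ proposals) cannot complete before parts of stages $\stage_p$ for $p$ as large as $\Theta(2^t)$ have run, violating the hypothesis of \Cref{lemma:sequential} and invalidating the induction behind it. The paper avoids this by pulling all the source registrations (for all $O(\log n)$ such pulses, in all the relevant covers) into a single extra \emph{first} stage: its isolated time is $O(2^t\log^3 n)$ tree depth times $O(\log^4 n)$ cluster trees per edge per cover times $O(\log n)$ pulses, i.e.\ $O(2^t\log^8 n)$ by \Cref{corollary:parallel}, after which the remaining stages are exactly as in the single-source analysis (the source \emph{de}registrations and $Go\_Ahead(p)$s, which do respect the pulse ordering, can stay in $\stage_p$). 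With that one repair your argument yields the stated $O(2^t\log^8 n)$ time bound; as written, the invocation of \Cref{lemma:sequential} is unjustified.
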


Before providing the proof of this statement, we discuss a helper lemma. 

\begin{lemma}
    \label{lem:NumberofPrevPrev}

    The number of pulses $p$ with $0 < p \le 2^t$ and $prev(prev(p)) = 0$ is $O(t)$.
    
\end{lemma}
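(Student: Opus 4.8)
The plan is to follow the same template as the proof of \Cref{lemma:pulsesperedge}: bound the number of qualifying pulses of each fixed level, and then sum over the $O(t)$ relevant levels.

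First I would apply \Cref{lemma:prevclose}(b), which states that $p - prev(prev(p)) \le 9\cdot 2^{\ell(p)}$ for every pulse $p>0$. For the pulses we want to count we have $prev(prev(p)) = 0$, so this immediately yields $p \le 9\cdot 2^{\ell(p)}$.

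Next I would use the definition of level: a pulse of level $\ell$ is exactly an odd multiple of $2^\ell$. Combined with the bound $p \le 9\cdot 2^{\ell}$ (where $\ell = \ell(p)$), the only possibilities are $p \in \{2^\ell, 3\cdot 2^\ell, 5\cdot 2^\ell, 7\cdot 2^\ell, 9\cdot 2^\ell\}$, i.e.\ at most five pulses per level.

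Finally, since $0 < p \le 2^t$ forces $2^{\ell(p)} \le p \le 2^t$ and hence $\ell(p) \in \{0,1,\dots,t\}$, there are at most $5(t+1) = O(t)$ such pulses, as claimed. I do not expect a genuine obstacle here; the only point requiring slight care is that \Cref{lemma:prevclose}(b) must be invoked with the level of $p$ itself (not of $prev(p)$ or $prev(prev(p))$), which is precisely the form in which that inequality is stated.
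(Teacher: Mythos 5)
Your proof is correct and follows essentially the same route as the paper: invoke \Cref{lemma:prevclose}(b) with $prev(prev(p))=0$ to get $p \le 9\cdot 2^{\ell(p)}$, conclude there are $O(1)$ pulses per level, and sum over the at most $t+1$ levels. Your enumeration of the odd multiples $\{2^\ell, 3\cdot 2^\ell, 5\cdot 2^\ell, 7\cdot 2^\ell, 9\cdot 2^\ell\}$ just makes the paper's ``only $O(1)$ such $p$'' step explicit.
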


\begin{proof}
    We show that for each level $\ell \le t$, there are only $O(1)$ pulses $p$ with $\ell(p) = \ell$ with $prev(prev(p)) = 0$. Indeed, if $\ell(p) = l$, then we get $0 = prev(prev(p)) \ge p - 9\cdot 2^{\ell}$, so $p \le 9\cdot 2^{\ell}$, there are only $O(1)$ such $p$.
\end{proof}

\begin{proof}[Proof of \Cref{theorem:MSBFS}]

\paragraph{Time complexity.} We can view this convergecast simply as an extra stage of the algorithm, the very first one. If we had a copy of each edge for every cluster in which it's present, this stage would run in $O(2^t\cdot \log^3{n})$. As each edge is present in $O(\log^4{n})$ clusters per cover, and we are doing this for $O(\log{n})$ pulses, the total complexity of this stage, by \cref{corollary:parallel}, is $O(2^t \cdot \log^8{n})$. As the remaining part also works in $O(2^t \cdot \log^8{n})$, the resulting complexity is $O(2^t \cdot \log^8{n})$.

\paragraph{Message complexity.} For a fixed pulse $p$, the total number of messages spent on collecting this info for all the clusters is $O(m\log^4{n})$, as each edge is in $O(\log^4{n})$ clusters.

As the total number of pulses $p$ with $prev(prev(p)) = 0$ is $O(t) = O(\log{n})$, the total message complexity of this process is $O(m\log^5{n})$. 
The message complexity of the remaining part of the algorithm remains $O(m\log^5{n})$, as in \cref{theorem:SSBFStimeandmessage}, so we get a total message complexity of $O(m\log^5{n})$.
\end{proof}

\subsection{Multi-source $2^{t}\cdot \ell$-thresholded BFS, given a layered sparse $2^{t+6}$-cover}
\label{subsec:largerMultiBFS}
Let us assume that we are given $2^i$-covers for all $i$ from $1$ to $t+6$. In \Cref{subsec:multiBFS}, we discussed how to perform a multiple-source $2^t$-thresholded BFS. Now, we discuss how we do a multi-source $2^t\cdot \ell$-thresholded BFS, where $\ell$ is some positive integer, using a layered sparse $2^{t+6}$-cover.

Let us divide our BFS into $\ell$ stages. The $T$-th of these stages will have to be a $2^t$-thresholded BFS with nodes of pulse $T \cdot 2^t$ as sources. Now, Let us use an abstraction from \cref{SUBSEC:BALLCOLLECTING}. Let process $\mathcal{P}_T$ denote ``doing the BFS up to $2^t$ as a source of the $T$-th stage'', so that node $v$ completes $\mathcal{P}_T$ when it completes this BFS or learns that it is not a source in the $T$-th stage. For each $T < \ell$, let each node learn when its $2^t$-neighborhood is done with $\mathcal{P}_T$. When node $v$ learns this, he has two options. If $v$ was reached by a BFS, then, if its pulse is $T \cdot 2^t$, then it is a source at the $T+1$-st stage; otherwise, it is not. If it has not received a pulse, then this means that there are no sources of the $T$-th stage in its $2^t$-neighborhood, as if there were some, $v$ would have received a pulse from them in the $T$-th stage. Therefore, it cannot be at a distance precisely $T\cdot 2^t$ from the sources.

\begin{theorem}
 \label{theorem:MSBFSlong}
    The $2^t\cdot \ell$-thresholded multi-source BFS given a layered sparse $2^{t+6}$-cover finishes in time $O(2^t\ell \cdot \log^8{n})$ and takes $O(m\ell \log^5{n})$ messages.
\end{theorem}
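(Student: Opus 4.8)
The plan is to realize the $2^t\ell$-thresholded multi-source BFS as the sequential composition of $\ell$ copies of the $2^t$-thresholded multi-source BFS of \Cref{theorem:MSBFS}, interleaved with information-gathering phases from \Cref{theorem:convergecast} that hand the new source set from one stage to the next. Concretely, I would index the stages $T=0,1,\dots,\ell-1$, where stage $T$ runs a $2^t$-thresholded multi-source BFS whose source set $S_T$ is the set of nodes at distance exactly $T\cdot 2^t$ from the original source set $S$ (so $S_0=S$); immediately after stage $T$, each node $v$ runs the convergecast of \Cref{theorem:convergecast} with $d=2^t$ and process $\mathcal{P}_T=$ ``I have finished stage $T$'' to learn when its whole $2^t$-neighborhood has finished stage $T$, and at that point $v$ joins $S_{T+1}$ iff its stage-$T$ pulse equals $2^t$ (if $v$ was not reached in stage $T$ it learns that no node of $S_T$ lies within distance $2^t$ of it, so in particular $v\notin S_{T+1}$, and $v$ simply participates in later stages in an event-driven fashion as a non-source). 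Everything here uses only the provided layered sparse $2^{t+6}$-cover, which in particular contains a sparse $2^t$-cover, as needed by \Cref{theorem:MSBFS} and \Cref{theorem:convergecast}.

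For correctness I would prove by induction on $T$ that (i) $S_T$ is exactly the set of nodes at distance $T\cdot 2^t$ from $S$, and (ii) after stage $T$, every node $v$ with $dist(v,S)\le (T+1)2^t$ outputs $dist(v,S)$. The base case $T=0$ is \Cref{theorem:MSBFS} applied to $S_0=S$. For the inductive step, the key geometric fact is that whenever $dist(v,S)\ge T\cdot 2^t$, every shortest $S$-to-$v$ path meets level $T\cdot 2^t$, while conversely a node at distance $T\cdot 2^t$ from $S$ is at distance at least $dist(v,S)-T\cdot 2^t$ from $v$ by the triangle inequality; hence $dist(v,S_T)=dist(v,S)-T\cdot 2^t$ for all such $v$. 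Feeding this into the correctness of the stage-$T$ sub-BFS (\Cref{theorem:MSBFS} again) shows that every $v$ with $dist(v,S)\in[T\cdot 2^t,(T+1)2^t]$ receives stage-$T$ pulse $dist(v,S)-T\cdot 2^t$, so its reported distance $T\cdot 2^t+\text{pulse}$ is correct; the nodes with stage-$T$ pulse exactly $2^t$ are precisely those at distance $(T+1)2^t$, so the handoff rule makes $S_{T+1}$ exactly level $(T+1)2^t$, closing the induction. After $\ell$ stages, (ii) with $T=\ell-1$ gives that all nodes within distance $2^t\ell$ output their true distance, and the checking step already built into each $2^t$-thresholded sub-BFS forces every node $v$ with $dist(v,S)>2^t\ell$ to output $\infty$ (indeed for each $T\le\ell-1$ we have $dist(v,S_T)=dist(v,S)-T\cdot 2^t>2^t(\ell-T)\ge 2^t$, so $v$ is declared out of range in stage $T$).

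For the complexities I would view the whole algorithm as a sequence of $O(\ell)$ super-stages — the $\ell$ sub-BFS stages and the $\ell$ interleaved convergecasts — all satisfying the hypotheses of \Cref{lemma:sequential}: each one depends only on the preceding ones, a sub-BFS stage has isolated time complexity $O(2^t\log^8 n)$ by \Cref{theorem:MSBFS}, and a convergecast stage has isolated time complexity $O(2^t\log^7 n)$ by \Cref{theorem:convergecast}. Summing over the $O(\ell)$ super-stages via \Cref{lemma:sequential} yields total time $O(2^t\ell\log^8 n)$, with no additional blow-up since \Cref{lemma:sequential} schedules by priority rather than duplicating edges. For messages, each sub-BFS contributes $O(m\log^5 n)$ (\Cref{theorem:MSBFS}) and each convergecast $O(m\log^4 n)$ (\Cref{theorem:convergecast}), so the total over $O(\ell)$ super-stages is $O(m\ell\log^5 n)$, as claimed.

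The main obstacle I expect is making the inductive correctness argument airtight rather than the bookkeeping: one must check that re-running a $2^t$-thresholded multi-source BFS on a graph where many nodes already carry pulses from earlier stages does not disturb those pulses (a node already reached keeps its accepted ``join'' proposal and never becomes a source again unless its pulse is exactly $2^t$), and that the identity $dist(v,S_T)=dist(v,S)-T\cdot 2^t$ is only ever invoked in the regime $dist(v,S)\ge T\cdot 2^t$ where it actually holds. In particular, a node with $dist(v,S)<T\cdot 2^t$ is already finished and must be excluded from stage $T$'s reach, which is precisely why $S_T$ must be characterized as ``level $T\cdot 2^t$'' and not merely as ``nodes that received pulse $2^t$ in stage $T-1$''.
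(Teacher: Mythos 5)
Your proposal is correct and follows essentially the same route as the paper: decompose the $2^t\ell$-thresholded BFS into $\ell$ sequential $2^t$-thresholded multi-source BFS stages (sources of stage $T+1$ being the nodes that reached pulse exactly $2^t$ in stage $T$, i.e., distance $(T+1)2^t$ from $S$), use the neighborhood-completion convergecast of \Cref{theorem:convergecast} to hand off source sets between stages, and bound time via \Cref{lemma:sequential} and messages by summing $O(m\log^5 n)$ per stage. Your added correctness induction and the explicit identity $dist(v,S_T)=dist(v,S)-T\cdot 2^t$ just spell out what the paper treats informally before stating the theorem, so there is no substantive difference.
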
     

\begin{proof}

\paragraph{Time complexity.} We have $\ell$ stages, where each stage is dependent only on the previous ones, so we can apply \cref{lemma:sequential}. The isolated time complexity of each stage is $O(n\log^8{n})$, and the isolated time complexity of the convergecast is $O(n\log^7{n})$. We get a total runtime of $O(n\ell \log^8{n})$.

\paragraph{Message complexity.} Each of $\ell$ stages requires $O(m\log^5{n})$ messages. As there are $\ell$ stages, the total message complexity is $O(m\ell \log^5{n})$.
\end{proof}

\begin{remark}
    Note that we also get an algorithm $d$-thresholded BFS for any $d \le 2^t \cdot \ell $ with the same complexity. That is, this algorithm works not only for the multiples of $2^t$.
\end{remark}

\subsection{Synchronous deterministic construction of the $d$-cover}
\label{subsec:deterministiccover}
So far, we have assumed that we are given sparse covers. In this subsection, we review the synchronous construction of Rozhon and Ghaffari~\cite{Rozhon2020Polylogarithmic-timeDerandomization}. In the next subsection, we explain how to adapt this construction to the asynchronous environment, using our asynchronous multi-source BFS. 

We will start by introducing a deterministic algorithm for constructing the $d$-cover in $O(d\cdot \polylog(n))$ rounds in a synchronous environment. For that, we will need a network decomposition.

\begin{definition}
    ($k$-separated Weak Diameter Network Decomposition) Given a graph $G=(V, E)$, we define a $(\mathcal{C}, \mathcal{D})$ $k$-separated weak diameter network decomposition to be a partition of $G$ into vertex-disjoint graphs $G_1$, $G_2$, \dots, $G_\mathcal{C}$ such that for each $i\in \{1, 2, \dots, \mathcal{C}\}$, we have the following property: the graph $G_i$ is made of a number of vertex-disjoint clusters $X_1$, $X_2$, \dots, $X_\ell$, so that:

    \begin{itemize}
        \item For any $X_a$ and any two vertices $v, u \in X_a$, $dist(u, v) \le D$ in graph $G$.
        
        \item For any two $X_a, X_b$ with $j\neq l$ and any $u \in X_a$, $v \in X_b$ holds $dist(u, v) > k$.
    \end{itemize}
\end{definition}

\subsubsection{Synchronous construction of $k$-separated Network Decomposition}

Before stating the result we will use, we recall the notion of Steiner trees. A Steiner tree of a cluster is a tree with nodes labeled as \textit{terminal} and \textit{nonterminal}; the aim is to connect all terminals, possibly using some nonterminals.

\begin{restatable}{theorem}{MAINALGGK}
\label[theorem]{theorem:main_alg_gk}[Rozhon and Ghaffari\cite{Rozhon2020Polylogarithmic-timeDerandomization}]
There is an algorithm that, given a value $k$ known to all nodes, in $O(k \log^{10} n )$ communication rounds outputs a $k$-separated weak-diameter network decomposition of $G$ with $O(\log n)$ color classes, each one with $O(k \cdot \log^3 n)$ weak-diameter in $G$.

Moreover, for each color and each cluster $\mathcal{C}$ of vertices with this color, we have a Steiner tree $T_\mathcal{C}$ with radius $O(k \cdot \log^3 n)$ in $G$, for which the set of terminal nodes is equal to $\mathcal{C}$. 
Furthermore, each edge in $G$ is in $O(\log^4{n})$ of these Steiner trees.
\end{restatable}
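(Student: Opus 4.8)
The statement is precisely the main network-decomposition result of Rozhon and Ghaffari, reformulated with a separation parameter $k$, a Steiner-tree representation of the clusters, and an edge-congestion bound; so the plan is to invoke their construction and verify that it supplies these three features after a scaling argument. I would first recall the core algorithm: deterministically, in $\poly(\log n)$ rounds of \congest, one partitions $V$ into $O(\log n)$ color classes, each class being a union of clusters of weak diameter $O(\log^3 n)$. Within a color, the clusters are built in $O(\log n)$ phases, and each phase consists of $O(\log n)$ synchronized \emph{ball-growing} steps in which every cluster either absorbs unclustered neighbors or sheds a bounded fraction of its boundary; a potential argument controls the weak diameter throughout. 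The key structural point is that every growing step is a \emph{one-hop} operation, so the whole construction runs in $O(\log^{10} n)$ rounds.

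Second, I would obtain $k$-separation and the $k$-scaling of diameters and running time by running exactly this algorithm with every one-hop ball-growing step replaced by a $k$-hop step, i.e.\ treating ``within distance $k$ in $G$'' as the adjacency relation. Then distinct same-color clusters are separated by more than $k$ hops in $G$; the weak diameter of each cluster becomes $O(k\log^3 n)$ in $G$; and since each simulated round costs $O(k)$ real rounds, the total becomes $O(k\log^{10} n)$. One has to check that routing messages over the $k$-hop ``virtual edges'' does not overload real edges: this is handled by noting that within a phase the ball-growing routes along BFS forests of depth $O(k)$, and each real edge lies on only a few such branches.

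Third, for the Steiner trees: each cluster $\mathcal{C}$ is grown from a single center by a bounded sequence of $k$-hop growth steps, and the union of the BFS paths used by these steps, restricted to the vertices that end up in $\mathcal{C}$, is connected and spans exactly the terminal set $\mathcal{C}$; taking a spanning tree gives $T_{\mathcal{C}}$, whose radius is $O(k\log^3 n)$ in $G$ because the growth paths never leave the ball of that radius around the center (which is also what the weak-diameter bound certifies). The edge-congestion bound $O(\log^4 n)$ is then a bookkeeping step: sum over the $O(\log n)$ colors, and within each color bound the number of Steiner trees through a fixed edge by (number of phases)$\times$(a per-phase $O(\log^2 n)$ bound coming from the depth-$O(k)$ routing structure and disjointness of the surviving clusters). \textbf{The main obstacle} is exactly this last accounting: the Rozhon--Ghaffari paper does not state the Steiner-tree radius or the $O(\log^4 n)$ edge-congestion in this explicit form, so one must open up their ball-growing subroutine, pin down which edges it uses for routing in each step, and carefully track re-use of edges across phases and colors while staying inside the $k$-hop budget; everything else is a direct quotation of their analysis.
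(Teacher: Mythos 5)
Your overall route is the same as the paper's: the paper likewise does not reprove correctness but cites Rozhon--Ghaffari, describes their bit-label clustering with the growth steps already performed as depth-$k$ BFS (rather than a one-hop step that is later ``scaled''), packages one color as \Cref{lemma:main_alg_gk} (per color: separation $>k$, weak diameter $O(k\log^3 n)$, Steiner trees of radius $O(k\log^3 n)$, each edge in $O(\log^3 n)$ trees, $O(k\log^9 n)$ rounds), and obtains \Cref{theorem:main_alg_gk} by iterating over the $O(\log n)$ colors, which is exactly your colors-times-per-color accounting for the $O(\log^4 n)$ congestion and the $O(k\log^{10} n)$ round bound.

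One step of your write-up would fail as literally stated: you claim that the union of the BFS growth paths, \emph{restricted to the vertices that end up in $\mathcal{C}$}, is connected and can be pruned to a spanning tree of $\mathcal{C}$. This cannot work, because the clusters have only \emph{weak} diameter: $G[\mathcal{C}]$ need not be connected, and the depth-$k$ growth paths pass through vertices that never join $\mathcal{C}$ (previously colored or unclustered vertices) and through red vertices that are later denied and die. The correct object, as in the paper's description, is the unrestricted union of the growth paths, kept as a Steiner tree in which exactly the vertices of $\mathcal{C}$ are terminals and all other vertices on the paths (including dead red nodes) are retained as nonterminals; the radius bound $O(k\log^3 n)$ then comes from the invariant that the tree radius grows by at most $O(k)$ per step over the $O(\log^3 n)$ total steps. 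Two smaller recollection slips that do not affect your final bounds but do affect intermediate accounting: each phase has $O(\log^2 n)$ steps (not $O(\log n)$), which is precisely where the per-color $O(\log^3 n)$ edge-congestion bound comes from ($O(1)$ uses per edge per step, since each vertex accepts a single join proposal in the multi-source BFS), and the per-step cost is $O(k\log^6 n)$ rounds (depth-$k$ BFS plus convergecast/broadcast on trees of radius $O(k\log^3 n)$ with congestion $O(\log^3 n)$), rather than a generic ``$O(k)$ real rounds per simulated round'' simulation of $G^k$.
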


To remain self-contained, we provide a (simplified) description of the algorithm of \cite{Rozhon2020Polylogarithmic-timeDerandomization} in the \Cref{subapp:main_alg_gk}. Our asynchronous construction will refer to the language used in this description. For proof of the correctness of their algorithm, we refer to \cite{Rozhon2020Polylogarithmic-timeDerandomization}.

\subsubsection{Constructing $d$-cover in the synchronous setting}

\begin{theorem}
    \label{theorem:synchcover}
     There is a synchronous algorithm that, given a value $d$ that is known to all nodes, in $O(d \log^{10}{n})$ communication rounds outputs a sparse $d$-cover of this graph, together with the Steiner trees of its clusters, such that each edge appears in only $O(\log^4{n})$ cluster trees.
\end{theorem}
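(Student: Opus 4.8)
The plan is to derive the sparse $d$-cover from a suitably-separated weak-diameter network decomposition of $G$ produced by \Cref{theorem:main_alg_gk}, and then to ``inflate'' each decomposition cluster by a radius-$d$ BFS. Concretely, first run the algorithm of \Cref{theorem:main_alg_gk} with separation parameter $k := 2d$; this takes $O(d\log^{10}n)$ rounds and yields $C = O(\log n)$ color classes, where each class is a collection of vertex-disjoint clusters pairwise at distance more than $2d$, each cluster $X$ carrying a Steiner tree $T_X$ in $G$ of radius $O(d\log^3 n)$ whose terminal set is $X$, and where each edge of $G$ lies in $O(\log^4 n)$ of these Steiner trees.

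Next, for every decomposition cluster $X$ (identified with its terminal set), run a BFS in $G$ of depth $d$ rooted at the set $X$: every node $u$ with $dist(u,X)\le d$ joins the \emph{inflated cluster} $\hat X := \{u : dist(u,X)\le d\}$ and records as its parent a neighbor strictly closer to $X$. Since two clusters of the same color are more than $2d$ apart, their inflated versions are vertex-disjoint: a node within distance $d$ of both would witness two cluster-terminals at distance $\le 2d$. Hence all depth-$d$ BFS forests of one color class can be grown simultaneously without interference in $O(d)$ rounds, so over the $O(\log n)$ colors this costs $O(d\log n)$ rounds, dominated by the decomposition cost, for a total of $O(d\log^{10}n)$ rounds. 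The \emph{cluster tree} of $\hat X$ is $T_X$ with this BFS parent-forest appended at its terminals; it is a tree rooted at the root of $T_X$, it spans $\hat X$ (and possibly some non-terminal Steiner nodes of $T_X$), and it has depth $O(d\log^3 n) + d = O(d\log^3 n)$, so the stretch is $s = O(\log^3 n)$. Each node learns its memberships and its parent/children in each cluster tree directly from the construction, and everything is deterministic.

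It remains to check the cover properties. \textbf{Bounded overlap:} within a color class the inflated clusters are disjoint, so each node lies in at most one per color, hence in $O(\log n)$ clusters. \textbf{Covering:} given a node $v$, let $X$ be the decomposition cluster having $v$ as a terminal; for any $u$ with $dist(u,v)\le d$ we have $dist(u,X)\le dist(u,v)\le d$, so $u\in\hat X$, which is exactly the strong form of the covering condition. \textbf{Diameter and congestion:} the depth bound on the cluster tree gives the diameter; for edge congestion, an edge $e$ occurs in the cluster tree of $\hat X$ only as an edge of $T_X$ (in $O(\log^4 n)$ of these) or as an edge of the BFS forest of $X$, and among clusters of a fixed color at most one BFS forest uses $e$ by disjointness of inflated clusters, so $e$ lies in $O(\log^4 n)+O(\log n)=O(\log^4 n)$ cluster trees.

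The main obstacle is choosing the separation parameter correctly and keeping two notions apart: cluster \emph{membership} must be ``within distance $d$ of the decomposition cluster'' (so the $2d$-separation simultaneously yields the $O(\log n)$ overlap, the non-interference of the parallel BFSs, and the $O(\log n)$-per-color bound on BFS-forest congestion), whereas the cluster \emph{tree} is the larger object $T_X$-plus-BFS, which may route through non-terminal Steiner nodes at radius up to $O(d\log^3 n)$; these extra nodes contribute only to the $O(\log^4 n)$ per-edge congestion inherited from \Cref{theorem:main_alg_gk}, not to the membership count. Once this bookkeeping is fixed, the remaining verifications are routine.
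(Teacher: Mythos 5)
Your proposal is correct and follows essentially the same route as the paper: run the weak-diameter network decomposition of \Cref{theorem:main_alg_gk} with separation roughly $2d$, then inflate each cluster to its $d$-neighborhood, using the separation to keep same-color inflated clusters disjoint, which yields the $O(\log n)$ overlap, the covering property, and the $O(\log^4 n)$ per-edge congestion. Your accounting of the extra BFS-forest edges (at most one inflated cluster per color uses a given edge, adding only $O(\log n)$ to the congestion) is a slightly more explicit version of the paper's argument, and your choice of $k=2d$ versus the paper's $2d+1$ is immaterial since the separation guarantee is strict.
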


\begin{proof}
    We start by constructing a $2d+1$ separated weak network decomposition in $O(d \log^{10}{n})$ rounds, according to \cref{theorem:main_alg_gk}. Then, for each color separately, we expand each of its clusters to its $d$-neighborhood. As different clusters of the same color are at least $2d+1$ apart, the clusters remain disjoint, so each node will be in $O(\log{n})$ clusters (at most one per color). Additionally, each edge will join at most one cluster tree, so each edge still appears in only $O(\log^4{n})$ cluster trees.
\end{proof}

\subsection{Asynchronous construction of sparse $2^{t+7}$-cover, given a layered sparse $2^{t+6}$-cover}
\label{subsec:asynchSparseCover}
The algorithm described in \cref{subsec:deterministiccover} for constructing sparse $d$-covers consists of simple multi-sources BFSs up to distance $O(d)$. And we already know how to do that, via the approach we described in \Cref{subsec:largerMultiBFS}. 
Using this connection, below we show that we can run the sparse cover construction in the asynchronous environment, resulting in the following statement:
\begin{theorem}
    \label{theorem:sparseasynch}
    There is an asynchronous algorithm that, given a layered sparse $2^{t+6}$-cover, 

    \begin{itemize}
        \item Constructs sparse $2^{t+7}$-cover together with the Steiner trees of the clusters
        \item Lets each node learn when all the Steiner trees in which it is have been built
        \item Works in time $O(2^t\cdot \log^{11}{n})$ and takes $O(m\log^8{n})$ messages
    \end{itemize}

\end{theorem}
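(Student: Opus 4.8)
The plan is to lift the synchronous construction of \Cref{theorem:synchcover} to the asynchronous setting, replacing each of its primitive operations with the asynchronous procedure developed earlier in the paper. Recall that \Cref{theorem:synchcover}, instantiated with $d = 2^{t+7}$, first runs the Rozhon--Ghaffari decomposition of \Cref{theorem:main_alg_gk} with separation parameter $k = 2d+1 = O(2^t)$, and then expands every cluster of every color class to its $d$-neighborhood. As observed in the text preceding this theorem, the whole procedure is nothing but a sequence of multi-source BFS explorations in $G$, each of depth $O(2^t)$, interleaved with convergecasts and broadcasts along the incrementally grown Steiner trees of the clusters. Concretely, the decomposition of \Cref{theorem:main_alg_gk} runs in $O(2^t\log^{10}n)$ synchronous rounds and decomposes into $O(\log^3 n)$ growth steps (the precise count being what one reads off from this round bound together with the $\Theta(2^t\log^7 n)$ cost of realizing one step, described next); in each step, the clusters currently alive grow by a depth-$O(2^t)$ multi-source BFS started from their current vertex sets, and then coordinate internally along their Steiner trees --- of radius $O(2^t\log^3 n)$ and edge-congestion $O(\log^4 n)$ --- to decide which clusters survive and which new vertices they absorb. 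The Steiner tree of a cluster arises as a by-product: it is the union of the execution trees of the BFS steps in which that cluster grew, and \Cref{theorem:main_alg_gk} certifies that each edge lies in $O(\log^4 n)$ of them. The final $d$-expansion is one more depth-$O(2^t)$ multi-source BFS whose execution tree extends each cluster's Steiner tree; since clusters of a common color are $2d+1$-separated, they stay disjoint, so each edge is still in $O(\log^4 n)$ trees and each node in $O(\log n)$ clusters.

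I would realize each growth step as two consecutive sub-stages in the sense of \Cref{lemma:sequential}. The BFS sub-stage is a depth-$O(2^t)$ multi-source BFS (with each accepted join-proposal also carrying the identifier of its cluster), which by \Cref{theorem:MSBFSlong} and the remark following it --- taking $\ell = O(1)$ and using the given layered sparse $2^{t+6}$-cover --- has isolated time complexity $O(2^t\log^8 n)$ and message complexity $O(m\log^5 n)$; here the sources are exactly the vertices of currently alive clusters, which every node knows from the preceding steps, and the convergecast/termination detection built into \Cref{theorem:MSBFSlong} tells each node when it has learned its new cluster membership and local tree pointers. The coordination sub-stage is a convergecast followed by a broadcast along trees of radius $O(2^t\log^3 n)$ with edge-congestion $O(\log^4 n)$, so by \Cref{corollary:parallel} it has isolated time complexity $O(2^t\log^7 n)$ and message complexity $O(m\log^4 n)$. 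Chaining all $O(\log^3 n)$ growth steps together with the final $d$-expansion BFS as sequential stages and applying \Cref{lemma:sequential}, the total time is $O(\log^3 n)\cdot O(2^t\log^8 n) = O(2^t\log^{11} n)$ and the total message complexity is $O(\log^3 n)\cdot O(m\log^5 n) = O(m\log^8 n)$, as claimed.

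For the middle bullet --- each node learning when all Steiner trees containing it are fully built --- I would append one last convergecast-then-broadcast along the completed cluster trees of the constructed $2^{t+7}$-cover: a node reports ``done'' up its tree once it knows its own part of the construction is finished, the root confirms once its entire tree has reported, and the confirmation is broadcast back down. Each such tree has radius $O(2^t\log^3 n)$ and each edge lies in $O(\log^4 n)$ of them, so by \Cref{corollary:parallel} this stage has isolated time complexity $O(2^t\log^7 n)$ and uses $O(m\log^4 n)$ messages; equivalently one could invoke \Cref{theorem:largeconvergecast} with the $2^{t+6}$-cover and $\ell = O(\log^3 n)$. Treated as the final stage of \Cref{lemma:sequential}, it leaves the asymptotics unchanged, so the overall bounds are $O(2^t\log^{11} n)$ time and $O(m\log^8 n)$ messages.

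The main obstacle is correctness rather than complexity, and there are two things to check. First, each sub-stage must be robust to adversarial message delays: this holds because the multi-source BFS of \Cref{theorem:MSBFSlong} and the tree convergecasts/broadcasts of \Cref{corollary:parallel} are all asynchronous procedures, and \Cref{lemma:executionisbfs} guarantees that the asynchronous BFS computes exactly the same distances and execution tree that the synchronous BFS would; hence every growth step produces precisely the cluster structure of the synchronous run, and all structural invariants of \Cref{theorem:main_alg_gk} --- the $O(2^t\log^3 n)$ weak diameters, the $2d+1$-separation, and the $O(\log^4 n)$ edge-congestion of the Steiner trees --- carry over verbatim. Second, step $i+1$ must not mutate cluster state before step $i$ has globally finalized its state everywhere it matters; this is exactly the ``isolated time complexity'' guarantee of \Cref{lemma:sequential}, which allows step $i+1$ to start from any partial progress as long as it always remains in a valid state and still terminates within its isolated bound. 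The genuinely delicate point is the within-step ordering --- a step's internal coordination reads the BFS output of the same step --- which again must be enforced by nesting the BFS sub-stage and the coordination sub-stage as consecutive stages of \Cref{lemma:sequential}; I would spell this nesting out explicitly so that the hypotheses of \Cref{lemma:sequential} are literally satisfied.
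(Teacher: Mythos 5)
Your proposal follows essentially the same route as the paper's proof: run the Rozhon--Ghaffari construction color by color, phase by phase, step by step, realizing each step as an asynchronous multi-source thresholded BFS via \Cref{theorem:MSBFSlong} plus a convergecast/broadcast along the cluster Steiner trees (scheduled by \Cref{corollary:parallel}), chain everything with \Cref{lemma:sequential}, and finish with the $2^{t+7}$-expansion BFS and a final notification along the cluster trees. The only quibble is that your count of growth steps should come from the structure of the construction ($O(\log n)$ colors, $O(\log n)$ phases per color, $O(\log^2 n)$ steps per phase) rather than from dividing the synchronous round bound by a guessed per-step cost, but this bookkeeping matches the granularity of the paper's own accounting and does not change the stated bounds.
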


\begin{proof}

We start by providing an asynchronous algorithm to build $k$-separated weak decomposition from the \cref{theorem:main_alg_gk}. We build one color at a time. We let each node start to build the next color only after it is done with the previous colors. We will formalize ``being done'' with color and the way to learn this later.

Consider the algorithm for building the clusters of a particular color
from \cref{lemma:main_alg_gk}. It contains $O(\log{n})$ phases; we will synchronize each phase separately and let each node enter the next phase only after it learns that it is done with the previous phases. We will formalize ``being done'' with a phase and the way to learn this later. 

One phase, as in the synchronous algorithm, consists of $O(\log^2{n})$ steps. We will synchronize each of these steps separately and let each node enter the next step only after it is done with the previous steps. We will formalize ``being done'' with a phase and the way to learn this later.

Denote $2 \cdot 2^{t+7} + 1 = d$. One step consists of two substeps.

\begin{enumerate}
    \item $d$-thresholded BFS from all blue nodes. According to the \cref{theorem:MSBFSlong}, we can do it in an asynchronous environment in just $O(2^t\log^8{n})$ time and $O(m\log^5{n})$ messages. At the end of this BFS, by the definition of the $d$-thresholded BFS, every node will be reached or will learn that its distance from any blue node is larger than $d$. 
    \item In the same way as in the original algorithm from \cref{lemma:main_alg_gk}, the roots of all Steiner trees will collect the number of proposing red nodes and make corresponding decisions, with a convergecast by the Steiner tree. A node will propagate the information up the tree only after it is done with the previous substep.
\end{enumerate}

We say that node learns that it is done with a step as soon as it receives (and propagates down the Steiner tree) all the decisions. When a node is done with a step, it knows whether it is a source for the next stage (it knows whether it is a blue node).
We say that node learns that it is done with a phase once it learns that it is done with all its steps. We say that a node is done with a color once it learns that is done with all the phases of the algorithm for building that color.
After a node knows that it is done with building all $O(\log{n})$ colors, it starts performing a $2^{t+7}$-thresholded BFS for every cluster that it is a part of, as in \cref{theorem:synchcover}.

We now discuss time complexity. Let us bound the isolated time complexity of each step of the algorithm. Isolated time complexity of the BFS substep is $O(2^t \cdot \log^8{n})$, of the convergecast substep is $O(2^t\log^6{n})$, so the entire step takes $O(2^t\log^8{n})$.  As there are $O(\log^2{n})$ steps per phase and $\log{n}$ phases per color, we get the resulting runtime of $O(2^t \log^{11}{n})$. 

We now discuss message complexity. The message complexity of a single step is $O(m\log^5{n})$ per BFS plus $O(m\log^3{n})$ per convergecast, as each edge is in at most $\log^3{n}$ clusters for a particular color, which is $O(m\log^5{n})$ in total. As there are $O(\log^2{n})$ steps per phase and $\log{n}$ phases per color, we get the resulting message complexity of $O(m \log^{8}{n})$. 
\end{proof}

\subsection{The complete BFS algorithm in $\tilde{O}(D)$ time and $\tilde{O}(m)$ messages}
\label{subsec:completeBFS}

\begin{theorem}
    \label{theorem:SSBFSfinal}
    There is an algorithm for a single-source BFS that terminates in time $O(D \log^{11}{n})$, where $D$ is the diameter of the graph and uses $O(m\log^{10}{n})$ messages.    
\end{theorem}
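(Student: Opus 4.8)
Since the nodes do not know $D$, the algorithm will try geometrically growing thresholds $2^0, 2^1, 2^2, \dots$, and for each threshold $2^t$ run the single-source $2^t$-thresholded BFS of \Cref{subsec:singleBFS}, stopping as soon as this thresholded BFS has actually reached the whole graph. To run the $2^t$-thresholded BFS we need a layered sparse $2^{t+6}$-cover, and these are produced one by one: starting from a layered sparse $2^{6}$-cover built from scratch, each iteration uses the current layered sparse $2^{t+6}$-cover to build a sparse $2^{t+7}$-cover via \Cref{theorem:sparseasynch}, which upgrades it to a layered sparse $2^{t+7}$-cover.

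\textbf{Base case.} We first build a layered sparse $2^{6}$-cover, i.e.\ sparse $2^{j}$-covers for all $j \le 6$, directly. We cannot invoke \Cref{theorem:sparseasynch} for this, as it needs a cover of smaller radius, but here every cover has radius $O(1)$. Inspecting the construction underlying \Cref{subsec:asynchSparseCover} (the Rozho\'n--Ghaffari decomposition of \cite{Rozhon2020Polylogarithmic-timeDerandomization} followed by the $O(d)$-radius expansion), every ingredient reduces to multi-source BFS's up to distance $O(1)$ and convergecasts over Steiner trees of radius $O(1)$. A multi-source BFS up to a constant distance can be run directly in the asynchronous model by bounded flooding with acknowledgments: a node forwards to all neighbors once it learns its (constant) hop count, each edge carries $O(1)$ messages per hop, and there are $O(1)$ hops. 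Thus the base layered cover is constructed in $O(\polylog n)$ time and $O(m\cdot\polylog n)$ messages.

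\textbf{Main loop and termination detection.} For $t = 0, 1, 2, \dots$ we maintain the invariant that we hold a layered sparse $2^{t+6}$-cover. In iteration $t$: (i) run a fresh instance of the $2^t$-thresholded single-source BFS from $s$ (messages tagged by $t$ so that nodes re-engage), which by \Cref{theorem:SSBFStimeandmessage} has isolated time $O(2^t\log^8 n)$ and uses $O(m\log^5 n)$ messages, after which every node either knows $dist(v,s)$ or has output $\infty$; (ii) detect whether this BFS is complete. For (ii) we use that $G$ is connected: some node output $\infty$ iff some reached node has a neighbor that output $\infty$. So each node tells each neighbor whether it was reached ($O(m)$ messages, $O(1)$ time), each reached node forms the local bit ``I have an unreached neighbor'', and a convergecast over the depth-$\le 2^t$ execution tree delivers the OR of these bits to $s$ ($O(n)$ messages, $O(2^t)$ time). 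If the OR is false, $s$ broadcasts ``\textsf{done}'' down the execution tree; each node finalizes its distance and halts. If the OR is true, we invoke \Cref{theorem:sparseasynch} on the current layered sparse $2^{t+6}$-cover to obtain a sparse $2^{t+7}$-cover (hence a layered sparse $2^{t+7}$-cover) in isolated time $O(2^t\log^{11} n)$ and $O(m\log^8 n)$ messages, increment $t$, and repeat.

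\textbf{Complexity and the hard part.} The whole execution is a linear chain of stages, each depending only on earlier ones, so by \Cref{lemma:sequential} the total time is the sum of isolated times. The loop runs for $t = 0,\dots,\lceil\log_2 D\rceil$, since $2^t \ge D$ forces the thresholded BFS to reach everything; summing $O(2^t\log^{11} n)$ over this range gives $O(\log^{11} n)\sum_{t \le \lceil\log_2 D\rceil} 2^t = O(D\log^{11} n)$, which dominates the $O(\polylog n)$ base case (as $D\ge 1$) and the final depth-$O(D)$ ``\textsf{done}'' broadcast. For messages, each iteration costs $O(m\log^8 n)$, there are $O(\log D)=O(\log n)$ iterations, and the base case costs $O(m\cdot\polylog n)$, for a total of $O(m\log^9 n) = O(m\log^{10} n)$. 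I expect the main obstacle to be the base case --- constructing the initial constant-radius layered cover without any pre-existing cover, since \Cref{theorem:sparseasynch} is not applicable there --- which is handled by observing that all distances involved are constant, so plain bounded flooding replaces the cover-based multi-source BFS; a secondary point requiring care is phrasing termination detection so that it piggybacks on the already-built execution tree and the $\sum 2^t$ telescoping, rather than re-scanning the whole graph at a cost that would sum badly.
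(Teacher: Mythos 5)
Your overall architecture matches the paper's: geometrically growing thresholds, a fresh $2^t$-thresholded BFS per iteration, building the sparse $2^{t+7}$-cover from the layered $2^{t+6}$-cover via \Cref{theorem:sparseasynch}, termination detection by convergecasting ``does some reached node have an unreached neighbor'' up the execution tree (the paper's Approach~2), and summing isolated time complexities via \Cref{lemma:sequential} to get $O(D\log^{11} n)$ time and $\tilde O(m)$ messages. That part is fine.

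The genuine gap is exactly where you suspected it: the base case. Your claim that ``a multi-source BFS up to a constant distance can be run directly in the asynchronous model by bounded flooding with acknowledgments'' is not correct. With adversarial per-message delays, a proposal relayed along a long path can arrive before one along a short path, so the first-arrival hop count a node adopts can strictly exceed its true distance even when that distance is $1$ or $2$; a node genuinely within distance $d$ of a blue node may then mislabel itself as unreached, which breaks both the separation invariant of the Rozho\'n--Ghaffari decomposition (a live red node within distance $k$ of a blue cluster) and the covering property of the resulting $2^j$-covers. Moreover, plain flooding gives a node that is \emph{not} within distance $d$ no way to ever learn this (it cannot distinguish ``not reachable within $d$'' from ``messages delayed''), yet the asynchronous cover construction explicitly needs the $\infty$-output of the thresholded BFS before nodes may proceed to the convergecast substep and the next step. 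Finally, even for $k=O(1)$ not everything is constant-radius: the Steiner trees of the decomposition have radius $O(k\log^3 n)=O(\log^3 n)$, so steps are not ``$O(1)$ hops'' and their completion must be sequenced somehow. The paper's fix is different and does work: since the synchronous constant-radius cover construction takes only $O(\log^{10} n)$ rounds, it is simulated pulse-by-pulse with the trivial $\alpha$-synchronizer (safety messages over every edge each pulse), costing $O(\log^{10} n)$ time and $O(m\log^{10} n)$ messages; this is also what forces the final message bound to $O(m\log^{10} n)$ rather than your $O(m\log^9 n)$. Your bound statement survives, but your base-case argument needs to be replaced by such a per-pulse synchronization (or an equivalent mechanism that certifies, for each of the $d$ constant levels, that all level-$(i-1)$ messages in the whole graph, or at least in one's neighborhood, have been delivered before level $i$ is emitted).
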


\begin{proof}

The algorithm works in several steps.

\paragraph{Step 1: Initalization -- constructing layered sparse $2^6$-cover.} We learned how to construct sparse $2^{t+7}$ cover given sparse $2^i$-covers for $5 \le i \le t+6$ (note that we do not ever use smaller sparse $2^i$-covers for $i \le 4$). Now the question is, how to construct $2^5$-cover and $2^6$-cover.

For a constant $c$, the synchronous algorithm for constructing $c$-cover works in $O(\log^{10}{n})$ rounds. So, we can afford to synchronize it with a simple $\alpha$-synchronizer: just sending a message through every edge at each pulse. This takes $O(\log^{10}{n})$ time and $O(m\log^{10}{n})$ messages. Every node will wait until it learns that it is done with this step before proceeding to the next one. See also \Cref{app:alpha-beta-gamma} for a more general description of the $\alpha$ synchronizer.

\paragraph{Step 2: One iteration.} Every iteration will consist of two parts.

\begin{enumerate}
    \item Running $2^t$-thresholded BFS from all the sources. Each node will wait until it is done with constructing layered sparse $2^{t+6}$-cover before it can participate in this $2^t$-thresholded BFS.

    \item Constructing sparse $2^{t+7}$-cover. We will simply apply the algorithm from \cref{theorem:sparseasynch}, with the same restriction: a node will be allowed to participate in it only after it is done with constructing layered sparse $2^{t+6}$-cover.
\end{enumerate}

\paragraph{Step 3: Termination.} If all the nodes knew $D$, we could simply run this algorithm for $\lceil \log_2{D} \rceil$ iterations. However, they do not know it. Even though every node will learn the distance to the root in the first $\lceil \log_2{D} \rceil$ iterations, they will keep doing these iterations, blowing up the time and message complexity: since running $t$-th iteration takes $O(2^t \cdot \log^{11}{n})$ time and $O(m \log^8{n})$ messages for constructing the next sparse cover, we cannot afford to simply run $\Theta(\log{n})$ iterations. There are two strategies to solve this problem. In both of them, we are looking for a way to inform every node when all the nodes have already been reached by the BFS.

\paragraph{Approach 1 for step 3.} In \cref{subsec:sparseCovers}, we mentioned that the sparse $d$-covers that we are constructing satisfy the following condition: for each node $v$, there exists a cluster such that all nodes $u$ for which $dist(u, v)\leq d$ are included in this cluster. Thus, by the $\lceil \log_2{D} \rceil$-th iteration, some cluster of some sparse cover that we have constructed will contain all the nodes of the graph.

For each cluster of the sparse $2^t$-cover, add an extra convergecast: checking whether it contains all the nodes of the graph. For each cluster, every node in it checks whether all its neighbors are in the same cluster and propagates this information up. After the root of the cluster of this cover learns whether this cluster contains all the nodes of the graph, it will broadcast this information to all the nodes in the cluster. Every node waits until it receives such verdicts from all its clusters. Note that if no cluster contains all the nodes, every node will receive only ``NO''s; otherwise, every node will receive a ``YES'' from the cluster that contains all the nodes.

If some cluster of the $2^t$-cover contains all the nodes, there is no point in constructing sparse $2^{t_1}$-covers for larger $t_1$: we can simply use the $2^t$-cover instead, so we will not have to waste extra time and messages on constructing those covers. Additionally, after each iteration starting from this one, for every cluster that contains all nodes, we will collect, with a convergecast, whether all the nodes have already been reached by the BFS by the end of that iteration. When the root of the cluster detects this, it will broadcast this to all the nodes in the cluster. Every node waits for all these responses before proceeding to the next iteration. If all nodes have already been reached by the BFS, every node will learn this, and they will simply not go to the next iteration.

\paragraph{Approach 2 for step 3.}  Let us add a slight modification to the $d$-thresholded BFS. Each node at a distance $d$ from the source will check whether any of its neighbors is at a distance larger than $d$ from the source. They will send this information up the execution tree with a convergecast. This way, a source will learn whether the farthest distance from it is larger than $d$. Then, the source will broadcast this information down to all the nodes at its execution tree.

This way, every node will learn whether the largest distance from the source to some other node is at least $d$: if it was reached by BFS, it will learn this from the source. Otherwise, it already knows that it is at a distance larger than $d$ from the source.

So, at the end of the $2^t$-thresholded BFS, every node will learn whether they have to proceed to the next iteration ($2^{t+1}$-thresholded BFS and construct the next sparse cover). We will terminate in $\lceil \log_2{D} \rceil + O(1)$ iterations.

\paragraph{Time complexity.} The isolated time complexity of the initialization is $O(\log^{10}{n})$; the isolated time complexity of the $t$-th iteration is $O(2^t\cdot \log^8{n})$ for BFS plus $O(2^t \cdot \log^{11}{n})$ per cover construction. For the first approach, the convergecasts in the cluster trees add only $O(2^t\log^7{n})$ time messages per iteration; for the second approach, the convergecast in the execution tree adds only $O(2^t)$ time per iteration. In both cases, the total time complexity becomes $O(D \cdot \log^{11}{n})$, as we are doing $O(\lceil \log_2{D} \rceil + O(1)$ iterations.

\paragraph{Message complexity.} The total message complexity of the initialization is $O(m\log^{10}{n})$; the message complexity of each iteration is $O(m\log^5{n})$ for BFS, $O(m\log^8{n})$ for sparse cover construction. plus $O(m)$ for convergecast. For the first approach, the convergecasts in the cluster trees add only $O(m\log^4{n})$ messages per iteration; for the second approach, the convergecast in the execution tree adds only $O(m)$ messages per iteration. In both cases, the total message complexity becomes $O(m\log^{10}{n})$.
\end{proof}

\begin{theorem}
    \label{theorem:MSBFSfast}
    There is an algorithm for a multiple-source BFS that terminates in time $O(D_1 \log^{11}{n})$, where $D_1$ is the largest of the values $dist(v, S)$ over all nodes $v$, and uses $O(m\log^{10}{n})$ messages.
\end{theorem}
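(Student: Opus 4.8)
The plan is to mirror the proof of \Cref{theorem:SSBFSfinal} almost line for line, replacing the single source $s$ throughout by the source set $S$, the quantity $D$ by $D_1$, and the single-source thresholded BFS subroutine by its multi-source counterpart. Step~1 (initialization) is unchanged: we build a layered sparse $2^6$-cover by running the synchronous constant-radius cover construction under a plain $\alpha$-synchronizer, at cost $O(\log^{10} n)$ time and $O(m\log^{10} n)$ messages, and every node waits until it learns it is done with this step before proceeding. In Step~2, iteration $t$ first runs a $2^t$-thresholded multi-source BFS from $S$ via \Cref{theorem:MSBFS} --- this is precisely where the modified source-registration of \Cref{subsec:multiBFS} is needed so that the base case $prev(prev(p))=0$ of \Cref{lemma:waiting} holds --- and then constructs a sparse $2^{t+7}$-cover via \Cref{theorem:sparseasynch}, each node being allowed to participate only after it is done building the layered sparse $2^{t+6}$-cover. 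Since \Cref{theorem:sparseasynch} already reduces internally to asynchronous multi-source BFS, nothing in it needs to change.

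For termination (Step~3) I would use the multi-source analogue of the execution-tree approach (``Approach~2''): at the end of the $2^t$-thresholded BFS, every node $w$ with $dist(w,S)=2^t$ checks whether it has a neighbor at distance $>2^t$ from $S$, and this bit is convergecast toward the root of $w$'s tree in the execution forest (which has exactly one tree per source, so each reached node lies in a unique tree); each source then broadcasts down its tree whether any such frontier node exists. A reached node thereby learns whether $D_1>2^t$; an unreached node already knows $dist(v,S)>2^t$ and hence $D_1>2^t$, and proceeds regardless. The small new point to verify is the correctness of this test: if some $v$ has $dist(v,S)>2^t$, then the vertex at distance exactly $2^t$ from $S$ along a shortest $v$-to-$S$ path has a neighbor at distance $>2^t$ and is itself reached by the $2^t$-thresholded BFS, so the test fires; conversely, if all nodes are within distance $2^t$ of $S$, no frontier node reports a distant neighbor. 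Hence the algorithm halts after $\lceil \log_2 D_1\rceil + O(1)$ iterations. Note that ``Approach~1'' does not transfer verbatim: when $D_1\ll D$ no single cluster need contain all of $V$, so the frontier test above is the robust replacement.

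The complexity bookkeeping is then the same geometric summation as in \Cref{theorem:SSBFSfinal}. Treating the iterations as sequential stages in the sense of \Cref{lemma:sequential}, iteration $t$ has isolated time $O(2^t\log^{11} n)$, dominated by the cover construction of \Cref{theorem:sparseasynch}, plus $O(2^t)$ for the frontier convergecast; summing over $t$ up to $\lceil \log_2 D_1\rceil + O(1)$ gives $O(D_1\log^{11} n)$, and adding the $O(\log^{10} n)$ of initialization preserves this bound. For messages, iteration $t$ costs $O(m\log^8 n)$ (again dominated by \Cref{theorem:sparseasynch}) plus $O(m)$ for the convergecast; over $O(\log D_1)=O(\log n)$ iterations this totals $O(m\log^9 n)$, which is dominated by the $O(m\log^{10} n)$ of initialization, giving $O(m\log^{10} n)$ messages overall.

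I expect the only genuine obstacle to be nailing down the termination argument in the multi-source setting: ensuring that \emph{every} node, reached or not and without knowing $D_1$, can decide when to stop, and that the frontier test indeed certifies $D_1\le 2^t$. Everything else is a transcription of the single-source argument combined with the already-established multi-source primitives, so I do not anticipate further difficulties.
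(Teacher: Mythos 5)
Your overall plan (initialization, iterations of multi-source thresholded BFS plus cover construction, and an Approach-2-style frontier test) matches the paper's structure, but the termination argument has a genuine gap. Your frontier test is carried out \emph{per execution tree}: each source learns, and broadcasts down its own tree, whether some depth-$2^t$ node of \emph{that tree} has an unreached neighbor. This does not tell a reached node whether $D_1 > 2^t$; it only tells it about its own region. When $D_1 \ll D$ you cannot upgrade this to a global bit either, since aggregating a single bit over the whole graph takes $\Omega(D)$ time, which already exceeds the target bound. So necessarily some trees will see ``no frontier neighbor'' and their nodes will want to stop, while nodes in other trees (and all unreached nodes) must continue. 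Your proposal does not say what happens then: you explicitly keep the sparse $2^{t+7}$-cover construction unchanged (``nothing in it needs to change''), i.e.\ run on the whole graph, which is inconsistent with early-stopping regions — either the stopped nodes break the cover construction and the subsequent BFS stages that the continuing regions need, or they keep participating and then have no criterion for when to ever stop, destroying the $\tilde{O}(D_1)$ time and $\tilde{O}(m)$ message bounds.

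The paper closes exactly this gap with an alive/dead mechanism: a source whose subtree has depth $<2^t$, or none of whose depth-$2^t$ nodes has an unreached neighbor, becomes dead and kills its whole execution subtree; unreached nodes and subtrees of alive sources stay alive; and, crucially, the \emph{next iteration — including the construction of the next sparse cover — is run only on the subgraph induced by alive nodes}. The correctness of this restriction rests on the invariant that the distance from every alive node to its closest alive source is unchanged, so the recursion remains a valid multi-source BFS on the alive subgraph, and after $\lceil \log_2 D_1\rceil$ iterations all nodes are dead and terminate. Your writeup is missing both the mechanism (restricting later iterations, in particular cover construction, to the still-active nodes) and the distance-preservation argument that justifies it; without them the claim that the algorithm halts after $\lceil\log_2 D_1\rceil+O(1)$ iterations within the stated time and message bounds does not follow. (A minor side remark: Approach~1 does transfer verbatim — the paper notes it yields $\tilde{O}(D)$ time — it just does not give $\tilde{O}(D_1)$.)
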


\begin{proof}
By applying \textbf{Approach 1} of \cref{theorem:SSBFSfinal}, we would get an algorithm that terminates in $O(D \log^{11}{n})$ time, where $D$ is the diameter of the graph, and uses $O(m\log^{10}{n})$ messages. But we can do better. 

Let us start with some intuition: In the case of the multi-source BFS, the first two stages of the algorithm from \cref{theorem:SSBFSfinal} work the same. The differences start with the termination. Let $D_1$ be the largest of the values $dist(v, S)$ over all nodes $v$, where $S$ is the set of sources. In the synchronous environment, our algorithm would run in time $O(D_1)$. However, the termination \textbf{Approach 1} from \cref{theorem:SSBFSfinal} only guarantees that we finish in time $\tilde{O}(D)$, not $\tilde{O}(D_1)$, while for the \textbf{Approach 2} it is crucial that there is only one source so that all the nodes will be in the same execution tree when they are reached, and we can run a convergecast. Fortunately, there is a nice modification of \textbf{Approach 2}, that allows to run the multi-source in $\tilde{O}(D_1)$ in this case too.

We now present the proof. We will once again do the BFS and build the sparse covers in iterations, but here we would need an extra trick. 

Let us call all nodes \textbf{alive} initially. If a source runs a $d$-thresholded BFS, then it will collect the same information as in the \textbf{Approach 2} from the \cref{theorem:SSBFSfinal}: from the nodes in its execution subtree at depth $d$ it will collect whether they have neighbors that have not been reached by the BFS. A source becomes \textbf{dead}, if the depth of its execution subtree is less than $d$, or if there is no node of depth $d$ in this subtree that has a neighbor not reached by the BFS. If a source becomes dead, it tells this to all the nodes in its execution subtree, and all of them will become dead too. A node that has not been reached by the BFS knows that it remains alive, and a node that is in a subtree of a source that stays alive also stays alive.

Let us say that we did $d$-thresholded BFS. Consider all the alive nodes and sources. The distance from every alive node to the closest alive source has not changed. So, we can run the next iteration only on these alive nodes.

So, the algorithm would work as follows: each node keeps track of whether it is dead or alive for a particular iteration, and only the alive nodes are involved in constructing the corresponding sparse cover (which will be a sparse cover of the subgraph on only the alive nodes). After $\lceil \log_2{D_1} \rceil$ iterations, all nodes will become dead, and the algorithm will terminate.

The time and message complexity bounds are proved in the same way as in \cref{theorem:SSBFSfinal}.
\end{proof}

\section{Polylogarithmic synchronizer for event-driven algorithms}
\label{sec:generalization}

In this section, we describe the generalization that extends the scheme described in \Cref{sec:BFS} to a synchronizer for a general event-driven synchronous algorithm, thus proving (formal variants of) \Cref{thm:mainInformal}. The formal theorem statements appear later as \Cref{theorem:extensiontimeunknown} and \Cref{theorem:extensiontimeknown}.

\subsection{The general structure of the event-driven algorithms} 
Let us briefly recall the interpretation of event-driven synchronous algorithms, as discussed in \cref{subsec:modelSubtleties}. In event-driven algorithms, nodes cannot refer to the round numbers, i.e., the nodes cannot explicitly access the clock value. A node may send a message only because of receiving some messages (or sending some prior messages and getting their acknowledement). In other words, a node cannot simply wait for several rounds before sending a particular message.

In the synchronous environment, every message is sent at a certain pulse. The messages of pulse $0$ are sent by some \textbf{initiators}; these messages can be perceived as triggered by the outside environment. Every message of pulse $p>0$ has to be triggered by some message of pulse $p-1$. 
It can be triggered in two possible ways:

\begin{itemize}
    \item Node $v$ has received some messages of pulse $p-1$, which triggers sending a message of pulse $p$
    \item Note $v$ has sent some messages of pulse $p-1$, which triggers sending a message of pulse $p$
\end{itemize}

\subsection{The setup}

Let $\alg$ denote the synchronous algorithm that we are trying to run in the asynchronous environment. As before, let $T(\alg)$ denote its runtime, and $M(\alg)$ denote its message complexity.

We will create a virtual node $(v, p)$ for any node $v$ sending one or more messages at pulse $p>0$. Consider any such node $v$, sending a message $m$ with pulse $p>0$. This message is triggered by some message of pulse $p-1$. Choose any of them, let's say, message $m_1$ of pulse $p-1$, from node $u$. Then, let's declare node $(u, p-1)$ as a \textbf{parent} of the node $(v, p)$. Note that it's possible that $u = v$, if $v$ sending a message at pulse $p$ is triggered by $v$ sending a message at pulse $p-1$.



For any $p$ for which node $u$ is sending a message at pulse $p$, node $u$ will be responsible for performing all the actions of the virtual node $(u, p)$. We describe the exact actions later.

From now on, the setting is very similar to the setting in \cref{subsec:singleBFS}; we define all the same notions, but for the virtual nodes now. Let's define the \textbf{pulse} of a virtual node $(u, p)$. In some sense, our goal is to ensure that $pulse((u, p)) = p$.

Node $v$ is keeping track of the messages that it has received/sent. As soon as it is triggered by some of them (and some other actions of our synchronizer) to send some message $m$ at pulse $p$, it creates a virtual node $(v, p)$. $(v, p)$ picks any of the messages of pulse $p-1$ that it was triggered by, say, $m_1$, sent from $u$, and declares $(u, p-1)$ as its \textbf{parent}. Virtual node $(v, p)$ will then set $pulse((v, p)) = pulse((u, p-1)) + 1$.

So far, everything looks very similar to the \cref{subsec:singleBFS}, but there are several important distinctions. Before even discussing the time and message complexity and the way to actually simulate the work of these virtual messages, we have to understand what the desired correctness guarantees are. In the case of BFS, we had to show that the execution tree will actually be the BFS tree: that the pulse that each node receives will be equal to its distance from the closest source. What do we have to show in this case?

Note that each node acts as a state machine.  It does not know anything except what other nodes tell it. So, its messages depend only on the messages it receives from them. If it receives these messages in some different order, it might decide to send very different messages. For example, for the BFS, for the correct functioning of the algorithm, it is important that the first ``join'' request that a node $v$ at a distance $p>0$ from the sources receives is from a node at a distance $p-1$ from the sources. If this condition is not satisfied, the algorithm will be completely broken. 

Consider a concrete example. Let us say that the node $v$ in the synchronous environment receives messages $m_1, m_2, m_3$ at pulses $p, p+1, p+2$ correspondingly. After receiving $m_2$, it generates a new message $r$. Then, it may be important for $v$ to receive messages $m_1, m_2, m_3$ precisely in this order. Indeed, it might be the case that $v$ would want to generate some other message $r_1$ if it has received only message $m_2$ (without $m_1$), and we have to avoid this. Similarly, it might be the case that $v$ would want to generate some other message $r_2$ if it has received only messages $m_1, m_3$ (without $m_2$), and we have to avoid this. The main takeaway is this: when node $v$ decides to send a message based on having received some messages $m_1, m_2, \ldots, m_k$, it has to be sure that these $m_1, m_2, \ldots, m_k$ correspond to some valid prefix of messages that it would have received in a synchronous environment.

\subsection{A synchronizer, given layered sparse $O(T(\alg))$-cover}
\label{subsec:extensionsynchwithcovers}
\subsubsection{Algorithm}

Let's first consider the case when we are given a layered sparse $(2^6\cdot T(\alg))$-cover. We will try to run almost the same algorithm as in \cref{subsec:multiBFS} on the virtual nodes, with a few small changes. As a reminder, the physical node responsible for the actions of the virtual node $(u, v, p)$ is simply a node $u$. We discuss the adaptations needed to make below.

\paragraph{Notion of distance.} For virtual nodes $(u_1, p_1), (u_2, p_2)$, we define $dist((u_1, p_1), (u_2, p_2))$ as equal to $dist(u_1, v_1)$. Note that it follows that $dist(parent(v), v) \le 1$ for any virtual node $v$.


\paragraph{Sending messages.} In the BFS case, node $v$ was sending ``join'' requests to all its neighbors. In the current setting,  the virtual node $(v, p)$, sending some messages at pulse $p$, will send only the corresponding messages. 

\paragraph{Simulation of virtual nodes by the physical nodes.} Physical node $v$ simulates all the actions of the virtual node $(v, p)$ directly: as if it was actually sending those through the same edges. We will point out just one subtlety. If $v$ has a choice, the message from which virtual node to send through a given edge, it chooses the order that would make \cref{corollary:parallel} and \cref{lemma:sequential} still hold. In particular, between messages corresponding to different pulses, if prioritizes smaller ones; for the same pulse, it still does it in a round-robin fashion.

\paragraph{Creation of the virtual nodes.} When a physical node $v$ receives/sends a message of pulse $p-1$, it does the following. First, it waits until it receives $Go\_Ahead(p)$. Then, it looks at all the messages with a pulse at most $p-1$ that it has received/sent. It determines whether, in the synchronous world, it would send any messages at pulse $p$, given the history of the received and sent messages up to this point. Then, if it would send messages at pulse $p$, it creates a virtual node $(v, p)$. Finally, it chooses any virtual node of pulse $p-1$ from which it has received a message as a parent. It then notifies all these virtual nodes of pulse $p-1$ whether they have been chosen. 

\paragraph{Convergecast and broadcast in the execution tree.} Consider the process of collecting the information about the $p$-safety/emptiness of all (virtual) nodes with pulses in $[prev(prev(p)), p]$ for some pulse $p>0$. In the BFS problem, it was easy, as every physical edge appeared only once in the execution tree. Now, a particular physical edge may appear in the execution tree many times, as there might be many messages sent through some particular edge. However, there is still at most one such message per pulse, so this does not lead to any problems, as we see later.





\paragraph{The final convergecast for node termination.} In the $d$-thresholded BFS we had to make sure that every node at a distance larger than $d$ from the sources learned it. Here, we do not require any termination of this form.

Everything else remains the same.

\subsubsection{Analysis of correctness}

\cref{lemma:prevclose}, \cref{lemma:closehosts}, \cref{lemma:waiting} translate into the current setting as they are.

\begin{lemma}[Counterpart of \cref{lemma:executionisbfs}]
    \label[lemma]{lemma:executioniscorrect}
    For any physical node $v$, the following claims hold:

    \begin{itemize}
        \item For any messages $m_1, m_2$ with pulses $p_1, p_2$ correspondingly with $p_1<p_2$, $v$ will receive $m_1$ before it receives $m_2$
        \item For any pulse $p>0$, $v$ will receive $Go\_Ahead(p)$ only after it has received all the messages of pulse $p-1$.
    \end{itemize}
    
\end{lemma}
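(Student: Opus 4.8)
The plan is to mirror the proof of \Cref{lemma:executionisbfs}, now working with virtual nodes and the distance convention $dist(parent(u),u)\le 1$; \Cref{lemma:prevclose}, \Cref{lemma:closehosts} and \Cref{lemma:waiting} are available verbatim in this setting by assumption. I would prove both bullets together, by induction on the pulse, fixing a physical node $v$.

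\emph{First bullet.} Suppose toward a contradiction that $v$ receives a message $B$ of pulse $b$ before a message $A$ of pulse $a<b$, both destined for $v$; their senders are virtual nodes $(w_B,b)$ and $(w_A,a)$ with $w_A,w_B$ adjacent to $v$. As in \Cref{lemma:executionisbfs}, pick the integer $x\in[a+1,b]$ with $\ell(x)$ maximal and let $W$ be the ancestor of $(w_B,b)$ of pulse $x$ (it exists since $x\le b$). Since $(w_B,b)$ has sent $B$, every ancestor of $(w_B,b)$ has already completed its turn; in particular $W$ has sent its pulse-$x$ messages, which needs $W$ to have received $Go\_Ahead(x)$, which in turn needs $h(h(W))$ to have received $Go\_Ahead(x)$ in the relevant $2^{\ell(x)+5}$-cover clusters. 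Exactly as in the BFS case, $pulse(h(W))=prev(x)\le x-2^{\ell(x)}\le a$, since $x-2^{\ell(x)}$ is divisible by a higher power of $2$ than $x$ and hence lies strictly below $[a+1,b]$. Put $y_2=h(h(W))$ (pulse $prev(prev(x))$, not $x$-empty because its subtree contains $W$), and let $y_1$ be the ancestor of $(w_A,a)$ of pulse $prev(prev(x))$, which exists because $prev(prev(x))\le prev(x)\le a$ and is not $prev(x)$-empty (the path from $y_1$ down to $(w_A,a)$ realises every pulse in $[prev(prev(x)),a]\ni prev(x)$).

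Using $dist(w_A,v)=dist(w_B,v)=1$, the virtual-node distance convention, and \Cref{lemma:prevclose}, the same computation as in \Cref{lemma:executionisbfs} gives $dist(y_1,y_2)\le 19\cdot 2^{\ell(x)}<2^{\ell(x)+5}$, so some cluster $C$ of the $2^{\ell(x)+5}$-cover contains both $y_1$ and $y_2$. Applying \Cref{lemma:waiting} at pulse $x$ (with $y_2$ in the role of the not-$x$-empty node and $y_1$ in the role of the not-$prev(x)$-empty node) shows that $y_2$ receives $Go\_Ahead(x)$ in $C$ only after $y_1$ has $x$-deregistered. But, as in \Cref{lemma:executionisbfs}, $y_1$ can $x$-deregister only once every virtual node of pulse $\le x-1$ in its subtree has finished sending all its messages \emph{and received acknowledgements for them}; since $(w_A,a)$ lies in $y_1$'s subtree and $a\le x-1$, this means $A$ has already been delivered to $v$. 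Chaining the implications, $A$ reaches $v$ strictly before $h(h(W))$ obtains $Go\_Ahead(x)$, hence before $W$ sends its pulse-$x$ messages, hence before its descendant $(w_B,b)$ sends $B$, hence before $v$ receives $B$ — a contradiction. For the \emph{second bullet}, suppose $v$ receives $Go\_Ahead(p)$ while some pulse-$(p-1)$ message $A$ destined for it, sent by $(w_A,p-1)$, has not yet arrived. Here $v$ is acting as the (tentative) pulse-$p$ node $(v,p)$, whose host's host $y_2:=h(h((v,p)))$ has pulse $prev(prev(p))$, contains $(v,p)$ in its subtree (so is not $p$-empty), and received $Go\_Ahead(p)$ in its $2^{\ell(p)+5}$-cover clusters before the $Go\_Ahead$ propagated down the branch to $v$. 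Taking $y_1$ to be the ancestor of $(w_A,p-1)$ of pulse $prev(prev(p))$ (not $prev(p)$-empty, as above), \Cref{lemma:prevclose} and the distance convention give $dist(y_1,y_2)<2^{\ell(p)+5}$, so they share a cluster $C$, and \Cref{lemma:waiting} at pulse $p$ forces $y_2$ to receive $Go\_Ahead(p)$ in $C$ only after $y_1$ has $p$-deregistered, which again requires $(w_A,p-1)$ (in $y_1$'s subtree, pulse $p-1\le p-1$) to have sent $A$ and received its acknowledgement, i.e.\ $A$ to be delivered — contradicting that $v$ received $Go\_Ahead(p)$ first.

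The step I expect to be the main obstacle is the bookkeeping for the second bullet: $Go\_Ahead(p)$ is not ``sent by a pulse-$p$ virtual node'' the way a data message is, so I must pin down, against the exact algorithm of \Cref{subsec:extensionsynchwithcovers}, which ancestor of $v$ plays the role of $W/(v,p)$ and argue that its host's host is genuinely not $p$-empty — that is, reconcile the chicken-and-egg whereby $v$ waits for $Go\_Ahead(p)$ before it has committed to creating $(v,p)$. A secondary point requiring care, shared with \Cref{lemma:executionisbfs}, is verifying that ``$y_1$ has $x$-deregistered'' really certifies \emph{delivery} (not merely transmission) of every pulse-$\le x-1$ message leaving $y_1$'s subtree; this rests on the acknowledgement discipline recalled in \Cref{subsubsec:parallel} together with the precise definition of $x$-safety.
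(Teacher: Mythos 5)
Your first bullet is, up to phrasing, exactly the paper's argument (a rerun of \cref{lemma:executionisbfs} on virtual nodes, with the same choice of $x$, $W$, $y_1$, $y_2$ and the same application of \cref{lemma:prevclose} and \cref{lemma:waiting}), and your added care that $x$-deregistration certifies \emph{delivery} of all pulse-$\le x-1$ messages is justified by the definition of $p$-safety, which requires answers to those messages to have been received; that part is fine.

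The genuine gap is the one you flagged in the second bullet, and it is not mere bookkeeping: you anchor the argument on $y_2:=h(h((v,p)))$ and justify ``not $p$-empty'' by ``its subtree contains $(v,p)$'', but at the moment $v$ receives $Go\_Ahead(p)$ the virtual node $(v,p)$ does not yet exist, and it may never exist at all, since $v$ only decides whether it would send pulse-$p$ messages \emph{after} receiving $Go\_Ahead(p)$, while the claim must hold for every physical node that receives $Go\_Ahead(p)$. So $h(h((v,p)))$ is not a well-defined anchor and the non-emptiness justification is circular. The paper closes this by anchoring on the last hop of the $Go\_Ahead$ itself: because $Go\_Ahead(p)$ is propagated down the execution tree, $v$ receives it from some already-existing virtual node $v_1$ of pulse $p-1$, and for $v_1$ to forward it, the ancestor $y_2$ of $v_1$ of pulse $prev(prev(p))$ must already have received $Go\_Ahead(p)$ from its clusters (the paper writes this ancestor loosely as $h(h(v_1))$). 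One then pairs $y_2$ with the ancestor $y_1$ of pulse $prev(prev(p))$ of the sender $(w_A,p-1)$; this $y_1$ is not $prev(p)$-empty since $prev(prev(p))\le prev(p)\le p-1$, and since the physical carriers of $v_1$ and $(w_A,p-1)$ are each $v$ or a neighbor of $v$, we get $dist(v_1,(w_A,p-1))\le 2$, so \cref{lemma:prevclose}(d) gives $dist(y_1,y_2)\le 2+2\cdot 9\cdot 2^{\ell(p)}\le 2^{\ell(p)+5}$ and they share a cluster of the $2^{\ell(p)+5}$-cover. Now \cref{lemma:waiting} forces $y_1$ to have $p$-deregistered --- hence $(w_A,p-1)$, of pulse $p-1$ in $y_1$'s subtree, to have had its message delivered and acknowledged --- before $y_2$ received $Go\_Ahead(p)$, which precedes $v$'s receipt of $Go\_Ahead(p)$: the desired contradiction. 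With your $h(h((v,p)))$ replaced by this pulse-$prev(prev(p))$ ancestor of the forwarding pulse-$(p-1)$ virtual node, the rest of your second-bullet argument goes through as in the paper.
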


\begin{proof}

    \paragraph{First claim.} Let message $m_2$ be sent from (virtual) node $v_2$ of pulse $p_2$, and message $m_1$ be sent from (virtual) node $v_1$ of pulse $p_1$. Suppose that $v$ receives message $m_2$ before $m_1$. Consider the moment when $v$ has already received message $m_2$ but hasn't received $m_1$ yet. 
    
    Let $x$ be the integer from the segment $[p_1+1, p_2]$ divisible by the largest power of $2$ (with largest $\ell(x)$), and let $w$ be the ancestor of $v_2$ with pulse $x$. As the message from $w$ in this execution tree has already been sent,  $w$ has already received $Go\_Ahead(x)$, which is possible only after $h(h(w))$ receives $Go\_Ahead(x)$.

    Now note that $pulse(h(w)) \le p_1$. Indeed, $pulse(h(w)) \le x - 2^{\ell(x)}$, and as number $x - 2^{\ell(x)}$ is divisible by larger power of $2$ than $x$ is, it can't be in $[p+1, p_1]$ by our choice of $x$. So, $pulse(h(w)) \le x- 2^{\ell(x)} \le p$.

    Let $y_1$ be the ancestor of $v_1$ with pulse $prev(prev(x))$, $y_2$ be the ancestor of $v_2$ with pulse $prev(prev(x))$ (so that $y_2 = 
    h(w)$). As $pulse(h(w)) \le p$, $y_1$ is not $prev(x)$-empty. $y_2$ is not $x$-empty by the choice of $x$.

    Next, we show that $dist(y_1, y_2) \le 2^{\ell(p)+5}$ with the argument completely mirroring the argument from \cref{lemma:executionisbfs}. Then, by \cref{lemma:waiting}, $y_2 = h(h(w))$ can receive $Go\_Ahead(x)$ only after $y_1$ has $x$-deregistered, which is not possible until $m_1$ has been sent, contradiction. 

    \paragraph{Second claim.} Suppose the opposite. Suppose that $v$ has already received $Go\_Ahead(p)$ from some virtual node $v_1$ of pulse $p-1$ before it receives a message $m_2$ of pulse $p-1$ from some virtual node $v_2$ of pulse $p - 1$. Consider this moment. $h(h(v_1))$ must have already received $Go\_Ahead(p)$.

    As $dist(v_1, v_2) \le 2$, by part (d) of \cref{lemma:prevclose} we get $dist(h(h(v_1)), h(h(v_2))) \le dist(h(h(v_1)), v_1) + dist(v_1, v_2) + dist(v_2, h(h(v_2))) \le 2 + 2\cdot (9\cdot 2^{\ell(p)}) \le 2^{\ell(p) + 5}$, so we can apply \cref{lemma:waiting}. As $h(h(v_1))$ has already received $Go\_Ahead(p)$, $h(h(v_2))$ has already deregistered from all their common clusters, which happens only after $m_2$ has already been sent.

\end{proof}

\begin{theorem}
    This algorithm correctly synchronizes algorithm $\alg$.
\end{theorem}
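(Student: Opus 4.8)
The plan is to exhibit a legal synchronous execution $E$ of $\alg$ and to show, by induction on the pulse number, that the asynchronous run produced by the synchronizer agrees with $E$ message for message: every physical node sends and receives exactly the messages, with exactly the same contents and destinations, that it does in $E$, only with different timing. Once this is in place, every node ends in the same state it reaches in $E$, hence produces the same output, and correctness follows because $E$ is a legal run of $\alg$. If $\alg$ is randomized we couple $E$ with the run by letting the virtual node $(v,p)$ consume the same random bits that $v$ consumes in round $p$ of $E$ (this is consistent because, as noted below, a node processes pulse levels in increasing order, so it reads its random tape in the same order in both executions); since a node's round-$p$ behavior is a deterministic function of the messages it has received through round $p-1$, the messages it has sent through round $p-1$, and its round-$p$ randomness, it suffices to control the first two ingredients.

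Concretely, I would prove by strong induction on $p\ge 0$ the statement $\mathcal{I}(p)$: \textbf{(a)} for every physical node $v$, the (multi)set of pulse-$p$ messages that $v$ sends in the asynchronous run --- i.e., the messages emitted by the virtual node $(v,p)$, or none if $(v,p)$ is never created --- equals the set of messages $v$ sends in round $p$ of $E$, with identical contents and destinations; and \textbf{(b)} $pulse((v,p))=p$ for every virtual node $(v,p)$ that is created. The base case $p=0$ is immediate: the pulse-$0$ messages are the fixed inputs of the initiators, which are the same in both executions, and declaring $pulse((v,0))=0$ for initiators makes (b) vacuous. For the inductive step, part (b) is easy: by the protocol the parent of $(v,p)$ is a virtual node of pulse $p-1$, which has pulse $p-1$ by (b) for $p-1$, so $pulse((v,p))=p$. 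With (b) established for all levels $\le p$, the pulse tags carried by messages coincide with the $pulse(\cdot)$ values used in \cref{lemma:waiting} and \cref{lemma:executioniscorrect}, so those lemmas may be applied freely.

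The heart of the argument is part (a). Fix a node $v$ that acts on pulse $p$, i.e., that receives or sends a message of pulse $p-1$ and then receives $Go\_Ahead(p)$. I would argue that at the instant $v$ receives $Go\_Ahead(p)$, its view of its own message history coincides with the round-$(p-1)$ prefix of its history in $E$. On the receiving side: by the first claim of \cref{lemma:executioniscorrect} the messages $v$ receives arrive in nondecreasing order of pulse, and by the second claim $v$ obtains $Go\_Ahead(p)$ only after it has received every pulse-$(p-1)$ message; hence at that moment $v$ has received precisely the set of all messages of pulse $\le p-1$ addressed to it, and by $\mathcal{I}(q)$(a),(b) for $q\le p-1$ together with the fact that in the asynchronous model every sent message is eventually delivered, this set equals the set of messages $v$ receives through round $p-1$ of $E$. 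On the sending side: $v$'s own recorded log of messages it has sent with pulse $\le p-1$ equals, by the inductive hypothesis, what $v$ sends through round $p-1$ of $E$ --- here one also uses the (easy but necessary) fact that a node completes processing pulse level $q$ before it begins level $q+1$, so all of $v$'s pulse-$\le p-1$ sends are already decided when it acts on pulse $p$. Feeding this identical prefix, together with the coupled randomness, into the transition function of $\alg$, node $v$ makes exactly the same decision as in round $p$ of $E$: it creates $(v,p)$ if and only if $v$ sends in round $p$ of $E$, and $(v,p)$ then emits exactly those messages with the same contents. This proves $\mathcal{I}(p)$ and closes the induction; summing over $p$ yields message-for-message agreement with $E$ and hence the theorem.

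The step I expect to be the main obstacle --- and the reason the preceding subsections build up \cref{lemma:waiting} and \cref{lemma:executioniscorrect} --- is precisely the one flagged in the setup: a state-machine node's output can change if the set or the order of delivered messages changes, so without the ordering-plus-completeness guarantee of \cref{lemma:executioniscorrect} one cannot even assert that a node ever reaches a state consistent with \emph{any} synchronous execution. The remaining delicate points (the $u=v$ / triggered-by-sending case when selecting a parent, the within-node increasing-pulse ordering of actions, and verifying that no virtual node is either spuriously created or wrongly omitted) are routine once that guarantee is available, but they need to be stated carefully rather than waved through.
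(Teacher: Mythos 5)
Your proposal is correct and takes essentially the same route as the paper: both arguments hinge on \cref{lemma:executioniscorrect} (messages arrive in increasing pulse order, and $Go\_Ahead(p)$ arrives only after all pulse-$(p-1)$ messages), so that when a node acts on pulse $p$ its received/sent history matches the synchronous round-$(p-1)$ prefix and it makes the identical decision. The paper phrases this as a minimal-counterexample argument over the smallest deviating pulse rather than your explicit induction (and does not spell out the randomness coupling or the bookkeeping details you flag), but these are presentational differences, not a different proof.
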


\begin{proof}
    Our goal is to show that the set of messages sent by all nodes is exactly the same as it would have been in the synchronous environment. Consider the smallest $p$ such that one of the following happens: some node $v$ that had to send a message $m$ of pulse $p+1$ has not sent it, or some node $v$ sent a message $m$ of pulse $p+1$ that it would not send in the synchronous setting.

    Suppose that the first case happened. In the synchronous setting, $m$ is triggered by some message $m_1$ of pulse $p$. From \cref{lemma:executioniscorrect}, by the time $v$ receives $Go\_Ahead(p+1)$, it has received all the messages of pulse $p$, and, therefore, all the valid messages of pulses up to $p$. Then $v$ would be able to detect that in the synchronous setting, it would have sent $m$, so it would send $m$ now.

    Suppose that the second case happened. Suppose that $m$ has sent some message $m$ of pulse $p+1$ that it would not send in the synchronous setting. This action must have been triggered by receiving $Go\_Ahead(p+1)$. But then, once again, by that time, the set of messages with pulses up to $p$ that $v$ received would coincide with the set it would receive in the synchronous setting, so $v$ would correctly detect that is does not send $m$. 
\end{proof}

\subsubsection{Analysis of the time and message complexity}

The main result of this section is the following theorem:

\begin{theorem}
    \label[theorem]{theorem:extensioncomplexity}
    For an algorithm $\alg$ with worst-case time complexity $T(\alg)$ and message complexity $M(\alg)$, this synchronized version,  given a layered sparse $O(T(\alg))$-cover, finishes in $O(T(\alg) \cdot \log^7{n} \cdot (\log{T(\alg)} + \log{n}))$ time and takes $O(M(\alg) \cdot \log{n}^4 \cdot \log{T(\alg))})$ messages.
\end{theorem}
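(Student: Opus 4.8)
The plan is to mirror, essentially line by line, the complexity analysis of the single-source $2^t$-thresholded BFS (\Cref{theorem:SSBFStimeandmessage}, together with \Cref{subsec:multiBFS}), but carried out on the \emph{virtual execution forest} rather than on physical nodes. Correctness has already been handled (\Cref{lemma:executioniscorrect} and the theorem preceding it), so here we only do the bookkeeping. The two quantities that play the roles of $2^t$ and $m$ are: the maximum pulse of any virtual node, which is at most $T(\alg)$ (a virtual node $(v,p)$ exists only if in the synchronous run $v$ sends at pulse $p$, and $\alg$ runs for at most $T(\alg)$ rounds); and the number of virtual nodes, which is at most $M(\alg)$, since each virtual node sends at least one message of $\alg$, so in particular the number of virtual (execution-forest) edges is at most $M(\alg)$. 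We also use the standing convention that $\alg$ sends at least one message along every edge, so $m = O(M(\alg))$.

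\textbf{Time.} I would reuse the stage decomposition: one stage $\stage_p$ per pulse $p$, collecting everything with ``pulse $p$'' ($p$-registrations, $p$-deregistrations, sending $Go\_Ahead(p)$, collecting $p$-safety/emptiness along the execution forest, and transmitting the pulse-$p$ messages of $\alg$), and apply \Cref{lemma:sequential}. The only genuinely new point relative to \Cref{lemma:stagebound} is that a single physical edge may host many virtual edges of the execution forest; but it hosts only $O(1)$ of them \emph{per pulse}, so for a fixed $p$ the $p$-safety convergecast/broadcast runs on a subforest of depth $p-prev(prev(p))=O(2^{\ell(p)})$ on which each physical edge carries $O(2^{\ell(p)})$ combining messages, hence finishes in $O(2^{\ell(p)})$ time on its own edge-copies. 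Treating this as one procedure and the $O(\log^4 n)$ relevant cluster trees of the $2^{\ell(p)+5}$-cover (each of depth $O(2^{\ell(p)}\log^3 n)$) as the others, \Cref{corollary:parallel} gives isolated time $O(2^{\ell(p)}\log^7 n)$ for $\stage_p$, exactly as in \Cref{lemma:stagebound}; there is no ``checking stage'' here. Summing over $p$ up to $O(T(\alg))$ via the analogue of \Cref{lemma:summation} gives $\sum_p O(2^{\ell(p)}\log^7 n)=O(T(\alg)\log^7 n\cdot \log T(\alg))$, which is within (indeed slightly sharper than) the claimed $O(T(\alg)\log^7 n(\log T(\alg)+\log n))$.

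\textbf{Messages.} I would split the count into three parts, following the message-complexity part of the proof of \Cref{theorem:SSBFStimeandmessage}. (i) The messages of $\alg$ together with their acknowledgements contribute $O(M(\alg))$. (ii) The $p$-safety/emptiness convergecasts and the $Go\_Ahead(p)$ broadcasts travel along the execution forest; by the analogue of \Cref{lemma:pulsesperedge} each virtual edge is involved for only $O(\log T(\alg))$ values of $p$, and there are $O(M(\alg))$ virtual edges, so this is $O(M(\alg)\log T(\alg))$ messages (using also that $Go\_Ahead$ traffic is proportional to registration/deregistration traffic, by \Cref{lemma:guarantee2}). (iii) The dominant part is the in-cluster registration traffic. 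For each level $\ell$, a virtual node $(v,q)$ with $q=prev(prev(p))$, $\ell(p)=\ell$, $p$-registers only if it is not $prev(p)$-empty, i.e.\ has a pulse-$prev(p)$ descendant in the execution forest, so the $2^{\ell}$ virtual nodes immediately below $v$ on that downward branch witness it; these witness-blocks are pairwise disjoint over all registering virtual nodes of level $\ell$, because two such nodes either sit at the same depth ($=$ pulse $prev(prev(p))$), and then have disjoint subtrees, or sit at distinct depths $prev(prev(p_1))\ne prev(prev(p_2))$ that, being distinct multiples of $2^{\ell+2}$, differ by at least $2^{\ell+3}>2^{\ell}$. Hence at most $O(M(\alg)/2^{\ell})$ virtual nodes register at level $\ell$; each registers and deregisters in $O(\log n)$ clusters of the $2^{\ell+5}$-cover at a cost of $O(2^{\ell}\log^3 n)$ messages each (cluster-tree depth, via \Cref{lemma:guarantee1}), for $O(M(\alg)\log^4 n)$ per level. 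The pulses with $prev(prev(p))=0$ are handled as in \Cref{subsec:multiBFS} by a convergecast over cluster trees, costing $O(m\log^4 n)=O(M(\alg)\log^4 n)$ per such pulse, and there are only $O(\log T(\alg))$ of them by the analogue of \Cref{lem:NumberofPrevPrev}. Summing over the $O(\log T(\alg))$ levels yields $O(M(\alg)\log^4 n\log T(\alg))$ in total.

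\textbf{Main obstacle.} The step I expect to be the crux is re-establishing the per-level registration bound $O(M(\alg)/2^{\ell})$ in part (iii): in the BFS case the witnessing argument relied on the execution tree literally being a BFS tree in $G$, whereas here the ``tree'' is the virtual execution forest, so disjointness must be argued purely from the level arithmetic of $prev$ and $prev\circ prev$ (and one must accommodate the forest having several initiators/roots). The other subtlety, the many-to-one mapping of virtual edges onto physical edges, is benign once one observes that each physical edge carries only $O(1)$ virtual edges per pulse, which keeps the convergecasts depth-bounded and makes \Cref{lemma:stagebound} and \Cref{lemma:pulsesperedge} go through unchanged.
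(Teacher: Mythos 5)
Your analysis follows the paper's own proof essentially step for step (per-pulse stages with isolated time $O(2^{\ell(p)}\log^7 n)$ summed via \Cref{lemma:summation}, execution-forest traffic bounded via \Cref{lemma:pulsesperedge} with $O(M(\alg))$ forest edges, and the per-level count of $O(M(\alg)/2^{\ell})$ registering virtual nodes), and your explicit disjoint-witness argument for that last count is in fact more detailed than what the paper writes. However, there is one concrete omission in your time analysis: the synchronizer is built on the \emph{multi-source} scheme of \Cref{subsec:multiBFS}, so every initiator (pulse-$0$ virtual node) must first register, via convergecasts in the cluster trees, for \emph{all} pulses $p$ with $prev(prev(p))=0$ — including those of level $\approx \log T(\alg)$, whose clusters have depth $O(T(\alg)\log^3 n)$ — and may send its first messages only after receiving confirmations from all of them. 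This work cannot be folded into your stages $\stage_p$: it is a prerequisite for $\stage_1$, so assigning the $p$-registration of initiators to $\stage_p$ violates the hypothesis of \Cref{lemma:sequential} that each stage depends only on earlier ones. The paper instead treats it as a separate initial stage and charges it $O(T(\alg)\log^8 n)$ time, which is exactly where the $+\log n$ term in the stated bound comes from; your claim that the analysis yields the ``slightly sharper'' $O(T(\alg)\log^7 n\log T(\alg))$ is therefore not justified as written. (The theorem itself is unharmed: accounted as a stage~$0$, the initiator registration has isolated time $O(T(\alg)\log^7 n\log T(\alg))$ by \Cref{corollary:parallel} and \Cref{lem:NumberofPrevPrev}, which fits the claimed bound — but you must say this.)

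Two smaller points. First, in your disjointness argument the values $prev(prev(p))$ for level-$\ell$ pulses are multiples of $2^{\ell+2}$, so distinct ones differ by at least $2^{\ell+2}$, not $2^{\ell+3}$; the conclusion is unaffected since $2^{\ell+2}>2^{\ell}$. Second, a single virtual node can $p$-register for more than one pulse of the same level (different $p$ with the same $prev(prev(p))$), so the witness blocks are reused $O(1)$ times rather than being strictly disjoint; this only costs a constant factor, and the paper glosses over it as well, but it is worth a sentence. On the message side your accounting, including charging the $prev(prev(p))=0$ convergecasts $O(m\log^4 n)$ for each of the $O(\log T(\alg))$ special pulses and invoking $m=O(M(\alg))$, is consistent with (indeed slightly more explicit than) the paper's.
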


The proof follows the proof of \cref{theorem:MSBFS} closely. 
We divide all our pulse-related operations into stages, one stage $\stage_p$ for each pulse $p$. Then, we note that the isolated time complexity of $\stage_p$ is still $O(2^{\ell(p)} \log^7{n})$. Due to the \cref{lemma:summation}, the total isolated time complexity of all stages $\stage_i$ becomes $\sum_{p = 1}^{T(\alg)} O(2^{\ell(p)}\cdot \log^7{n}) = O(T(\alg) \cdot \log^7{n} \cdot \log{T(\alg)})$. The initial convergecast for registering all the originators takes $O(T(\alg) \cdot \log^8{n})$ time. So, the entire runtime is $O(T(\alg) \cdot \log^7{n} \cdot (\log{T(\alg)} + \log{n}))$. 

Now the message complexity. The number of messages sent for building the execution tree is $O(M(\alg))$. The number of messages propagated in the execution tree is $O(M(\alg)\log{T(\alg)})$, as, by \cref{lemma:pulsesperedge}, there are messages of only $O(\log{T(\alg)})$ pulses propagated through each edge. Note the difference: in the BFS there were $O(n)$ edges in the execution tree, now $O(M(\alg))$. 

Now let's consider the messages sent through the clusters for registration/deregistration. Once again, for a given level $\ell$, the total number of nodes who will $p$-register for some pulse $p$ with $\ell(p) = l$ is $O(\frac{M(\alg)}{2^{\ell}})$. Such nodes will register in clusters of $2^{\ell+5}$ covers and therefore take $O(2^{\ell}\log^3{n})$ messages to the root and back. Every node is in $O(\log{n})$ clusters, so there will be $O(2^{\ell}\log^3{n} \cdot \log{n} \cdot \frac{M(\alg)}{2^{\ell}}) = O(M(\alg)\log^4{n})$ messages corresponding to registration at a given level, and the total number for all levels is $O(M(\alg) \cdot \log{n}^4 \cdot \log{T(\alg))})$.

\subsection{A synchronizer, without being given layered sparse $O(T(\alg))$-cover}

In the \cref{subsec:extensionsynchwithcovers} we have constructed a synchronizer that works assuming that the layered sparse $O(T(\alg))$-cover is already provided. In this section, we discuss the ways to synchronize from scratch. 

With the algorithm from \cref{subsec:completeBFS}, we know how to build layered sparse $T(\alg)$-covers from scratch. However, it is possible that the exact value of $T(\alg)$, or even its multiplicative approximation, is not known to us. We provide slightly different synchronization approaches in cases when this value is known to us versus when it is not known.

\begin{theorem}
    \label[theorem]{theorem:extensiontimeunknown}
    Consider an algorithm $\alg$ with worst-case time complexity $T(\alg)$ and message complexity $M(\alg)$. Suppose that the nodes \textbf{do not know} the bound on $T(\alg(M))$. There is a synchronizer for $\alg$ such that 

    \begin{itemize}
        \item The entire algorithm terminates in $O(D\log^{11}{n} + T(\alg) \cdot \log^7{n} \cdot (\log{T(\alg)} + \log{n}))$ time
        \item All the outputs of $\alg$ are produced by the time $O(T(\alg)\cdot \log^{11}{n})$. Thus, with the notion of time complexity as the time until all nodes generate their output, the synchronizer has polylogarithmic overhead. \footnote{See \Cref{subsec:modelSubtleties} for discussion on the time complexity definition.}
        \item It takes $O(m\log^{10}{n} + M(\alg)\cdot \log^5{n})$ messages.
    \end{itemize}
\end{theorem}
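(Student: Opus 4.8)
The plan is to graft the event-driven synchronizer of \Cref{subsec:extensionsynchwithcovers} onto the recursive cover-building and termination scheme of \Cref{subsec:completeBFS}, exactly as \Cref{theorem:SSBFSfinal} turned the thresholded BFS of \Cref{subsec:multiBFS} into an algorithm that need not know its own depth. First I would run the initialization of \Cref{theorem:SSBFSfinal} (an $\alpha$-synchronizer) to build the layered sparse $2^6$-cover in $O(\log^{10} n)$ time and $O(m\log^{10} n)$ messages, and then proceed in iterations $t = 6, 7, 8, \dots$: in iteration $t$, once a node has finished iteration $t-1$, it (a) participates in the event-driven synchronizer of \Cref{subsec:extensionsynchwithcovers} restricted to pulses at most $2^t$ (the \emph{$2^t$-thresholded synchronizer}, the direct analogue of the $2^t$-thresholded BFS, in which a node simulates all of its pulse-$\le 2^t$ messages and, via a final check, learns whether it has anything at pulse $2^t{+}1$), using the already-built $2^{\ell(p)+5}$-covers, and (b) builds the layered sparse $2^{t+7}$-cover from the $2^{t+6}$-cover via \Cref{theorem:sparseasynch}. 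As before, pulse-$p$ operations form a stage $\stage_p$ and the whole execution is scheduled via \Cref{lemma:sequential}.

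For termination I would reuse Approach~1 of \Cref{theorem:SSBFSfinal}. Our sparse $2^t$-covers have the property that some cluster contains the whole $2^t$-ball of any node, so once $2^t\ge D$ some cluster of the $2^t$-cover contains all of $V$; a convergecast inside each cluster detects whether it is such a universal cluster. From the first iteration in which one exists, we stop constructing larger covers and use the universal cover wherever a larger-radius cover would be called for (legitimate, since any cover of radius $\ge D$ serves as a $2^j$-cover for every $j\ge\log_2 D$), and in each subsequent iteration we additionally compute, inside every universal cluster, the OR over all nodes of the predicate ``I would send a message at pulse $2^t{+}1$'' -- a predicate each node can evaluate from the messages it has received, whose completeness is guaranteed by \Cref{lemma:executioniscorrect} once the node has received the corresponding $Go\_Ahead$. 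If this OR is false then, since every pulse-$p$ message is triggered by a pulse-$(p{-}1)$ message, no message of any pulse $>2^t$ is ever sent; the root broadcasts this verdict down the universal cluster and all nodes halt. As $\alg$ halts by round $O(T(\alg))$, the OR is false from iteration $\lceil\log_2 T(\alg)\rceil+O(1)$ on, and since a universal cluster exists from iteration $\lceil\log_2 D\rceil$ on, the whole process runs for $\lceil\log_2\max(D,T(\alg))\rceil+O(1)$ iterations, with all outputs of $\alg$ produced already by the end of iteration $\lceil\log_2 T(\alg)\rceil$.

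Correctness is inherited essentially verbatim from \Cref{subsec:extensionsynchwithcovers}: \Cref{lemma:waiting} and \Cref{lemma:executioniscorrect} only ever reference pulses $\le T(\alg)$ and the $2^{\ell(p)+5}$-covers they use, all of which our construction provides, so each node sends exactly the messages it would send synchronously and hence produces every output. For the bounds I would sum isolated time complexities via \Cref{lemma:sequential}: cover construction in iteration $t$ costs $O(2^t\log^{11}n)$, summing over $t\le\log_2 D$ to $O(D\log^{11}n)$; the synchronizer handles each pulse $p\le T(\alg)$ once, at cost $O(2^{\ell(p)}\log^7 n)$ when the genuine $2^{\ell(p)+5}$-cover is available and $O(D\log^7 n)$ when it falls back on the universal cover -- the latter happening for only the $O(T(\alg)/D)$ pulses of level exceeding $\log_2 D$ -- so by \Cref{lemma:summation} the synchronizer contributes $O(T(\alg)\log^7 n(\log T(\alg)+\log n))$ (the $\log n$ coming from the initial convergecast that registers the originators), while the per-iteration termination convergecasts add only $O(2^t\polylog n)$, absorbed into the $D\log^{11}n$ term; thus the total time is $O(D\log^{11}n + T(\alg)\log^7 n(\log T(\alg)+\log n))$, and since by the end of iteration $\lceil\log_2 T(\alg)\rceil$ only $O(T(\alg)\log^{11}n)$ time has elapsed, all outputs are produced by then. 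Messages are counted the same way: $O(m\log^{10}n)$ for initialization, $O(m\log^8 n)$ per cover construction over $O(\log n)$ iterations, $O(M(\alg)\log^4 n\log T(\alg))$ for registration/deregistration plus $O(M(\alg))$ for the execution tree, and $O(m)$ per iteration for the termination convergecasts, totaling $O(m\log^{10}n + M(\alg)\log^5 n)$.

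The step I expect to be the main obstacle is the termination analysis in the regime $T(\alg)>D$: we cannot afford to build covers of radius $\Theta(T(\alg))$ -- that alone would cost $\Theta(T(\alg)\log^{11}n)$, which for $T(\alg)=\poly(n)$ exceeds the permitted $O(D\log^{11}n+T(\alg)\log^8 n)$ -- so once a radius-$\Theta(D)$ universal cover appears we must carry out all remaining pulse operations with it and argue that the blow-up is confined to the $O(T(\alg)/D)$ high-level pulses, each costing $O(D\polylog n)$, for a still-affordable $O(T(\alg)\polylog n)$ total. The symmetric case $D\gg T(\alg)$ -- where the synchronizer has nothing left to do but the iterations must still continue until a universal cluster certifies global termination, so that the cover constructions alone carry the $O(D\log^{11}n)$ charge while the ``all outputs by $O(T(\alg)\log^{11}n)$'' bound is read off from iteration $\lceil\log_2 T(\alg)\rceil$ -- also needs to be stated carefully. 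Everything else is a routine re-accounting of lemmas already established.
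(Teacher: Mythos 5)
Your proposal is correct and follows essentially the same route as the paper's proof: initialize a constant-radius cover with an $\alpha$-synchronizer, interleave the $2^t$-pulse simulation of \Cref{subsec:extensionsynchwithcovers} with the recursive cover constructions of \Cref{theorem:sparseasynch}, stop building covers once a universal cluster appears (Approach~1 of \Cref{theorem:SSBFSfinal}) and substitute that radius-$O(D)$ cover for all higher-radius covers, which is exactly what yields the paper's $\min(D,2^{\ell})$ accounting and the stated time and message bounds. Your explicit halting rule (convergecasting the OR of ``I would send a message at pulse $2^t{+}1$'' in the universal cluster) is a reasonable refinement of a detail the paper leaves implicit, and does not change the argument or the bounds.
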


\begin{proof}

The algorithm will be very similar to the algorithm from the \cref{theorem:SSBFSfinal}. All nodes will start by constructing the initial layered sparse $2^{O(1)}$-cover, then proceed to the iterations. One iteration will consist of simulating the next $2^t$ pulses of $\alg$ and constructing the next sparse $2^{t + O(1)}$-cover.

The question, once again, is how the nodes learn that it's time to stop constructing the sparse covers. We will do this with the \textbf{Approach 1} from the proof of \cref{theorem:SSBFSfinal}. By the $\lceil \log_2{D} \rceil$-th iteration, some constructed cluster would contain all the nodes. It would then notify all the nodes that there is a cluster that contains all the nodes, and they will not construct the next sparse cover: they will use the sparse cover containing this cluster that includes all the nodes instead. 

The isolated time complexity of the initialization remains $O(\log^{10}{n})$ and the isolated time complexity of constructing $2^t$-sparse cover remains $O(2^t \cdot \log^{11}{n})$. Furthermore, after the first $\lceil \log_2{D} \rceil$ iterations, no sparse covers will be constructed. So, the total isolated runtime spent on the sparse cover construction is $O(D\log^{11}{n})$. According to the \cref{theorem:extensioncomplexity}, the actual synchronization of $\alg$ in isolation would take only $O(T(\alg) \cdot \log^7{n} \cdot (\log{T(\alg)} + \log{n}))$ time, so the total runtime is $O(D\log^{11}{n} + T(\alg) \cdot \log^7{n} \cdot (\log{T(\alg)} + \log{n}))$. 

All the messages of $\alg$ will be delivered after the first $\lceil \log_2{T(\alg)} \rceil$ iterations are over; this will take $O(T(\alg)\cdot \log^{11}{n})$.

Moving on to the message complexity, sparse cover construction takes only $O(m\log^{10}{n})$ messages. The registration/deregistration process for the level $\ell$ take $O(\frac{M(\alg)}{2^{\ell}} \cdot min(D, 2^{\ell})\log^3{n} \cdot \log{n})$ time. If $T(\alg) < D$, this sums up to $O(M(\alg)\cdot \log^4{n} \cdot \log{T(\alg)})$, otherwise, this sum telescopes to $O(M(\alg)\log^4{n} \log{D})$. In both cases, we will bound it by $O(M(\alg)\log^5{n})$. So, the message complexity is $O(m\log^{10}{n} + M(\alg)\log^5{n})$.

\end{proof} 

\begin{theorem}
    \label[theorem]{theorem:extensiontimeknown}
    Consider an algorithm $\alg$ with worst-case time complexity $T(\alg)$ and message complexity $M(\alg)$. Suppose that all the nodes \textbf{know} the bound on $T(\alg(M))$. There is a synchronizer for $\alg$ such that 

    \begin{itemize}
        \item All the outputs are produced and the entire algorithm terminates in $O(min(D, T(\alg))\log^{11}{n} + T(\alg) \cdot \log^7{n} \cdot (\log{T(\alg)} + \log{n}))$ time
        \item It takes $O(m\log^{10}{n} + M(\alg)\log^5{n})$ messages.
    \end{itemize}
\end{theorem}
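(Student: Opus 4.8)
The plan is to reuse the synchronizer of \Cref{subsec:extensionsynchwithcovers}, which already synchronizes $\alg$ in $O(T(\alg)\log^7{n}\,(\log{T(\alg)}+\log{n}))$ time and $O(M(\alg)\log^4{n}\log{T(\alg)})$ messages once a layered sparse $O(T(\alg))$-cover is available (\Cref{theorem:extensioncomplexity}), and to build that cover on the fly exactly as in \Cref{theorem:SSBFSfinal}/\Cref{theorem:extensiontimeunknown}: initialize a layered sparse $2^{O(1)}$-cover with a plain $\alpha$-synchronizer, and then run iterations $t=1,2,\dots$ in which iteration $t$ (i) simulates pulses $2^{t-1}+1,\dots,2^{t}$ of $\alg$ via the virtual-node scheme of \Cref{subsec:extensionsynchwithcovers} using the currently available layered sparse $2^{t+6}$-cover, and (ii) constructs the next sparse $2^{t+7}$-cover via \Cref{theorem:sparseasynch}. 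The new ingredient is that the nodes now know an upper bound $\bar T$ on $T(\alg)$ (for concreteness assume $\bar T=\Theta(T(\alg))$; otherwise the bounds below hold with $\bar T$ in place of $T(\alg)$), so they simply stop after iteration $t_{\max}:=\lceil\log_2 \bar T\rceil$. Since $t_{\max}$ is a globally known number, every node knows which iteration is the last, so the moment iteration $t_{\max}$ finishes, all $T(\alg)\le\bar T$ pulses have been simulated, every message and output of $\alg$ has been produced, and no node issues any further communication. This is precisely why the statement can assert that the \emph{entire} algorithm terminates by the claimed time, rather than merely that the outputs appear early (as in \Cref{theorem:extensiontimeunknown}).

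To keep the cover-construction cost at $O(\min(D,T(\alg))\log^{11}{n})$ rather than $O(T(\alg)\log^{11}{n})$ when $T(\alg)\gg D$, I would fold in \textbf{Approach 1} from the proof of \Cref{theorem:SSBFSfinal}: in iteration $t$ each cluster of the freshly built sparse $2^{t}$-cover runs an extra convergecast/broadcast testing whether it contains all of $V$ (each node checks that every neighbour lies in the same cluster and propagates an AND toward the cluster root, which broadcasts the verdict back). By the (strong) covering guarantee of sparse covers --- for each $v$ some cluster contains the whole $2^{t}$-ball of $v$, hence all of $V$ as soon as $2^{t}\ge D$ --- some cluster becomes universal by iteration $\lceil\log_2 D\rceil$, and every node learns it. From then on no further sparse cover is built: the single universal cluster, with the $O(D\log^3{n})$-depth Steiner tree it already carries, serves as the sparse $2^{t'}$-cover for every larger radius $2^{t'}$. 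This is legitimate because every pair of physical nodes (hence of virtual nodes, whose pairwise distance equals the physical one) is within distance $D\le 2^{t'}$, and the depth bound $O(D\log^3{n})=O(2^{t'}\log^3{n})$ and ``each node in $O(\log n)$ clusters'' (here just one) both hold. Hence cover construction is finished by iteration $\min(\lceil\log_2 D\rceil,t_{\max})$, i.e.\ at radius $O(\min(D,T(\alg)))$.

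For the complexities I would invoke \Cref{lemma:sequential}, treating the initialization, each cover construction, each universality convergecast, and the stages $\stage_p$ of \Cref{theorem:extensioncomplexity} as sequential stages and summing their isolated time complexities. Cover construction contributes $\sum_{t\le\min(\lceil\log_2 D\rceil,t_{\max})}O(2^{t}\log^{11}{n})=O(\min(D,T(\alg))\log^{11}{n})$, the universality convergecasts add $O(2^{t}\log^{7}{n})$ per iteration which telescopes into the same term, the stages $\stage_p$ for $p=1,\dots,T(\alg)$ contribute $O(T(\alg)\log^{7}{n}\,(\log{T(\alg)}+\log{n}))$ in total by the argument of \Cref{theorem:extensioncomplexity} (using \Cref{lemma:summation}), and the initialization adds $O(\log^{10}{n})$ --- giving the stated time bound. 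For messages, cover construction costs $O(m\log^{10}{n})$ (\Cref{theorem:sparseasynch}), the universality convergecasts cost $O(m\log^{4}{n})$ over $O(\log n)$ iterations, i.e.\ $O(m\log^{5}{n})$, and the virtual-node synchronizer costs $O(M(\alg)\log^{4}{n}\log{T(\alg)})=O(M(\alg)\log^{5}{n})$; the total is $O(m\log^{10}{n}+M(\alg)\log^{5}{n})$.

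The step I expect to be the main obstacle is verifying that the ``universal cluster'' shortcut really supplies a layered sparse cover of sufficient quality for the synchronizer of \Cref{subsec:extensionsynchwithcovers}: one must recheck that replacing the sparse $2^{t'}$-cover by a single cluster does not break the hypotheses of \Cref{lemma:waiting} and \Cref{lemma:executioniscorrect} (it does not --- any two nodes lie in the one cluster, and the distance/diameter bounds degrade only up to absorbed factors), and that reusing one Steiner tree across the $O(\log{T(\alg)})$ radii above $D$ does not inflate congestion beyond what the stage/round-robin scheduling behind \Cref{corollary:parallel} and \Cref{lemma:sequential} already absorbs (alternatively one applies \Cref{lemma:duplicated} to give each such radius its own virtual copy of the tree, at an absorbed $O(\log{T(\alg)})$ time factor). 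A secondary caveat is that ``the nodes know the bound on $T(\alg)$'' must be read as a polynomially bounded estimate $\bar T$, so that $\log{\bar T}=O(\log n)$ and all bounds stated in terms of $T(\alg)$ remain meaningful.
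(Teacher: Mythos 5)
Your proposal is correct and follows essentially the same route as the paper: its proof simply keeps the algorithm of \Cref{theorem:extensiontimeunknown} (iterative cover construction with the universal-cluster early stop of Approach~1, plus the synchronizer of \Cref{subsec:extensionsynchwithcovers}) and caps the cover-building at the first $\lceil\log_2 T(\mathcal{A})\rceil$ iterations, which is exactly your plan. Your closing caveat about $\log T(\mathcal{A})$ is not even needed for the message bound, since once registrations above radius $D$ use the universal cluster the per-level cost is $O\bigl(\tfrac{M(\mathcal{A})}{2^{\ell}}\cdot \min(D,2^{\ell})\log^4 n\bigr)$ and the sum telescopes to $O(M(\mathcal{A})\log^5 n)$, as in the paper's analysis of \Cref{theorem:extensiontimeunknown}.
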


\begin{proof}
    The algorithm remains the same as in \cref{theorem:extensiontimeunknown} with a single modification: all nodes will simply run only the first $\lceil \log_2{T(\alg)} \rceil$ iterations of constructing the sparse cover (but they may still skip going to the next iterations if they detect some cluster that contains all the nodes).
\end{proof}

\section{Applications}
\label{app:applications}

We already described in \Cref{sec:BFS} how to transform the synchronous BFS algorithm to the asynchnous environment so that it runs in time complexity $\tilde{O}(D)$ and message complexity $\tilde{O}(m)$. This already proves \Cref{crl:BFS}. We remark that $\tilde{O}(D)$ is also the time until all nodes terminate. 

We next provide proofs for our other two applications, deterministic asynchronous leader election and minimum spanning tree with near-optimal time and message complexities.

\begin{proof}[Proof of \Cref{crl:LE}] We first describe a relatively simple deterministic leader election algorithm that runs in $\tilde{O}(D)$ rounds and uses  $\tilde{O}(m)$ messages, with the following guarantees: in the end of the algorithm, there is a unique leader elected, each node knows the identifier of the elected leader, and all nodes terminate. 

Initially, all nodes are candidates. The algorithm consists of epochs. Epoch $i$ runs for $\tilde{O}(2^{i})$ rounds and using $\tilde{O}(m)$ messages. During epoch $i$, we first build a sparse $2^{i}$ cover on all nodes, using the synchronous algorithm of \cite{Rozhon2020Polylogarithmic-timeDerandomization}. This algorithm runs in $\tilde{O}(2^{i})$ rounds and generates clusters of depth $\tilde{O}(2^{i})$, with the following properties: (a) each edge is included in $\poly(\log n)$ clusters, with a tree of depth $\tilde{O}(2^{i})$, and (b) for each node $v$, its entire $2^{i}$-hop neighborhood ---the set of all nodes within distance $2^{i}$ of $v$---is included in one cluster. 
Then, in each cluster, using the cluster tree, we perform a convergecast of the minimum candidate identifier, and we then broadcast the value to all nodes of the cluster. This convergecast and broadcast takes $\tilde{O}(2^{i})$ rounds. Any candidate node $v$ whose identifier is not the minimum in one of the clusters that includes $v$ ceases to be a candidate. We then would like to move to the next epoch. But there is a termination detection before that, as we describe next.

We would like to detect termination. Intuitively, we want to terminate as soon as the cluster of a candidate contains all nodes of the graph. To detect this, we do as follows. Each node $v$ sends to its neighbors $u$ all the cluster ids that include $v$. This is doable in $\poly(\log n)$ time and with $\tilde{O}(m)$ messages. Then, we perform a convergecast and broadcast along the cluster tree whether any node $v$ in the cluster $C$ has a neighbor $u$ that is not in cluster $C$. If there is no such node, then the cluster contains all nodes, and the sole remaining candidate --- which is actually the node that had the globally minimum id --- is the leader and each node of the cluster knows the id of the leader.

Notice that in the course of the algorithm, the node with minimum id always remains a candidate. As soon as we reach epoch $i=\lceil{\log_{2} D\rceil}$, there will be one cluster that contains all the nodes in it, and that cluster will detect termination, hence the entire algorithm will terminate. 

The time complexity of the algorithm is $\sum_{i=1}^{\lceil{\log_{2} D\rceil}} \tilde{O}(2^{i}) = \tilde{O}(D)$. The message complexity in each epoch is $\tilde{O}(m)$ and since there are $\lceil{\log_{2} D\rceil}$ epochs until global termination, the overall message complexity is also $\tilde{O}(m)$. 

Finally, this deterministic synchronous leader election algorithm with time complexity $\tilde{O}(D)$ and message complexity $\tilde{O}(m)$ can be directly fed into our synchronizer, hence generating a deterministic asynchronous leader election algorithm with time complexity (and indeeed time to termination) $\tilde{O}(D)$ 
 and message complexity $\tilde{O}(m)$.
\end{proof}

\begin{proof}[Proof of \Cref{crl:MST}] The work of Elkin~\cite{Elkin2020AComplexities} describes a deterministic synchronous MST algorithm that runs in $\tilde{O}(D+\sqrt{n})$ rounds and uses $\tilde{O}(m)$ messages. By invoking our synchronizer on this, we get an asynchronous deterministic algorithm that runs and terminates in $\tilde{O}(D+\sqrt{n})$ time and uses $\tilde{O}(m)$ messages.
\end{proof}

\newpage
\bibliographystyle{alpha}
\bibliography{references, ref3}

\appendix

\section{A review of Awerbuch's $\alpha$, $\beta$, and $\gamma$ synchronizers}
\label{app:alpha-beta-gamma}

As mentioned before, the main challenge in synchronization is the following: consider the time that node $v$ has generated pulse $p$ and should next generate pulse $p$ after it has received every message of pulse $p$ sent to it by each neighbor $u$. The difficulty is that node $v$ does not know which neighbors sent messages to it and which did not, and simply waiting for a certain amount of time cannot resolve the issue; some messages might take an unpredictably long time. 
Generally, we can add an \textit{acknowledgment} to the messages: whenever a node $w$ sends a message to a neighbor $w'$, node $w'$ should send an acknowledgment of that message back to $w'$. This increases message complexity by only a $2$ factor, and waiting for the acknowledgment can increase the time by only a $2$ factor. Let us say node $v$ is \textit{safe} for pulse $p$ when it has received an acknowledgment for each of the messages it sent after its pulse $p$ (and before the next pulses). For a node $v$ to generate its pulse $p+1$, node $v$ needs to ensure that $v$ has received any message of pulse $p$ sent by any neighbor $u$. One way of achieving the latter is to gather the information that ensures that all neighbors of $v$ are safe for pulse $p$.

Awerbuch~\cite{Awerbuch1985ComplexitySynchronization} presented three synchronizer algorithms that follow this outline; he called them $\alpha$, $\beta$, and $\gamma$. These exhibit different tradeoffs between time and message complexity overhead. 

Synchronizer $\alpha$ is the trivial scheme where each node $u$, once $u$ is safe for pulse $p$, sends a message to each neighbor $v$ informing them. Then, node $v$ can generate its pulse $p+1$ once it has received this pulse-$p$ safety information from each of its neighbors. Then, node $v$ can send the messages of round $p+1$ of the synchronous algorithm. This $\alpha$ synchronizer has an ideal $O(1)$ time complexity overhead. However, for each pulse, the scheme communicates safety information over all edges of the graph. Hence it increases the message complexity of algorithm $\mathcal{A}$ to $O(M(\mathcal{A}) + T(\mathcal{A})\cdot m)$. This is a terrible message complexity, as it is asymptotically equal to the highest message complexity possible for the given time complexity.

Synchronizer $\beta$ first uses an initialization phase to build a low-depth tree, e.g., to elect a root and compute a BFS tree rooted at the leader. Then, for each pulse, $\beta$ performs a convergecast and broadcast along the tree so that each node $v$ learns that all other nodes of the graph are safe for pulse $p$. Only after that does the node $v$ generate pulse $p+1$. Hence, this scheme has time complexity overhead $O(D)$, where $D$ denotes the network diameter, and it increases the message complexity to $O(M(\mathcal{A}) + T(\mathcal{A})\cdot n)$. There is also a high time and message complexity for the initialization---computing a leader and building a BFS rooted in it---but we will ignore that here.

Synchronizer $\gamma$ comes close to achieving the best of $\alpha$ and $\beta$. Picking one particularly useful parameterization of the $\gamma$ scheme, it achieves a small time complexity overhead of $O(\log n)$ while increasing the message complexity to only $O(M(\mathcal{A}) + T(\mathcal{A})\cdot n)$. The rough idea is to compute a low-diameter clustering of the vertices, where each cluster has $O(\log n)$-diameter, and keeping only $O(n)$ edges between neighboring clusters such that every two clusters that had adjacent nodes in the graph have at least one edge connecting them. Then, roughly speaking, one applies the $\beta$ scheme inside the clusters and uses the $\alpha$ scheme afterward in between the clusters. This way, each node $v$ will learn when all its neighbors are safe for a pulse $p$ within $O(\log n)$ extra time, and the safety messages of each pulse travel through only $O(n)$ edges, hence implying that the message complexity is increased to $O(M(\mathcal{A}) + T(\mathcal{A})\cdot n)$.

Awerbuch provides only a sequential algorithm for computing such a clustering. However, a follow-up work of Linial and Saks~\cite{Linial1993LowDecompositions} presented a $O(\log^2 n)$-time and $O(m\log^2 n)$ message synchronous algorithm for this clustering. Given the low time complexity, this algorithm itself can be made asynchronous using the $\alpha$ scheme. The combination provides a randomized variant of the $\gamma$ synchronizer that has initialization time complexity $O(\log^2 n)$ and initialization message complexity $O(m\log^2 n)$. 

We comment that this can be made deterministic using recent progress on deterministic synchronous network decomposition, with only $\poly(\log n)$ increase in the parameters. In particular, by plugging in a deterministic network decomposition of Rozhon and Ghaffari~\cite{Rozhon2020Polylogarithmic-timeDerandomization} instead of the algorithm of Linial and Saks~\cite{Linial1993LowDecompositions}, one can obtain a deterministic variant of the $\gamma$ synchronizer, which would turn any synchronous algorithm $A$ with time complexity $T(A)$ and message complexities $M(A)$ into an asynchronous algorithm $A'$ with time complexity $T(A) \cdot \poly(\log n)$ and message complexity $O(M(A) + (m+ T(A) \cdot n)\cdot \polylog n)$. This is including the initialization complexities. Improved variants of the decomposition improve the logarithmic factors~\cite{Ghaffari2021ImprovedDecomposition}.

\paragraph{Optimality of Awerbuch's bounds for generating all pulses at all nodes.} Awerbuch also showed that his bounds are nearly the best possible for the approach of having each node generate each pulse $1$, $2$, $3$, \dots, one per round. In particular, he showed that even for a $2$ round synchronous algorithm, any asynchronous variant that runs in at most $O(k)$ time must have message complexity at least $\Omega(n^{1+1/k})$. 

\paragraph{These synchronizers do not have a small time and message complexity overhead.} Even ignoring the initialization complexities, the synchronizers mentioned above do not necessarily provide good time complexity and message complexity overheads. In particular, consider an algorithm $A$ such that $M(A) \ll T(A)\cdot n$. For such an algorithm, the message complexity overhead $(T(A)\cdot n)/M(A)$ can be quite high.\footnote{We note that many algorithms for important graph problems have this property. Consider, for example, a simple BFS problem, where the synchronous algorithm has time complexity $O(D)$ and message complexity $O(m)$; note that $m$ can be much smaller than $Dn$. See also the message and time-efficient synchronous MST algorithms of \cite{Pandurangan2017ATrees, Elkin2020AComplexities, Haeupler2018Round-andAlgorithms}.} The problem is rooted in trying to generate pulses of all rounds at each node. The synchronous algorithm might have each node send only in few rounds, and it would suffice to generate only those pulses at this node.
\section{Model subtleties}\label{subsec:modelSubtleties}
We presented the synchronous model of distributed message passing in \Cref{subsec:model}. However, there are some subtleties not clarified in this definition; we discuss those here.

\paragraph{Can algorithms reference the time?} One subtlety is whether the algorithm can explicitly refer to the round numbers or not. This can lead to a critical issue for synchronizers. We discuss two natural interpretations of the model. We are not aware of any prior work that draws this distinction. The first interpretation provides a stronger model for algorithms. However, the (randomized) synchronizers of ~\cite{Awerbuch1992AdaptingNetworks, Awerbuch1990NetworkOverhead} do not work for the first interpretation; they would not provide $poly(\log n)$ time and message complexity overheads. But we argue that they work for the second interpretation, and our deterministic synchronizer is also presented for the second interpretation. Fortunately, many graph algorithms of interest adhere to the second interpretation or can be paraphrased to adhere to it without more than a $\poly(\log n)$ increase in time and message complexities.

\paragraph{Clock-based synchronous algorithms.} In the first interpretation of the model, the algorithm can have explicit reference to the round numbers. That is, each node can have explicit access to the clock that counts the rounds, and it can base its action on that clock. For instance, it can do something like: ``in round number \dots, send message \dots"? Per se, this is not forbidden in the model. Indeed many of the standard synchronous algorithms are written in such a way. For instance, the algorithm is made of phases, each phase consisting of a prespecified number of rounds. Each node starts the second phase only after the rounds allotted for the first phase have passed. We call such synchronous algorithms \textit{clock-based}. 

\paragraph{Event-driven synchronous algorithms.} The second interpretation, which is also natural but more restrictive, is that the algorithm cannot explicitly refer to round numbers, i.e., the nodes cannot explicitly access the clock value. The only guarantee is that all messages sent in the network experience the same delay. For instance, this implies that, if there are two paths connecting $v$ to $u$, and node $v$ forwards a message along these two paths, the message will arrive earlier along the shorter path. Algorithms without explicit reference to the clock should base their actions only on new inputs that they receive: ``upon receiving message(s) ..., perform some computation and send message(s) ....'' Note that the response may involve sending several messages, even several to the same neighbor. Hence, the response might take more than a single time unit. However, an algorithm in this interpretation cannot do something like ``send one message, wait for $x$ rounds, and then send another message.'' Fortunately, many graph algorithms of interest can be paraphrased to adhere to this event-driven interpretation, without more than a $\poly(\log n)$ increase in time and message complexities.\footnote{We note that there is also a general transformation. One can transform any clock-based synchronous algorithm to an event-driven one with an $O(1)$ factor time overhead: simply have each node generate a clock for itself by sending a message to one of its neighbors and getting a response. However, this transformation has an additive increase of $nT$ in message complexity, where $T$ denotes the algorithm's round complexity, and this message complexity increase is often too high.}

As mentioned before, the synchronizers of \cite{Awerbuch1992AdaptingNetworks, Awerbuch1990NetworkOverhead} do not work for the clock-based interpretation. They are primarily written for BFS, and then an extension is claimed for general synchronous algorithms. The issue is in the extensions. Consider synchronous algorithms where each node might send several messages (e.g., in response to receiving one message). There are brief sections of explanations for the extension in \cite[Section 6]{Awerbuch1990NetworkOverhead} and \cite[Section 6]{Awerbuch1990NetworkOverhead}. These suggest that we transform the case where one node sends multiple messages to a chain of ``virtual messages" inside the same node, one leading to the next. This is necessary as a part of the ``execution tree" through which all communications get synchronized. In the clock-driven interpretation, if the node sends only two messages but $\Theta(n)$ rounds apart, this chain will include $\Theta(n)$ virtual messages for those two real messages. This can increase the algorithm's message complexity by up to an $\Theta(n)$ factor. Unfortunately, the synchronizer's message complexity depends linearly on the total number of messages, virtual or real, and thus it can have an overhead of $\Omega(n)$ for the message complexity of clock-based synchronous algorithms. Fortunately, the issue seems fixable for the event-driven interpretation of synchronous algorithms, and this is also the interpretation for which we describe our synchronizer. We provide the correctness proof only for our deterministic synchronizer, which subsumes their randomized synchronizer (up to $\log n$ factors).

\paragraph{A subtlety in message delays, in both synchronous and asynchronous models.} Another subtletly is with regard to the message delays. In writing event-driven synchronous algorithms, we need to ensure that no node injects more than one message in each edge per time unit. We can ensure that this way: each node $v$ that sends a message to a neighbor $u$ waits for an acknowledgment from $u$, before $v$ can send a new message to $u$. These acknowledgments increase time and message complexities by at most a $2$ factor. 

In a sense, this subtlety also appears in the definition of the asynchronous model. In defining the asynchronous model, as described in \Cref{subsec:model}, we said that each message injected into the network link takes one (normalized) time unit to arrive at the destination. The value of the time unit is not known to the nodes. But what if the node simultaneously injects $k$ many messages into the link? This has not been explicitly forbidden in the model description. The right interpretation, which takes congestion into account and is closest to the synchronous model, would be as follows: in this batch of simultaneously injected messages, the $i^{th}$ message arrives after $i$ time units. A cleaner resolution is to use the same acknowledgment scheme mentioned above: each node has to wait for an acknowledgment of each message it injected into the link, before injecting the next message. Notice that this way, at each moment in time, each node $v$ can have at most two active messages in its link toward neighbor $u$: one message of the algorithm sent by $v$ to $u$ and at most one acknowledgment for a message initially sent from $u$ to $v$. Node $v$ will not have two acknowledgments active in the link because it does not receive the next message from $u$ before $u$ has received the acknowledgment of the former message it sent to $v$. Thus, again, this acknowledgment mechanism has only a $2$ factor cost in time and message complexities. And it conveniently removes the issue of having to model delays experienced by messages injected into one network link simultaneously or in close succession. 

\paragraph{Time complexity: time to output or time to termination.} Finally, how do we measure time complexity precisely? As mentioned in \Cref{subsec:model}, throughout the paper, we define time complexity as the time from the start of the algorithm to the time that the last node generates its output. We use this for both synchronous algorithms and asynchronous algorithms, and this is indeed the definition for which the $\poly(\log n)$ time complexity overhead claim of \cite{Awerbuch1992AdaptingNetworks} holds (after other fixes). Unfortunately, this point was not made clear in \cite{Awerbuch1992AdaptingNetworks}. A stronger time complexity measure would be to bound the time until all nodes terminate and know that they have terminated. For this stronger definition, both the synchronizers of \cite{Awerbuch1992AdaptingNetworks} and ours will have a larger time complexity overhead. It remains an interesting open problem to develop synchronizers with $\poly(\log n)$ time and message complexity overheads for this stronger definition of time complexity. 

Fortunately, for most applications for concrete graph problems, there is a workaround: (1) The asynchronous algorithm generated by the synchronizers will terminate within $\tilde{O}(D)$ time. Hence, if the synchronous algorithm that is being made asynchronous has time complexity $\Omega(D)$, then the time complexity overhead even for the stronger definition is $\poly(\log n)$. Notice that this is the case for a wide range of applications, especially those for global graph problems, including the breadth first search, leader election, and minimum spanning tree computation. And for synchronous algorithms that run faster---e.g.,$\poly(\log n)$ round synchronous algorithms---the simple $\alpha$ synchronizer described in \Cref{app:alpha-beta-gamma} suffices. (2) If the synchronizer algorithm is given a parameter $\tau$ (i.e., if the value is provided to all nodes), such that the synchronous algorithm is promised to terminate within $O(\tau)$ rounds, then
the asynchronous algorithm will not only generate all the outputs within $\tilde{O}(\tau)$ time but also terminate in all nodes by time $\tilde{O}(\tau)$.
\section{Algorithm for  \Cref{theorem:main_alg_gk}}

\label{subapp:main_alg_gk}

\MAINALGGK*

For the practical needs of this paper, we will briefly recall their algorithm. In the next subsection, we will show that we can run this algorithm in an asynchronous environment with the tools we developed in the previous sections. For proof of the correctness of this synchronous algorithm, we refer to the original paper \cite{Rozhon2020Polylogarithmic-timeDerandomization}.

In the following lemma, we describe the process for constructing the clusters of one color of the network decomposition (e.g., the first color), in a way that it clusters at least half of the vertices.  Since after each application of this lemma only half of the vertices remain, by $\log n$ repetitions, we get a decomposition of all vertices, with $\log n$ colors.

\begin{lemma}\label[lemma]{lemma:main_alg_gk} Let $S\subseteq V$ denote the set of living vertices. There is a deterministic distributed algorithm that, in $O(k \log^9{n})$ communication rounds, finds a subset $S' \subseteq S$ of living vertices, where $|S'|\geq |S|/2$, such that the subgraph $G[S']$ induced by set $S'$ is partitioned into disjoint clusters, so that every two of these clusters are at a distance larger than $k$, and each cluster has weak-diameter $O(k \cdot \log^3 n)$ in graph $G$.

Moreover, for each such cluster $\mathcal{C}$, we have a Steiner tree $T_\mathcal{C}$ with radius $O(k\cdot \log^3 n)$ in $G$ where all nodes of $\mathcal{C}$ are exactly the terminal nodes of $T_\mathcal{C}$. Furthermore, each edge in $G$ is in $O(\log^3{n})$ of these Steiner trees.
\end{lemma}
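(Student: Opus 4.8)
The plan is to establish this by running the deterministic ball-growing network-decomposition scheme of Rozhon and Ghaffari on the subgraph induced by the living set $S$, while additionally carrying a Steiner tree along with each cluster. First I assign every vertex of $S$ a distinct identifier of $b=O(\log n)$ bits and start from the trivial clustering in which each vertex is its own cluster, labeled by its identifier, with Steiner tree equal to that single vertex. Throughout I maintain the invariants that (i) each surviving cluster is a connected subset of the current surviving set $S'$, (ii) each cluster carries a label in $\{0,1\}^b$ and a Steiner tree in $G$ whose terminal set is exactly the cluster and whose radius is controlled, and (iii) any two clusters whose labels differ on an already-processed bit lie at distance greater than $k$ in $G$.

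Then I process the $b$ bit positions in $b$ phases. In phase $j$, a cluster is \emph{blue} if the $j$-th bit of its label is $0$ and \emph{red} otherwise, and the goal of the phase is to make the value of bit $j$ non-conflicting, i.e., at the end of the phase any two clusters whose labels agree on bits $1,\dots,j$ are at distance $>k$. A phase runs for $R=O(\log n)$ steps. In one step, every blue cluster $C$ performs a convergecast/broadcast over its Steiner tree together with an $O(k)$-radius BFS in $G$ to detect red clusters that have a vertex within distance $O(k)$ of $C$; each such red cluster picks one proposing blue cluster, merges into it (the merged cluster keeps the blue label, and its Steiner tree becomes the union of the two trees plus a connecting path of length $O(k)$ found by the BFS), and recolors its vertices blue. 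A red cluster that at the end of the phase is still within distance $k$ of some blue cluster (which happens only to clusters not reached in time) has the thin layer of its vertices adjacent to blue deleted from $S'$, restoring invariant (iii) for bit $j$.

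The heart of the argument, and the step I expect to be the main obstacle, is the combinatorial analysis of Rozhon and Ghaffari showing \emph{simultaneously} that $R=O(\log n)$ steps per phase suffice, that the total number of vertices ever deleted over all $b$ phases is at most $|S|/2$ (hence $|S'|\ge |S|/2$), and that the edge-congestion of the maintained Steiner trees stays $O(\log^3 n)$. Roughly, in each step a blue cluster either strictly grows by absorbing an adjacent red cluster — and since cluster sizes are at most $n$, a blue cluster cannot keep absorbing for more than $O(\log n)$ steps before it has no blue-adjacent red cluster left — or it is ``sealed off'' by a small deletion, and a charging argument bounds the total deletions; the Steiner-tree bookkeeping charges each appended connecting path to the absorption event that created it. I would invoke this analysis from \cite{Rozhon2020Polylogarithmic-timeDerandomization} rather than reprove it, which is also why the description in \Cref{subapp:main_alg_gk} is kept at the level of detail needed only to make the asynchronous simulation in the next subsection well-defined.

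Finally I do the accounting. For the round complexity there are $b=O(\log n)$ phases, $R=O(\log n)$ steps each, and one step costs a broadcast/convergecast over Steiner trees of radius $O(k\log^3 n)$ in $G$ with $O(\log^3 n)$ trees sharing an edge, i.e. $O(k\log^6 n)$ rounds after accounting for congestion; multiplying in the two logarithmic factors and being slightly lossy yields the stated $O(k\log^9 n)$. The weak-diameter bound $O(k\log^3 n)$ holds because each surviving cluster is spanned by a Steiner tree whose radius in $G$ grew by $O(k)$ per step over $O(\log^2 n)$ steps; the pairwise distance $>k$ between final clusters follows from invariant (iii) after all $b$ bits are processed, since two distinct surviving clusters must differ on some bit (labels that are never separated end up merged into a single cluster) and that bit was handled in some phase; and the $O(\log^3 n)$ edge-congestion of the Steiner trees is exactly the invariant carried throughout.
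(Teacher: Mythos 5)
Your overall route is the same as the paper's: describe the Rozhon--Ghaffari ball-growing construction, maintain a Steiner tree per cluster, and defer the core combinatorial analysis to \cite{Rozhon2020Polylogarithmic-timeDerandomization} (the paper indeed only states the algorithm and its invariants and cites that work for correctness). The difficulty is that the algorithm you describe is not the one whose analysis you propose to invoke, and the deviations hit exactly the quantitative content of \cref{lemma:main_alg_gk}. In the actual construction, in each step the \emph{red nodes} reached by the depth-$k$ BFS from a blue cluster $A$ propose to join it, and $A$ applies a threshold: if the number of proposers is at most $|A|/(2b)$, it rejects them all, they die, and $A$ stops for the rest of the phase; otherwise it accepts them all. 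This threshold is the engine of the whole lemma: acceptance forces growth by a factor $1+1/(2b)$, so every blue cluster stops within $R=10b\log n=O(\log^2 n)$ steps (not $O(\log n)$ --- absorbing an adjacent red cluster does not double anything, so your ``cannot keep absorbing for more than $O(\log n)$ steps'' is unjustified, all the more so under your whole-cluster merging, where a step might add a single vertex); and rejection kills at most $|A|/(2b)$ nodes, which, charged to the nodes of $A$ (each cluster stops at most once per phase), gives at most $|S|/(2b)$ deletions per phase and hence $|S'|\ge |S|/2$ after $b$ phases.

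Your substitutes lose both guarantees. Merging entire red clusters removes the geometric growth, and your end-of-phase rule --- deleting a ``thin layer'' of red vertices adjacent to blue --- neither restores the required separation (to get distance $>k$ between differently labeled surviving clusters one must eliminate \emph{all} living red nodes within distance $k$ of a stopped blue cluster, which is precisely what the reject-and-kill rule achieves, and which a thin layer does not) nor comes with any bound relating the number of deleted vertices to $|S|$, so $|S'|\ge|S|/2$ is unsupported; consequently, invoking the analysis of \cite{Rozhon2020Polylogarithmic-timeDerandomization} is not available for the procedure as you wrote it. A secondary point to fix: the paper's accounting is $O(\log n)$ phases, $R=O(\log^2 n)$ steps per phase, and $O(k\log^6 n)$ rounds per step, which is where the $O(k\log^9 n)$ round bound, the $O(k\log^3 n)$ Steiner radius (growth $O(k)$ per step over $O(\log^3 n)$ total steps), and the $O(\log^3 n)$ per-edge tree congestion (at most one new tree per edge per step) all come from; your $R=O(\log n)$ makes these counts internally inconsistent with your own later ``over $O(\log^2 n)$ steps'' remark.
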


We obtain \Cref{theorem:main_alg_gk} by $c=\log n$ iterations of applying \Cref{lemma:main_alg_gk}, starting from $S=V$. For each iteration $j\in [1, \log n]$, the set $S'$ are exactly nodes of color $j$ in the network decomposition, and we then continue to the next iteration by setting $S\gets S\setminus S'$. 

In the proof of the lemma, the following observation is useful. Once again, its proof is omitted.

\paragraph{Algorithm for \cref{lemma:main_alg_gk}} The construction has $b=O(\log n)$ phases, corresponding to the number of bits in the identifiers. Initially, we think of all nodes of $S$ as \textbf{living}. During this construction, some living nodes \textbf{die}. We use $S'_i$ to denote the set of living vertices at the beginning of phase $i\in [0, b-1]$. Slightly abusing the notation, we let $S'_b$ denote the set of living vertices at the end of phase $b-1$ and define $S'$ to be the final set of living nodes, i.e., $S' := S'_{b}$. 

Moreover, we label each living node $v$ with a $b$-bit string $\ell(v)$, and we use these labels to define the clusters. At the beginning of the first phase, $\ell(v)$ is simply the unique identifier of node $v$. This label can change over time. For each $b$-bit label $L\in \{0,1\}^b$, we define the corresponding cluster $S'_i(L)\subseteq S'_i$ in phase $i$ to be the set of all living vertices $v\in S'_i$ such that $\ell(v)=L$. 
We will maintain one Steiner tree $T_L$ for each cluster $S'_i(L)$ where all nodes $S'_i(L)$ are the terminal nodes of $T_L$. 
Initially, each cluster consists of only one vertex and this is also the only (terminal) node of its respective Steiner tree. 

\medskip
\paragraph{Construction invariants.} The construction is such that, for each phase $i\in [0, b-1]$, we maintain the following invariants: 
\begin{enumerate}
    \item[(I')] For each $i$-bit string $Y$, the set $S'_i(Y)\subseteq S'_i$ of all living nodes whose label ends in suffix $Y$ has no other living nodes $S'_i\setminus S'_i(Y)$ in its $k$-hop neighbourhood. 
    
    In other words, the set $S'_i(Y)$ is a union of some connected components of the subgraph $G[S'_i]$ induced by living nodes $S'_i$ and in the $k$-hop neighbourhood in $G$ around $S'_i(Y)$ all nodes are either dead or they do not belong to the set $S$ (they were colored by previous application of the algorithm). 
    \item[(II')] For each label $L$ and the corresponding cluster $S'_i(L)$, the related Steiner tree $T_L$ has radius at most $i \cdot k \cdot R$, where $R=O(\log^2 n)$. 
    \item[(III)] We have $|S'_{i+1}|\geq |S'_i|(1-1/2b)$.
\end{enumerate}

These invariants, together with the observation that each edge is used in $O(\log n)$ Steiner trees, prove \Cref{lemma:main_alg_gk}.

In particular, from the first invariant, we conclude that at the end of $b$ phases, different clusters are at a distance at least $k+1$ from each other. 
From the second invariant, we conclude that each cluster has a Steiner tree with radius $bR=O(k\log^3{n})$. 
Finally, from the third invariant, we conclude that for the final set of living nodes $S'= S'_{b}$, we have $|S'| \geq (1-1/2b)^{b} |S| \geq |S|/2$.

\medskip
\paragraph{Outline of one phase of construction.} We now outline the construction of one phase and describe its goal. Let us think about some fixed phase $i$. We focus on one specific $i$-bit suffix $Y$ and the respective set $S'_i(Y)$. 
Let us categorize the nodes in $S'_i(Y)$ into two groups of \textbf{blue} and \textbf{red}, based on whether the $(i+1)^{th}$ least significant bit of their label is $0$ or $1$. Hence, all blue nodes have labels of the form $(*\ldots*0Y)$, and all red nodes have labels of the form $(*\ldots*1Y)$, where $*$ can be an arbitrary bit. During this phase, we make some small number of the red vertices die, and we change the labels of some of the other red vertices to blue labels (and then the node is also colored blue). All blue nodes remain living and keep their label. The eventual goal is that, at the end of the phase, among the living nodes, there is no blue node $b$ and red node $r$ with $dist(b, r) \le k$. This leads to invariant (I) for the next phase. The construction ensures that we kill at most $|{S'}_i(Y)|/{2b}$ red vertices of set $S'_i(Y)$, during this phase. We next describe this construction.

\medskip
\paragraph{Steps of one phase.}
Each phase consists of $R = 10b\log n = O(\log^2 n)$ steps, each of which will be implemented in $O(k\cdot \log^6 n))$ rounds. Hence, the overall round complexity of one phase
is $O(k\log^8{n})$ and over all the $O(\log n)$ phases, the round complexity of the whole construction of \Cref{lemma:main_alg_gk} is $O(k\cdot \log^9 n)$ as advertised in its statement. 
Each step of the phase works as follows: all blue nodes start a BFS from them up to distance $k$; a node that is reached in the BFS is added to the cluster of the node from which it received the first ``join'' proposal. Next, all red nodes that were added to the cluster will try to join it, to adopt its label. 

For each blue cluster $A$, we have two possibilities: 
\begin{enumerate}
\item[(1)] If the number of adjacent red nodes that requested to join $A$ is less than or equal to $|A|/2b$, then $A$ does not accept any of them and all these requesting red nodes die (because of their request being denied by $A$). In that case, cluster $A$ \textbf{stops} for this whole phase and does not participate in any of the remaining steps of this phase. 
\item[(2)] Otherwise --- i.e., if the number of adjacent red nodes that requested to join $A$ is strictly greater than $|A|/2b$ --- then $A$ accepts all these requests and each of these red nodes change their label to the blue label that is common among all nodes of $A$. In this case, we also grow the Steiner tree of cluster $A$ by one hop to include all these newly joined nodes.
\end{enumerate}

After the breadth first search algorithm finishes, roots of all Steiner trees collect the number of proposing red nodes and each root decides to either accept all proposing red vertices and recolor them to blue, or it makes them die and stops growing. 
The Steiner trees, however, stay the same even if some of its vertices die, the red nodes that died are just labeled as nonterminals. This finishes the description of one step of current phase. 

As each edge is in $O(\log^3{n})$ cluster trees, and diameters of the clusters are $O(k\log^3{n})$, one step can be implemented in $O(k \log^6{n})$ iterations. As there are $O(\log^2{n})$ steps per phase and $O(\log{n})$ phases per color, we get the resulting runtime of $O(k \log^9{n})$.

\end{document}